\journal{Information and Computation}
\newtheorem{theorem}{Theorem}
\newtheorem{lemma}[theorem]{Lemma}
\newtheorem{proposition}[theorem]{Proposition}
\newtheorem{corollary}[theorem]{Corollary}
\newtheorem{claim}{Claim}
\newtheorem{example}{Example}
\newproof{proof}{Proof}
\newcommand*{\DEBUG}{}%
\newcommand{\fixme}[1]{{\textcolor{red}{\bf{\textsf{FIXME: #1}}}}}
\newcommand{\bug}[1]{{\textcolor{blue}{\bf{\textsf{BUG: #1}}}}}
\newcommand{\idea}[1]{{\textcolor{blue}{\bf{\textsf{IDEA: #1}}}}}
\newcommand{\TODO}[1]{{\textcolor{red}{\bf{\textsf{
TODO: #1
}}}}}
\newcommand{\fixme}[1]{}
\newcommand{\bug}[1]{}
\newcommand{\TODO}[1]{}
\newcommand{\idea}[1]{}
\newclass{\COMSLIP}{COM\mbox{-}SLIP}
\newclass{\COMSLIPCUP}{COM\mbox{-}SLIP^{\cup}}
\newclass{\DCM}{DCM}
\newclass{\DPDA}{DPDA}
\newclass{\PDA}{PDA}
\newclass{\DCMNE}{DCM_{NE}}
\newclass{\TwoDCM}{2DCM}
\newclass{\NCM}{NCM}
\newclass{\DPCM}{DPCM}
\newclass{\NPCM}{NPCM}
\newclass{\NPDA}{NPDA}
\newclass{\TRE}{TRE}
\newclass{\NFA}{NFA}
\newclass{\DFA}{DFA}
\newclass{\NCA}{NCA}
\newclass{\DCA}{DCA}
\newclass{\DTM}{DTM}
\newcommand\abs[1]{\left|#1\right|}
\newcommand\union{\cup}
\newcommand\natnum{\mathbb{N}}
\DeclareMathOperator{\comm}{comm}
\newcommand{\shuffle}{\hspace{1mm}{\mathbin{\mathchoice
{\rule{.3pt}{1ex}\rule{.3em}{.3pt}\rule{.3pt}{1ex}
\rule{.3em}{.3pt}\rule{.3pt}{1ex}}
{\rule{.3pt}{1ex}\rule{.3em}{.3pt}\rule{.3pt}{1ex}
\rule{.3em}{.3pt}\rule{.3pt}{1ex}}
{\rule{.2pt}{.7ex}\rule{.2em}{.2pt}\rule{.2pt}{.7ex}
\rule{.2em}{.2pt}\rule{.2pt}{.7ex}}
{\rule{.3pt}{1ex}\rule{.3em}{.3pt}\rule{.3pt}{1ex}
\rule{.3em}{.3pt}\rule{.3pt}{1ex}}\mkern2mu}}\hspace{1mm}}
\begin{document}

\begin{frontmatter}

%% Title, authors and addresses

%% use the tnoteref command within \title for footnotes;
%% use the tnotetext command for the associated footnote;
%% use the fnref command within \author or \address for footnotes;
%% use the fntext command for the associated footnote;
%% use the corref command within \author for corresponding author footnotes;
%% use the cortext command for the associated footnote;
%% use the ead command for the email address,
%% and the form \ead[url] for the home page:
%%
%% \title{Title\tnoteref{label1}}
%% \tnotetext[label1]{}
%% \author{Name\corref{cor1}\fnref{label2}}
%% \ead{email address}
%% \ead[url]{home page}
%% \fntext[label2]{}
%% \cortext[cor1]{}
%% \address{Address\fnref{label3}}
%% \fntext[label3]{}

%\dochead{}
%% Use \dochead if there is an article header, e.g. \dochead{Short communication}
%% \dochead can also be used to include a conference title, if directed by the editors
%% e.g. \dochead{17th International Conference on Dynamical Processes in Excited States of Solids}

\title{On the Complexity and Decidability of Some Problems Involving Shuffle\tnoteref{t1}}

\tnotetext[t1]{\textcopyright 2016. This manuscript version is made available under the CC-BY-NC-ND 4.0 license \url{http://creativecommons.org/licenses/by-nc-nd/4.0/}}

%% use optional labels to link authors explicitly to addresses:
%% \author[label1,label2]{<author name>}
%% \address[label1]{<address>}
%% \address[label2]{<address>}

\author[label1]{Joey Eremondi\fnref{fn3}}
\address[label1]{Department of Information and Computing Sciences\\ Utrecht University, P.O.\ Box 80.089 3508 TB Utrecht, The Netherlands}
\ead[label1]{j.s.eremondi@students.uu.nl}

\author[label2]{Oscar H. Ibarra\fnref{fn2}}
\address[label2]{Department of Computer Science\\ University of California, Santa Barbara, CA 93106, USA}
\ead[label2]{ibarra@cs.ucsb.edu}
\fntext[fn2]{Supported, in part, by
NSF Grant CCF-1117708 (Oscar H. Ibarra).}

\author[label3]{Ian McQuillan\fnref{fn3}\corref{corr}}
\address[label3]{Department of Computer Science, University of Saskatchewan\\
Saskatoon, SK S7N 5A9, Canada}
\ead[label3]{mcquillan@cs.usask.ca}
\cortext[corr]{Corresponding author}
\fntext[fn3]{Supported, in part, by Natural Sciences and Engineering Research Council of Canada Grant 327486-2010 (Ian McQuillan).}

\begin{abstract}
The complexity and decidability of various decision problems involving the shuffle operation (denoted by $\shuffle$) are studied. The following three problems are all shown to be $\NP$-complete: given a nondeterministic finite automaton ($\NFA$) $M$, and two words $u$ and $v$, is $L(M) \not\subseteq u\shuffle v$,  is $u\shuffle v \not\subseteq L(M)$, and is $L(M) \neq u \shuffle v$? 
It is also shown that there is a polynomial-time algorithm to determine,
for $\NFA$s $M_1, M_2$, and a deterministic pushdown automaton $M_3$, whether
$L(M_1) \shuffle L(M_2) \subseteq L(M_3)$. The same is true when $M_1, M_2,M_3$
are one-way nondeterministic $l$-reversal-bounded $k$-counter machines, with $M_3$ being deterministic.
Other decidability and complexity results are presented for testing
whether given languages $L_1, L_2$, and $R$ from various languages families satisfy
$L_1 \shuffle L_2 \subseteq R$, and $R \subseteq L_1 \shuffle L_2$. Several closure results on shuffle are also shown.
%But it is undecidable, for fixed regular languages $a^*$ and $b^*$, 
%and a one-way nondeterministic 1-reversal-bounded 1-counter machine $M$, whether $a^* \shuffle b^* \subseteq L(M)$. Also, testing if 
%$(a+b)^* \shuffle \{\lambda\} \not\subseteq L(M)$, for $\NFA$s $M$ is $\PSPACE$-complete, and testing if $a^* \shuffle \{\lambda\} \not\subseteq L(M)$, for
%$\NFA$s $M$ is $\NP$-complete. In addition, it is $\NP$-complete to
%determine, for two finite languages $L_1,L_2$ and a nondeterministic pushdown automaton $M$, whether $L_1 \shuffle L_2 \not \subseteq L(M)$.
\end{abstract}

\begin{keyword}
Automata and Logic \sep Shuffle \sep Counter Machines \sep Pushdown Machines \sep Reversal-Bounds \sep Determinism \sep Commutativity \sep Strings
\end{keyword}

\end{frontmatter}

\section{Introduction}

The shuffle operator models the natural interleaving between strings. It was introduced by Ginsburg and Spanier \cite{GinsburgSpanier}, where it was shown that context-free languages are closed under shuffle with regular languages, but not context-free languages. It has since been applied in a number of areas such as concurrency \cite{Ogden}, coding theory \cite{trajSurvey}, verification \cite{shuffleConcurrent}, database schema \cite{shuffleDatabases}, and biocomputing \cite{trajSurvey,bondFree}, and has also received considerable study in the area of formal languages. However, there remains a number of open questions, such as the long-standing problem as to whether it is decidable, given a regular language $R$ to tell
if $R$ has a non-trivial decomposition; that is, $R = L_1 \shuffle L_2$,
for some $L_1, L_2$ that are not the language consisting of only the empty word \cite{CSV01}.

This paper addresses several complexity-theoretic and decidability questions
involving shuffle. In the past, similar questions have been studied
by Ogden, Riddle, and Round \cite{Ogden}, who showed that there exists deterministic context-free
languages $L_1,L_2$ where $L_1 \shuffle L_2$ is $\NP$-complete. More recently,
L.\ Kari studied problems involving solutions to language equations of the form
$R = L_1 \shuffle L_2$, where some of $R,L_1, L_2$ are given, and the goal
is to determine a procedure, or determine that none exists, to solve for
the variable(s) \cite{langEq}. Also, there has been similar decidability problems investigated
involving shuffle on trajectories \cite{shuffleTraj}, where the patterns
of interleaving are restricting according to another language $T\subseteq \{0,1\}^*$ (a zero indicates that a letter from the first operand will be chosen next, and a one indicates a letter from the second operand is chosen).
L.\ Kari and Sos\'ik show that it is decidable, given
$L_1, L_2, R$ as regular languages with a regular trajectory set $T$, whether 
$R = L_1 \shuffle_T L_2$ (the shuffle of $L_1$ and $L_2$ with trajectory set $T$). Furthermore, if $L_1$ is allowed to be
context-free, then the problem becomes undecidable as long as, for every 
$n \in \natnum$, there is some word of $T$ with more than $n$ $0$'s (with a symmetric result if there is a context-free language on the right). This
implies that it is undecidable whether $L_1 \shuffle L_2 = R$, where $R$
and one of $L_1,L_2$ are regular, and the other is context-free.
In \cite{BordihnHolzerKutrib}, it is demonstrated that given two linear
context-free languages, it is not semi-decidable whether their
shuffle is linear context-free, and given two deterministic context-free
languages, it is not semi-decidable whether their shuffle is 
deterministic context-free.
Complexity questions involving so-called {\it shuffle languages}, which
are augmented from regular expressions by shuffle and iterated shuffle, have also been studied \cite{shufflelanguages}.
It has also been determined that it is $\NP$-hard to determine if a given string is the shuffle of
two identical strings (independently in \cite{UnshuffleSquares} and
\cite{ShuffleSquare2}).

Recently, there have been several papers involving the shuffle of two words. 
It was shown that the shuffle of two words with at least two letters
has a unique decomposition into the shuffle of words \cite{berstelwords}.
In fact, the shuffle of two words, each with at least two letters, has a unique decomposition over arbitrary
sets of words \cite{ShuffleTCS}.
Also, a polynomial-time algorithm has been developed that,
given a 
deterministic finite automaton ($\DFA$) $M$ and two words $u,v$, can test
if $u\shuffle v \subseteq L(M)$ \cite{Biegler2012}.
In the same work, an algorithm was 
presented that
takes a $\DFA$ $M$ as input and outputs a ``candidate solution'' $u,v$; this means, if $L(M)$ has a decomposition into the shuffle of two words, $u$ and $v$ must
be those two unique words. But the algorithm cannot
guarantee that $L(M)$ has a decomposition. This algorithm runs in $O(|u|+|v|)$
time, which is often far less than the size of the input $\DFA$, as $\DFA$s
accepting the shuffle of two words can be exponentially larger than the words
\cite{shuffleJALC}.
It has also been shown \cite{McquillanBiegler}
that the following problem is $\NP$-complete: given
a $\DFA$ $M$ and two words $u,v$, is it true that $L(M) \not\subseteq u \shuffle v$?

In this paper, problems are investigated involving three given languages 
$R,L_1, L_2$, and the goal is to determine decidability and complexity of 
testing if $R \not\subseteq L_1 \shuffle L_2, L_1 \shuffle L_2 \not\subseteq R$,
and $L_1 \shuffle L_2 \neq R$, depending on the language families of $L_1, L_2$ and
$R$.
In Section \ref{WordResults}, it is demonstrated that the following three problems are $\NP$-complete: to determine,
given an $\NFA$ $M$ and two words $u,v$ whether 
$u \shuffle v \not\subseteq L(M)$ is true, $L(M) \not\subseteq u \shuffle v$
is true, and $u \shuffle v \neq L(M)$ is true.
Then, the $\DFA$ algorithm from \cite{Biegler2012} that can output a ``candidate solution'' is
extended to an algorithm on $\NFA$s that operates in polynomial time, and
outputs two words $u,v$ such that if the $\NFA$ is decomposable 
into the shuffle of words, then $u \shuffle v$ is the unique solution.
And in Section \ref{LanguageResults}, decidability and the complexity of testing if $L_1 \shuffle L_2 \subseteq R$
is investigated involving more general language families. In particular, it is shown that
it is decidable in polynomial time, given $\NFA$s
$M_1, M_2$
and a deterministic pushdown automaton $M_3$, whether
$L(M_1) \shuffle L(M_2) \subseteq L(M_3)$. The same is true 
given $M_1, M_2$ that are one-way nondeterministic $l$-reversal-bounded $k$-counter machines, and $M_3$ is a one-way deterministic $l$-reversal-bounded $k$-counter machine.
However, if $M_3$ is a nondeterministic 1-counter machine that makes only one
reversal on the counter, and $M_1$ and $M_2$ are fixed $\DFA$s accepting $a^*$
and $b^*$ respectively, then the question is undecidable.
Also, if we have fixed languages $L_1 = (a+b)^*$ and $L_2 = \{\lambda\}$, and $M_3$ is an $\NFA$, then testing  whether $L_1 \shuffle L_2 \not\subseteq L(M_3)$ is $\PSPACE$-complete. Also, testing whether 
$a^* \shuffle \{\lambda\} \not\subseteq L$ is $\NP$-complete for $L$ accepted by an $\NFA$. For finite languages $L_1, L_2$, and $L_3$ accepted by an $\NPDA$,
it is $\NP$-complete to determine if $L_1 \shuffle L_2 \not\subseteq L_3$.
Results on unary languages are also provided.
In Section \ref{sec:converse}, testing $R \subseteq L_1 \shuffle L_2$ is addressed. This
is already undecidable if $R$ and $L_1$ are deterministic pushdown automata. However, it is decidable
if $L_1, L_2$ are any commutative, semilinear languages, and $R$ is a context-free language (even if augmented
by reversal-bounded counters). Then, in Section \ref{sec:closure}, several other decision problems, and some
closure properties of shuffle are investigated.

\section{Preliminaries}

We assume an introductory background in formal language theory and automata
\cite{HU}, as well as computational complexity \cite{GarryJohnson}.
We assume knowledge of pushdown automata, finite automata, and Turing machines, and we
use notation from \cite{HU}.
Let $\Sigma = \{a_1, \ldots, a_m\}$ be a finite alphabet. Then $\Sigma^*$ ($\Sigma^+$) is the set of all (non-empty) words over $\Sigma$. A language over $\Sigma$ is any $L \subseteq \Sigma^*$.
Given a language $L\subseteq \Sigma^*$, the complement of $L$, 
$\overline{L} = \Sigma^* - L$. 
The length of a word $w\in \Sigma^*$
is $|w|$, and for $a \in \Sigma$, $|w|_a$ is the number of $a$'s in $w$.

Let $\mathbb{N}$ be the positive integers, and $\mathbb{N}_0$
be the non-negative integers. For $n \in \mathbb{N}_0$, then define $\pi(n)$ to be $0$ if
$n = 0$, and $1$ otherwise.

Next, we formally define reversal-bounded counter machines \cite{Ibarra1978}.
A {\em one-way $k$-counter machine} is a tuple $M = (k,Q, \Sigma, \lhd, \delta, q_0, F)$, where
$Q,\Sigma, \lhd, q_0, F$ are respectively, the finite set of states, input alphabet, right input end-marker (not in $\Sigma$),
the initial state, and the set of final states. The transition function $\delta$ is a relation
from $Q \times (\Sigma \cup \{\lhd\}) \times \{0,1\}^k$ into $Q \times \{{\rm S}, {\rm R}\} \times \{-1,0,+1\}^k$, such
that  if $\delta(q,a,c_1, \ldots, c_k)$ contains $(p,d,d_1, \ldots, d_k)$ and $c_i = 0$ for some $i$, then $d_i \geq 0$ (this is to
prevent negative values in any counter). The symbols ${\rm S}$ and ${\rm R}$ give the direction of the input tape head, being
either {\it stay} or {\it right} respectively. Furthermore, $M$ is deterministic if $\delta$ is a partial function.
A configuration of $M$ is a tuple $(q,w,c_1, \ldots, c_k)$ indicating that $M$ is in state $q$ with $w$ (in $\Sigma^*$ or $\Sigma^*\lhd$)
as the remaining input, and $c_1, \ldots, c_k \in \mathbb{N}_0$ are the contents of the
counters. The derivation relation $\vdash_M$ is defined by,
$(q, aw, c_1, \ldots, c_k) \vdash_M (p, w', c_1 + d_1, \ldots, c_k + d_k)$, if
$(p, d, d_1, \ldots, d_k) \in \delta(q, a, \pi(c_1), \ldots, \pi(c_k))$ where $d = {\rm S}$ implies $w' = aw$, and $d = {\rm R}$
implies $w' = w$. Then $\vdash_M^*$ is the reflexive, transitive closure of $\vdash_M$. A word $w \in \Sigma^*$ is accepted by $M$
if $(q_0, w\lhd, 0, \ldots, 0) \vdash_M^* (q, \lhd, c_1, \ldots, c_k)$, for some $q \in F, c_1, \ldots, c_k \in \mathbb{N}_0$.
The language accepted by $M$, $L(M)$, is the set of all words accepted by $M$.
Essentially, a $k$-counter machine is a $k$-pushdown machine where each pushdown has one symbol plus an end-marker.
It is well known that a two counter machine
is equivalent to a deterministic Turing machine \cite{Minsky}.

In this paper, we will often restrict the counter(s) to
be reversal-bounded in the sense that each counter can only
reverse (i.e., change mode from non-decreasing to non-increasing
and vice-versa)
at most $r$ times for some given $r$.
In particular, when $r = 1$, the counter reverses only once, i.e.,
once it decrements, it can no longer increment.
Note that a counter that makes $r$ reversals
can be simulated by $\lceil \frac{r+1}{2} \rceil$ 1-reversal-bounded
counters. Closure and decidable properties of various
machines augmented with reversal-bounded counters have been studied
in the literature (see, e.g., \cite{Ibarra1978,IbarraDCFS2014,EIMInsertion2015,EIMDeletion2015}).
We will use the notation $\NCM(k,r)$ to represent $r$-reversal-bounded,
$k$-counter machines, and $\NCM$ to represent
all reversal-bounded multicounter machines.
Machines with one unrestricted pushdown, plus reversal-bounded counters
have also been studied \cite{Ibarra1978}. Then, $\NPCM(k,r)$ are
machines with one unrestricted pushdown, and $k$ $r$-reversal-bounded
counters, and $\NPCM$ are all such machines.
We use `D' in place of `N'
for the deterministic versions, e.g.,
$\DCM$, $\DCM(k,r)$, $\DPCM$, and $\DPCM(k,r)$. We use this notation
for both the classes of machines, and the families of languages
they accept.

\begin{example}
Consider the language $L = \{a^i b^j a^i b^j \mid i,j >0\}$. This
language is neither regular nor context-free. However, this language
can be accepting by a $\DCM(2,1)$ machine $M$. Indeed, $M$, on
input $a^i b^j a^k b^l$ reads the $a$'s, and increases the first
counter by one for each $a$ read. 
Then, it reads the $b$'s, and increases the
second counter for every $b$ read. Then, it reads the second section
of $a$'s, and decreases the first counter for every $a$ read, and 
verifies that it hits the final section of $b$'s when the first
counter is empty, thereby verifying that the number of $a$'s in
the first section is equal to the number in the second section.
Then, it does the same with the second section of $b$'s and the second
counter. Thus, $M$ accepts if and only if $i = k$ and $j = l$.
\end{example}

We will also use the notation below to represent common classes of automata (and languages), where any that we have not
defined are defined as in \cite{HU}:
$\NPDA$ for nondeterministic pushdown automata; $\DPDA$ for deterministic pushdown automata;
$\NCA$  for nondeterministic one counter machines with no reversal-bound;
$\DCA$ for deterministic $\NCA$s;
$\NFA$ for nondeterministic finite automata;
$\DFA$ for deterministic finite automata;
%Oscar 2/25
%NLBA for nondeterministic linear-bounded automaton;
%DLBA for deterministic linear-bounded automaton;
and $\DTM$ for deterministic Turing machines. As is well-known, $\NFA$s, $\NPDA$s, 
%NLBAs, 
and 
$\DTM$s, accept exactly the regular languages, context-free languages, 
and recursively
enumerable languages, respectively.

A set
$Q \subseteq \mathbb{N}_0^m$ is a {\em linear set} if
there exist vectors $\vec{v_0}, \vec{v_1}, \ldots, \vec{v_n} \in \mathbb{N}_0^m$ such that 
$Q = \{\vec{v_0} + i_1 \vec{v_1} + \cdots + i_n \vec{v_n} \mid i_1, \ldots, i_n \in \mathbb{N}_0\}$. In this definition, the vector
$\vec{v_0}$ is called the constant and $\vec{v_1}, \ldots, \vec{v_n}$
are the {\em periods}. A {\em semilinear set} is a finite union
of linear sets. For semilinear sets 
$Q_1, Q_2 \subseteq \mathbb{N}^m, Q_1 + Q_2 = \{v \mid v_1 + v_2, v_1 \in Q_1, v_2 \in Q_2\}$.

The {\em Parikh map} of $w \in \Sigma^*, \Sigma = \{a_1, \ldots, a_m\}$ is the vector $\psi(w) = 
(|w|_{a_1}, \ldots, |w|_{a_m})$, and the Parikh map of $L$ is
$\psi(L) = \{ \psi(w) \mid w \in L\}$.
For a vector $\vec{v} \in \mathbb{N}_0^m$, the inverse
$\psi^{-1}(\vec{v}) = \{w \in \Sigma^* \mid \psi(w) = \vec{v}\}$, which
is extended to subsets of $\mathbb{N}_0^m$. A language is
{\em semilinear} if its Parikh map is semilinear.
The commutative closure of a language $L \subseteq \Sigma^*$ is 
$\comm(L) = \psi^{-1}(\psi(L))$, and a language $L$ is {\em commutative} if
it is equal to its commutative closure.
The family of all commutative semilinear languages is denoted by 
$\COMSLIP$, following notation developed in \cite{CrespiReghizzi}.

Let $u,v \in \Sigma^*$. The {\em shuffle} of $u$ and $v$, denoted
$u \shuffle v$ is the set $$\{u_1v_1 u_2 v_2 \cdots u_n v_n \mid u_i,v_i 
\in \Sigma^*, 1 \leq i \leq n, u = u_1 \cdots u_n, v=v_1 \cdots v_n\}.$$
This can be extended to languages $L_1,L_2 \subseteq \Sigma^*$
by $ L_1 \shuffle L_2 = \bigcup_{u \in L_1, v\in L_2} u \shuffle v$. Given $u,v \in \Sigma^*$, 
there is an obvious $\NFA$ with $(|u|+1)(|v|+1)$ states accepting $u\shuffle v$,
where each state stores a position within both $u$ and $v$. This
has been called the {\em naive $\NFA$} for $u \shuffle v$ 
\cite{shuffleJALC}. It was also mentioned in \cite{shuffleJALC} that if $u$ and $v$ are over disjoint alphabets, then the naive $\NFA$ is a $\DFA$.

An $\NFA$ $M= (Q,\Sigma, q_0,F,\delta)$ is {\em accessible} if,
for each $q \in Q$, there exists $u \in \Sigma^*$ such that
$q \in \delta(q_0,u)$. Also, $M$ is {\em co-accessible} if,
for each $q \in Q$, there exists $u \in \Sigma^*$ such that
$\delta(q,u) \cap F \neq \emptyset$. Lastly, $M$ is {\em trim} if it
is both accessible and co-accessible, and $M$ is {\em acyclic} if
$q \notin \delta(q, u)$ for every $q \in Q, u \in \Sigma^+$.

\section{Comparing Shuffle on Words to $\NFA$s}
\label{WordResults}

The results to follow in this section depend on the
following  result from \cite{McquillanBiegler}.

\begin{proposition}
\label{dfaSubsetNP}
It is $\NP$-complete to determine, given a $\DFA$ $M$ and words $u,v$ over an alphabet of at least two letters, if $L(M) \not\subseteq u \shuffle v$.
\end{proposition}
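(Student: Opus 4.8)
The plan is to prove the two halves of the claim separately: membership in $\NP$, which is routine, and $\NP$-hardness, which is the substantial part.

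\emph{Membership in $\NP$.} Every word of $u\shuffle v$ has length exactly $n:=\abs{u}+\abs{v}$, so $L(M)\not\subseteq u\shuffle v$ holds iff $M$ accepts some word $w$ with $w\notin u\shuffle v$. By a pumping argument on the \DFA\ $M$ (with state set $Q$) we may assume $\abs{w}\le n+\abs{Q}$: if $L(M)$ contained a word longer than $n+\abs{Q}$, then repeatedly deleting a state-loop from its accepting run would produce a word of length in $(n,n+\abs{Q}]$, which is automatically outside $u\shuffle v$; and any other potential witness already has length at most $n$. Hence the following is an $\NP$ algorithm: guess $w$ with $\abs{w}\le n+\abs{Q}$, verify $w\in L(M)$ by simulating $M$, and verify $w\notin u\shuffle v$ using the standard polynomial-time dynamic program that determines, for all $i\le\abs{u}$ and $j\le\abs{v}$, whether the length-$(i+j)$ prefix of $w$ lies in the shuffle of the length-$i$ prefix of $u$ with the length-$j$ prefix of $v$.

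\emph{$\NP$-hardness.} I would reduce from $3$-SAT (the detailed construction is carried out in \cite{McquillanBiegler}). Given a formula $\phi$ with variables $x_1,\dots,x_n$ and clauses $C_1,\dots,C_m$, encode a truth assignment $\beta$ by the word $\mathrm{enc}(\beta)=g_{\beta(x_1)}^{m}g_{\beta(x_2)}^{m}\cdots g_{\beta(x_n)}^{m}$ over $\{0,1\}$, where $g_1=01$, $g_0=10$, and the exponent $m$ denotes $m$-fold concatenation of the two-letter word. Let $M$ be a \DFA\ recognising exactly $L:=\set{\mathrm{enc}(\beta)\mid\beta\text{ a truth assignment}}$; since $M$ only has to check the block-format, it has $O(mn)$ states. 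The point of this encoding is that every word of $L$ has the same length $2mn$ and the same number of occurrences of each letter, so no word of $L$ can be placed outside $u\shuffle v$ merely for a length or letter-count reason. I would then construct $u$ and $v$ over $\{0,1\}$ with $\abs{u}+\abs{v}=2mn$ — a concatenation of one gadget per clause together with padding — so that for $w=\mathrm{enc}(\beta)\in L$ we have $w\in u\shuffle v$ if and only if $\beta$ falsifies at least one clause of $\phi$. Granting this equivalence, $L(M)\not\subseteq u\shuffle v$ holds iff some $\mathrm{enc}(\beta)$ lies outside $u\shuffle v$, iff some $\beta$ falsifies no clause, iff $\phi$ is satisfiable; and the whole construction is plainly polynomial in $\abs{\phi}$.

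\emph{Main obstacle.} All the difficulty is in designing $u,v$ and proving the displayed equivalence, since membership of one word in a shuffle is extremely rigid: $w\in u\shuffle v$ means the positions of $w$ admit a $2$-colouring whose two colour classes, read left to right, spell $u$ and $v$ exactly — equivalently, the acyclic ``grid'' naive \NFA\ for $u\shuffle v$ has an accepting run on $w$. The two requirements that must be reconciled are: (a) whenever $\beta$ falsifies some clause $C_j$, there is such a colouring of $\mathrm{enc}(\beta)$ — it must route the copies of the blocks $g_{\beta(x_i)}$ associated with $C_j$ through the $C_j$-gadget inside $u$ and $v$ in a way that tests falsification of $C_j$, while all remaining material of $\mathrm{enc}(\beta)$ is absorbed by the padding; and (b) whenever $\beta$ satisfies every clause, \emph{no} colouring exists, i.e.\ there is no spurious interleaving that sidesteps the intended one. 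Making the padding and the clause gadgets enforce (a) and (b) at once is the crux; after a suitable invariant characterising the admissible colourings is set up, checking it is a detailed but routine case analysis.
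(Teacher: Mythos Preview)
The paper does not prove this proposition at all; it is quoted as a known result from \cite{McquillanBiegler} and used as a black box for the subsequent corollaries. So there is no ``paper's own proof'' to compare against.

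Your $\NP$-membership argument is complete and correct, and in fact it is essentially the argument the paper gives one step later (Corollary~\ref{subsetwords}) when extending the result from $\DFA$s to $\NFA$s: bound the witness length by $\abs{uv}+\abs{Q}$ via pumping, guess such a $w$, check $w\in L(M)$ by simulation, and check $w\notin u\shuffle v$ by the standard dynamic program. Nothing to add there.

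Your $\NP$-hardness part, however, is not a proof but a plan. You set up an encoding of truth assignments and a $\DFA$ recognising them, state the key equivalence you need ($\mathrm{enc}(\beta)\in u\shuffle v$ iff $\beta$ falsifies some clause), and then explicitly flag the design of $u,v$ and the verification of that equivalence as the ``main obstacle'' --- which you do not resolve, instead pointing to \cite{McquillanBiegler}. That is an honest assessment of where the work lies, but it means your write-up is, like the paper's, ultimately a citation of \cite{McquillanBiegler} for the hardness direction rather than an independent proof. If the intent was merely to match the paper, you have done so (indeed, you have done more, since you supply the $\NP$ upper bound in full). If the intent was to give a self-contained proof, the gadget construction and the two-direction correctness argument for the colouring characterisation still need to be written out.
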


First, we note that this $\NP$-completeness can be
extended to $\NFA$s.

\begin{corollary}
\label{subsetwords}
It is $\NP$-complete to determine, given an $\NFA$ $M$ and words $u,v$ over
an alphabet of at least two letters, if $L(M) \not\subseteq u \shuffle v$.
\end{corollary}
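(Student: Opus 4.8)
The plan is to reduce the (already $\NP$-complete) $\DFA$ version in Proposition~\ref{dfaSubsetNP} to the $\NFA$ version, which is trivial since every $\DFA$ is an $\NFA$; the only real content is to show membership in $\NP$. For this, I would exhibit a polynomial-time verifier: given an $\NFA$ $M = (Q,\Sigma,q_0,F,\delta)$ and words $u,v$, a witness to $L(M) \not\subseteq u\shuffle v$ is a word $w \in L(M) \setminus (u\shuffle v)$. The obstacle is that such a $w$ could a priori be very long, so one must bound the length of a shortest witness polynomially in $|Q| + |u| + |v|$.

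First I would build the naive $\NFA$ $N$ for $u\shuffle v$, which has $(|u|+1)(|v|+1)$ states, and take a DFA-free handle on its complement by working with the product automaton $M \times N$: a word $w$ lies in $L(M) \setminus (u\shuffle v)$ exactly when, reading $w$, there is an accepting run of $M$ while \emph{every} run of $N$ either rejects or gets stuck. Rather than determinize $N$ (which is exponential), I would track, alongside a single state of $M$, the \emph{subset} of states of $N$ reachable on the prefix read so far — but note $w \notin u\shuffle v$ is equivalent to saying the reached subset of $N$-states contains no accepting state, and crucially the reachable subsets of the naive shuffle $\NFA$ on a fixed word have a simple monotone structure (they are "staircase" regions of the $(|u|+1)\times(|v|+1)$ grid, since on input symbol $a$ a state $(i,j)$ moves to $(i+1,j)$ if $u_{i+1}=a$ and/or $(i,j+1)$ if $v_{j+1}=a$). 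Consequently the number of distinct reachable subsets is only polynomial in $|u|+|v|$ (bounded by roughly $(|u|+1)(|v|+1)$ many antichains of the relevant shape — in fact the reachable subset after reading a prefix $p$ is determined by $\{(|p'|_{a})_{a\in\Sigma} : p' \text{ a suitable re-ordering}\}$, a convex staircase recorded by at most $|u|+|v|$ breakpoints), so this product construction yields an $\NFA$-like search space of size polynomial in $|Q|(|u|+|v|)$.

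Then I would argue: $L(M)\not\subseteq u\shuffle v$ iff in this polynomial-size search space there is a reachable configuration $(q, S)$ with $q \in F$ and $S$ containing no accepting $N$-state; by standard reachability the shortest such $w$ has length at most the number of configurations, hence polynomial in $|Q|+|u|+|v|$. A nondeterministic machine guesses $w$ symbol by symbol (guessing $M$'s run and updating the reachable $N$-subset deterministically), halting within the polynomial bound; this runs in $\NP$. Combined with the trivial reduction from the $\DFA$ case for hardness (and the alphabet-size-two hypothesis carried over verbatim from Proposition~\ref{dfaSubsetNP}), $\NP$-completeness follows.

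The step I expect to be the main obstacle is the bookkeeping that the reachable subsets of the naive shuffle $\NFA$ stay polynomially many — i.e.\ making precise the "staircase"/convexity claim so that the verifier genuinely runs in polynomial time rather than merely guessing a subset of $N$'s states at each step (which would still be fine for $\NP$, but is cleaner to avoid). If one does not care about that optimization, an even simpler argument suffices: nondeterministically guess $w$ together with an accepting run of $M$ on $w$, and separately verify $w\notin u\shuffle v$ in polynomial time (membership in $u\shuffle v$ is decidable in $O(|u|\,|v|)$ time by a straightforward dynamic program over the grid); one then only needs the short-witness bound, which follows because if some $w\in L(M)\setminus(u\shuffle v)$ exists, a shortest one cannot exceed $|Q|\cdot 2^{(|u|+1)(|v|+1)}$ — still finite, but to land in $\NP$ one must sharpen this to a polynomial bound via the product/staircase reachability argument above.
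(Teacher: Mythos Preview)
Your hardness direction is fine and matches the paper. The gap is in the $\NP$ upper bound: your ``staircase'' claim---that the naive shuffle $\NFA$ $N$ has only polynomially many reachable subsets---is false. The paper itself notes (citing \cite{shuffleJALC}) that $\DFA$s for $u\shuffle v$ can be exponentially larger than $|u|+|v|$, and the reachable subsets under the subset construction are exactly the states of that $\DFA$. Even the convexity you appeal to fails on small examples: with $u=aba$ and $v=aab$, after reading the prefix $aab$ the reachable states of $N$ are $(2,1)$ and $(0,3)$ but not $(1,2)$, so the reachable set on the diagonal $i+j=3$ is not an interval. Your fallback paragraph then explicitly leaves the short-witness bound at $|Q|\cdot 2^{(|u|+1)(|v|+1)}$ and defers back to the staircase argument to make it polynomial---so as written, the proof does not establish membership in $\NP$.

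The paper sidesteps all of this with one observation you missed: every word in $u\shuffle v$ has length \emph{exactly} $|u|+|v|$. Hence any $w\in L(M)$ with $|w|\neq |uv|$ is automatically a witness to $L(M)\not\subseteq u\shuffle v$. Now if $L(M)$ contains some word of length greater than $|uv|+|Q|$, then by repeatedly excising a cycle of length at most $|Q|$ from an accepting run one obtains a word in $L(M)$ whose length lies in the interval $[\,|uv|+1,\ |uv|+|Q|\,]$; such a word is too long to be in $u\shuffle v$ and is therefore a witness. Consequently, $L(M)\not\subseteq u\shuffle v$ iff there is a witness of length at most $|uv|+|Q|$. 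The nondeterministic machine guesses such a $w$, checks $w\in L(M)$, and checks $w\notin u\shuffle v$ (both polynomial-time), with no need to analyse the subset structure of $N$ at all.
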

\begin{proof}
$\NP$-hardness follows from Proposition \ref{dfaSubsetNP}.

To show it is in $\NP$, let $M$ be an $\NFA$ with state set $Q$.
Create a nondeterministic Turing machine that
guesses a word $w$ of length at most $|uv|+|Q|$, and verifies that
$w\in L(M)$ and that $w \notin u \shuffle v$ in polynomial time \cite{McquillanBiegler}.
And indeed, $L(M) \not\subseteq u \shuffle v$ if and only if
$L(M) \cap \{w \mid |w| \leq |uv| + |Q|, w\in \Sigma^*\} \not\subseteq u \shuffle v$, since any word longer than $|uv| + |Q|$
that is in $L(M)$ implies there is another
one in $L(M)$ with length between $|uv|+1$ and $|uv|+|Q|$, which is 
therefore not in $u \shuffle v$ (all words in $u \shuffle v$ are of length $|u| + |v|$).
\qed \end{proof}

Next, the reverse inclusion of Corollary \ref{subsetwords} will be examined. In contrast to the polynomial-time testability of $u \shuffle v \subseteq L(M)$ when $M$ is a $\DFA$
(\cite{Biegler2012}, with an alternate shorter proof appearing in
Proposition \ref{alternate} of this paper), testing 
$u\shuffle v \not\subseteq L(M)$ is $\NP$-complete for $\NFA$s.
\begin{proposition}
\label{wordssubset}
It is $\NP$-complete to determine, given an $\NFA$ $M$ and $u,v$
over an alphabet of at least two letters, whether
$u \shuffle v \not\subseteq L(M)$.
\end{proposition}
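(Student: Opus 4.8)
The plan is to show membership in $\NP$ and $\NP$-hardness separately. For membership, note that $u \shuffle v \not\subseteq L(M)$ holds if and only if there is some word $w \in u \shuffle v$ with $w \notin L(M)$. Every word in $u \shuffle v$ has length exactly $|u|+|v|$, so a nondeterministic Turing machine can guess a word $w$ of length $|u|+|v|$ (polynomial in the input size), verify in polynomial time that $w \in u \shuffle v$ (e.g.\ by guessing the interleaving positions, or using the naive $\NFA$), and verify that $w \notin L(M)$ by determinizing the relevant part of $M$ on the fixed input $w$ — since $w$ is a single word, one can track the subset of reachable states of $M$ letter by letter in polynomial time. Hence the problem is in $\NP$.

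For $\NP$-hardness, I would reduce from a known $\NP$-complete problem; the natural candidate is $3$-SAT or a variant, following the style of hardness proofs for shuffle on words. First I would fix a two-letter alphabet, say $\{a,b\}$, and choose $u$ and $v$ so that $u \shuffle v$ enumerates (via its interleavings) a family of candidate objects — for instance, truth assignments to the variables of a given formula, encoded by which ``track'' ($u$ or $v$) contributes each block. The complement $L(M)$ of the target should be designed so that $M$ accepts exactly those interleavings that correspond to \emph{non-satisfying} assignments (equivalently, $\overline{L(M)}$, restricted to $u\shuffle v$, is the set of satisfying ones). Then $u \shuffle v \not\subseteq L(M)$ holds precisely when some interleaving escapes $L(M)$, i.e.\ when a satisfying assignment exists. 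The $\NFA$ $M$ can be built to be polynomial-size: for each clause it needs only to check, nondeterministically, that the guessed assignment falsifies that clause, together with a component enforcing the ``interleaving = well-formed assignment'' structure, and $M$ accepts if \emph{any} clause is falsified or the word is malformed as a shuffle of $u$ and $v$.

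The main obstacle I anticipate is the encoding: because $u$ and $v$ are fixed words over a small alphabet, the only information an interleaving carries is \emph{which source} each symbol came from, and since many symbols are identical, one must arrange $u$ and $v$ so that the combinatorics of legal interleavings faithfully encode the exponentially many assignments while keeping $M$ small enough to detect falsification of a clause with an $\NFA$ (which cannot count or remember unboundedly much). A workable approach is to use block structure — e.g.\ let $u$ and $v$ each be a concatenation of $n$ ``variable gadgets'' over $\{a,b\}$ whose relative interleaving at the $i$-th gadget is forced to encode the truth value of variable $x_i$, separated by synchronization markers so that $M$ can re-locate itself; the nondeterminism of $M$ is then used to ``guess a clause to violate'' and ``guess the literal positions'', which an $\NFA$ handles in constant space per clause. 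I would also need to double-check the boundary behaviour — that malformed words (interleavings that do not respect the block alignment) are all placed inside $L(M)$ so they do not spuriously witness non-inclusion — which is a routine but delicate part of verifying correctness of the reduction.

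Finally I would confirm the reduction is polynomial-time computable: $|u|, |v|$ and the size of $M$ are all $O(\mathrm{poly}(n + m))$ where $n$ is the number of variables and $m$ the number of clauses, and the construction of $M$'s transition relation is straightforward from the clause list. Combined with the $\NP$ upper bound above, this establishes $\NP$-completeness.
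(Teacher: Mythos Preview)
Your proposal is correct and follows essentially the same approach as the paper: membership in $\NP$ via guessing a length-$|uv|$ word and checking non-membership in $L(M)$ by subset tracking, and $\NP$-hardness via a $3$-SAT reduction in which interleavings of $u$ and $v$ encode truth assignments, while the $\NFA$ accepts all malformed interleavings together with all assignments that falsify some nondeterministically chosen clause. The paper's concrete encoding is a particularly clean instance of your block-gadget idea: it takes $v=0^p$ and $u=1\,b(1)\,1\,b(2)\cdots 1\,b(p)$ with binary markers $b(i)$, so that the sole degree of freedom is where each $0$ lands relative to the adjacent $1$, yielding $e_i\in\{01,10\}$ as the $i$th truth value; the ``malformed'' component is handled by unioning in the complement of the regular template $T$ of well-formed strings, exactly as you anticipated.
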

\begin{proof}
First, it is in $\NP$, since all words in $u\shuffle v$ are of length 
$\abs{uv}$, and so a nondeterministic Turing machine can be built that nondeterministically guesses one and 
tests if it is not in $L(M)$ in polynomial time.

For $\NP$-hardness, let $F$ be an instance of 
the 3SAT problem (a known $\NP$-complete problem \cite{GarryJohnson}) with a set
of Boolean variables $X = \{x_1, \ldots, x_p\}$, and a set of
clauses $\{c_1, \ldots, c_q\}$, where each clause has three literals.

If $d$ is a truth assignment, then $d$ is a function from $X$ to
$\{+,-\}$ (true or false). 
For a variable $x$, then $x^+$ and $x^-$ are literals. 
In particular, the literal $x^+$ is true under $d$
if and only if the variable $x$ is true under $d$. And, the literal
$x^-$ is true under $d$ if and only if the variable $x$ is false \cite{GarryJohnson}.
Let $y = \lceil \log_2 p \rceil +1$, which is enough to hold
the binary representation of any of $1, \ldots, p$.
For an integer $i$, $1 \leq i \leq p$, let $b(i)$ be the
string $1$ followed by the $y$-bit binary representation of $i$,
followed by $1$ again.

For $1 \leq i \leq p, 1 \leq j \leq q$, let $f(i,j)$ be defined as follows, where
it is a set of either one or two strings over $\{0,1\}$:
$$f(i,j) = \begin{cases} \{01 b(i)\} & \mbox{if~} x_i^+ \in c_j;\\
						 \{10 b(i)\} & \mbox{if~} x_i^- \in c_j;\\
						\{10 b(i), 01b(i)\} & \mbox{otherwise}.
						\end{cases}$$
For $1 \leq j \leq q$, let 
$F_j = f(1,j) f(2,j) \cdots f(p,j).$
Therefore, $F_j$ is the concatenation of languages (since each
$f(i,j)$ is a set of one or two strings).

We will next give the construction. 
Let $u = 1 b(1) 1 b(2)  \cdots 1 b(p)$,
and let $v = 0^p$.

Let $T = \{e_1 b(1) e_2 b(2) \cdots e_p b(p) \mid e_i \in \{10,01\}, 1 \leq i \leq p\}$, consisting of $2^p$ strings.
Clearly $T \subseteq u \shuffle v$, and also $T$ is 
a regular language, and a $\DFA$ $M_T$ can be built accepting
this language in polynomial time, as with a $\DFA$ $\overline{M_T}$ accepting 
$\overline{L(M_T)}$.

Then, make an $\NFA$ $M'$ accepting $\bigcup_{1 \leq j \leq q}F_j$. It is clear
that this $\NFA$ is of polynomial size. Then, make another $\NFA$ $M''$ accepting
$L(M') \cup L(\overline{M_T}$).
The following claim shows that deciding $u\shuffle v \not\subseteq L(M'')$ is
equivalent to deciding if there is a solution to the 3SAT instance.
\begin{claim}
The following three conditions are equivalent:
\begin{enumerate}
\item $u \shuffle v \cap \overline{L(M'')} \neq \emptyset$,
\item $T \cap \overline{L(M'')} \neq \emptyset$,
\item $F$ has a solution.
\end{enumerate}
\end{claim}
\begin{proof}
``$1 \Rightarrow 2$''.
Let $w\in u \shuffle v \cap \overline{L(M'')}$. Then $w \notin L(M'')$,
and since $L(\overline{M_T}) \subseteq L(M'')$, necessarily $w \in T$.

``$2 \Rightarrow 1$''.
Let $w \in T \cap \overline{L(M'')}$. But, $T \subseteq u \shuffle v$; and so
$w \in u \shuffle v \cap \overline{L(M'')}$.

``$2 \Rightarrow 3$''.
Assume $w \in T \cap \overline{L(M'')}$.
Thus, $w = e_1 b(1) e_2 b(2) \cdots e_p b(p), e_i \in \{10,01\}$,
but $w \notin \bigcup_{1 \leq j \leq q} F_j$.
Let $d$ be the truth assignment obtained from $w$ where
$$d(x_i) = \begin{cases} +, & \mbox{if~} e_i = 10; \\ -, & \mbox{if~} e_i = 01;\end{cases}$$ for all $i, 1 \leq i \leq p$.
Thus, for every $j$, 
$1 \leq j \leq q$, $w \notin F_j$, but for all variables
$x_i$ not in $c_j$, $e_i b(i)$ must be an infix of words in $F_j$ since $10 b(i)$ and
$01b(i)$ are both in $f(i,j)$ when $x_i$ is not in $c_j$. So one
of the words encoding the (three) variables in $c_j$, must have $10 b(i)$ as an infix of 
words in $F_j$
where $d(x_i) = +$, or $01b(i)$ as an infix of words in $F_j$ where $d(x_i) = -$,
since otherwise $F_j$ would have as infix, for each $x_i$ that is
a variable of $c_j$, 
$01 b(i) $ if $x_i^+ \in c_j$, and $10b(i)$ if $x_i^- \in c_j$, and so $w$
would be in $F_j$, a contradiction.
Thus, $d$ makes clause $c_j$ true, as is the case with every clause.
Hence, $d$ is a satisfying truth assignment, and $F$ is satisfiable.

``$3 \Rightarrow 2$''.
Assume $F$ is satisfiable, hence $d$ is a satisfying truth assignment.
Create $$w = e_1 b(1) e_2 b(2) \cdots e_p b(p),$$
where $$e_i = \begin{cases} 10 ,& \mbox{if~} d(x_i) = +; \\ 01, & \mbox{if~} d(x_i) = -.\end{cases}$$
Then $w \in T$. Also, for each $j$, $d$ applied to some
variable, say $x_i$, must be in $c_j$, but then by the construction
of $F_j$,
$e_i b(i)$ must not be an infix of any word in $F_j$. Hence,
$w \notin \bigcup_{1 \leq j \leq q}F_j$, $w \notin L(M')$, and
$w \notin L(M'')$. Hence, $w \in T \cap \overline{L(M'')}$.
\qed \end{proof}
\qed \end{proof}

Next, we examine the complexity of testing inequality between languages accepted by $\NFA$s and words of a very simple form.
\begin{proposition}
\label{thm:nfaUniverse}
It is $\NP$-complete to test, given $a^p,b^q \in \Sigma^*, p,q \in \mathbb{N}_0$, and $M$ an $\NFA$ over 
$\Sigma = \{a,b\}$, whether
$L(M) \neq a^p \shuffle b^q$.
\end{proposition}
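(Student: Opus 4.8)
The plan is to show membership in $\NP$ and $\NP$-hardness separately. For membership, observe that $a^p \shuffle b^q$ is the single-word language $\{w\}$ where $w$ has Parikh image $(p,q)$ over $\{a,b\}$ — actually it is the set of all interleavings, so $|a^p \shuffle b^q|$ can be exponential, but every word in it has length exactly $p+q$ and the same number of $a$'s and $b$'s. So $L(M) \neq a^p \shuffle b^q$ holds iff either (i) some word of $L(M)$ has length $\neq p+q$ or wrong letter counts, or (ii) some word of length $p+q$ with exactly $p$ $a$'s and $q$ $b$'s is \emph{not} in $L(M)$. Both can be checked by a nondeterministic machine: for (i) guess a short witness in $L(M)$ using the pumping bound $|Q|$ as in Corollary \ref{subsetwords}; for (ii) guess a word $w \in a^p \shuffle b^q$ (length $p+q$, which is polynomial in the input since $p,q$ are given in unary as $a^p, b^q$) and check $w \notin L(M)$ in polynomial time. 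Hence the problem is in $\NP$.

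For $\NP$-hardness the plan is to reduce from the problem of Proposition \ref{wordssubset}, i.e. from deciding $u \shuffle v \not\subseteq L(M)$ for an $\NFA$ $M$ and words $u,v$. The key idea is that testing $u \shuffle v \neq L(M')$ for a cleverly constructed $M'$ should capture exactly the failure of containment. Given the instance $(M, u, v)$, I would first intersect $M$ with the (polynomial-size) naive $\NFA$ for $u \shuffle v$ to obtain, in polynomial time, an $\NFA$ $M'$ with $L(M') = L(M) \cap (u \shuffle v)$. Then $L(M') \subseteq u \shuffle v$ always holds, so $L(M') = u \shuffle v$ iff $u \shuffle v \subseteq L(M)$, i.e. $L(M') \neq u \shuffle v$ iff $u \shuffle v \not\subseteq L(M)$. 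This is already a reduction to the inequality problem for $\NFA$s, but over a general alphabet and with general words $u, v$, whereas the statement demands the very restricted form $u = a^p$, $v = b^q$ over a binary alphabet.

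To get down to the special form, I would use the encoding technique from the proof of Proposition \ref{wordssubset}: replay that reduction, where the instance already uses $u = 1\,b(1)\cdots 1\,b(p)$ and $v = 0^p$ and the target is an $\NFA$ $M''$ with the property (by the Claim) that $u \shuffle v \not\subseteq L(M'')$ iff the 3SAT instance $F$ is satisfiable, and where moreover $L(M'') \subseteq u \shuffle v$ can be arranged (intersect $M''$ with the naive $\NFA$ for $u \shuffle v$). But here $u$ and $v$ are not of the form $a^p, b^q$. The fix is a relabeling: map $0 \mapsto a$ and everything appearing in $u$ (i.e. the symbol $1$ and the bits of the $b(i)$'s) to $b$ — more carefully, since $u$ and $v$ must be over $\{a,b\}$ with $u = a^p$ and $v = b^q$, one reorganizes so that the "first operand" is the all-$a$ block and the "second operand" is the all-$b$ block. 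Concretely I would build a fresh reduction directly from 3SAT, analogous to Proposition \ref{wordssubset} but with $u := a^{N}$ (one $a$ per "slot" in the encoding) and $v := b^{M}$, encoding the truth assignment by the relative interleaving pattern of $a$'s among $b$'s, and with the regular "frame" language $T$ and the clause-checking $\NFA$ adapted to this binary alphabet; then intersect with the naive $\NFA$ for $a^N \shuffle b^M$ so that inequality is equivalent to non-containment, hence to satisfiability.

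The main obstacle will be this last step: re-engineering the clause gadget of Proposition \ref{wordssubset} so that it works with the two operands being pure unary blocks $a^p$ and $b^q$ rather than the structured words $u,v$ used there. In the original, the delimiters $b(i)$ (which sit inside $u$) serve to align the two choices $10$ vs.\ $01$ per variable; with $u = a^p$ there are no such delimiters available inside $u$, so the positional information must instead be carried entirely by the pattern of $a$'s relative to the $b$'s, and the clause-checking $\NFA$ must be redesigned to detect a "bad" variable occurrence using only that pattern. I expect this to be doable — indeed the introduction already announces an $\NP$-completeness result of exactly this flavor — but verifying that the adapted gadget has polynomial size and that the Claim-style equivalence still goes through in both directions is where the real work lies.
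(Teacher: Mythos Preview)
Your $\NP$-membership argument is fine and essentially matches the paper's.

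For $\NP$-hardness, however, you miss the key idea and your argument remains a plan rather than a proof: you propose to redesign the 3SAT gadget of Proposition~\ref{wordssubset} so that the two operands become pure unary blocks $a^p$ and $b^q$, acknowledge this as the ``main obstacle,'' and leave it undone. The paper avoids this obstacle entirely. It reduces instead from Proposition~\ref{dfaSubsetNP} (the \emph{$\DFA$} version: is $L(M)\not\subseteq u\shuffle v$?) and uses the simple observation that for \emph{any} $u,v\in\{a,b\}^*$, setting $p=|uv|_a$ and $q=|uv|_b$ gives $u\shuffle v\subseteq a^p\shuffle b^q$, since the latter is precisely the set of words with Parikh image $(p,q)$. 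One then builds in polynomial time an $\NFA$ $M'$ accepting
\[
\bigl(a^p\shuffle b^q\cap\overline{L(M)}\bigr)\;\cup\;(u\shuffle v)\;\cup\;\bigl(L(M)\cap\overline{a^p\shuffle b^q}\bigr),
\]
and verifies that $L(M')=a^p\shuffle b^q$ if and only if $L(M)\subseteq u\shuffle v$. The construction is polynomial precisely because $M$ is a $\DFA$ (so $\overline{L(M)}$ is cheap) and the naive automaton for $a^p\shuffle b^q$ is already deterministic (disjoint alphabets, so its complement is cheap too). Your decision to start from the $\NFA$ problem of Proposition~\ref{wordssubset} would block the first of these complementations, which is exactly why you were pushed toward rebuilding the gadget; starting from the $\DFA$ problem turns the whole hardness argument into a short direct construction with no need to revisit 3SAT at all.
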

\begin{proof}

First, it is immediate that the problem is in $\NP$, by
Corollary \ref{subsetwords} and Proposition \ref{wordssubset}.

To show $\NP$-hardness, the problem in Proposition \ref{dfaSubsetNP} is used.

Given $M$, a $\DFA$, and words $u, v$, we can construct the naive shuffle $\NFA$ $N$ for $u \shuffle v$. The naive $\NFA$ is of polynomial size in the length of $u$ and $v$.
Let $(p,q) = (\abs{uv}_a, \abs{uv}_b)$.
Then construct the naive $\NFA$ $A$ accepting $a^p \shuffle b^q$, which is 
a polynomially sized $\DFA$ since $a^p,b^q$ are over disjoint alphabets. 
Thus, another $\DFA$ can be built accepting $\overline{L(A)}$.
We can then construct an $\NFA$ $M'$ in polynomial time which accepts 
$(a^p \shuffle b^q \cap \overline{L(M)} ) \union L(N) \cup (L(M) \cap (\overline{a^p \shuffle b^q}))$ as $M$ is already a $\DFA$.
Also, $u \shuffle v \subseteq a^p \shuffle b^q$ since the latter contains all
words with $p$ $a$'s and $q$ $b$'s.

We will show 
$L(M) \subseteq u \shuffle v$ if and only if 
$L(M') = a^p \shuffle b^q$.
%%\end{eqnarray*}

Assume $L(M) \subseteq L(N) (= u \shuffle v)$. Then 
$L(M) \cap (\overline{a^p \shuffle b^q}) = \emptyset$ since
$u \shuffle v \subseteq a^p \shuffle b^q$. All other words in 
$L(M')$ are in $a^p \shuffle b^q$. Thus, $L(M') \subseteq a^p \shuffle b^q$.
Let $w \in a^p \shuffle b^q$. If $w \notin L(M)$, then $w \in L(M')$. 
If $w \in L(M)$, then $w \in L(N) \subseteq L(M')$, by the
assumption.

Assume $L(M') = a^p \shuffle b^q$. Let $w \in L(M)$. Then 
$L(M) \cap (\overline{a^p \shuffle b^q}) = \emptyset$ by the assumption.
So, $L(M) \subseteq a^p \shuffle b^q$. Assume $w \in L(M)$ but
$w \notin L(N)$. However, $w \in L(M')$ by the assumption, a contradiction, as
$w \notin a^p \shuffle b^q \cap \overline{L(M)}$, and 
$w \notin L(M) \cap \overline{a^p \shuffle b^q}$, implying $w \in L(N)$.

Hence, the problem is $\NP$-complete.
\qed \end{proof}

%\begin{corollary}
%Given an NFA, the problem of determining if it accepts $a^p \shuffle b^q$
%for a given $p,q$ is $\coNP$-complete.
%\end{corollary}

To obtain the following corollary, it only needs to be shown that the problem is
in $\NP$, which again follows from Corollary \ref{subsetwords} 
and Proposition \ref{wordssubset}.
\begin{corollary}
It is $\NP$-complete to determine,
given an $\NFA$ $M$ and words $u,v$ over an alphabet of size at least two, 
if $L(M) \neq u \shuffle v$.
%, and thus $\coNP$-complete to determine if $L(M) = u \shuffle v$.
\end{corollary}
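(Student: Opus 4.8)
The plan is to assemble the corollary from the two one-sided results already in hand. For $\NP$-hardness, observe that $a^p$ and $b^q$ are themselves words over the two-letter alphabet $\{a,b\}$, so deciding $L(M) \neq a^p \shuffle b^q$ — shown $\NP$-hard in Proposition~\ref{thm:nfaUniverse} — is literally an instance of the present problem. Hence the hardness transfers with no new reduction; the alphabet-size-at-least-two hypothesis is exactly what is inherited from that proposition.

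For membership in $\NP$, I would note that $L(M) \neq u \shuffle v$ holds precisely when at least one of $L(M) \not\subseteq u \shuffle v$ or $u \shuffle v \not\subseteq L(M)$ holds, and that $\NP$ is closed under union, so it suffices to exhibit a polynomially bounded certificate for each disjunct. A nondeterministic polynomial-time algorithm first guesses which disjunct to witness. To witness $u \shuffle v \not\subseteq L(M)$ it guesses a word $w$ of length exactly $\abs{uv}$ lying in $u \shuffle v$ and checks $w \notin L(M)$, exactly as in the proof of Proposition~\ref{wordssubset}. To witness $L(M) \not\subseteq u \shuffle v$ it guesses a word $w$ of length at most $\abs{uv} + \abs{Q}$ (with $Q$ the state set of $M$), then checks $w \in L(M)$ and $w \notin u \shuffle v$; by the argument in Corollary~\ref{subsetwords} restricting to words of this bounded length loses nothing, and testing (non-)membership in $u \shuffle v$ is polynomial. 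All guessed certificates are of polynomial size and all verifications run in polynomial time, so the problem is in $\NP$.

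There is no substantive obstacle here — the statement is a genuine corollary rather than a new theorem. The only point needing care, and it is already settled by the cited results, is that each of the two possible disconfirming witnesses (a word in $L(M)$ but not in $u\shuffle v$, or a word in $u\shuffle v$ but not in $L(M)$) can be chosen of polynomially bounded length; everything else is bookkeeping over the union of two $\NP$ predicates.
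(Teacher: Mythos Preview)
Your proposal is correct and matches the paper's own argument essentially verbatim: $\NP$-hardness is inherited from Proposition~\ref{thm:nfaUniverse} as a special case, and membership in $\NP$ follows because $L(M)\neq u\shuffle v$ is the disjunction of the two $\NP$ predicates established in Corollary~\ref{subsetwords} and Proposition~\ref{wordssubset}. The only difference is that you spell out the polynomial-size witnesses explicitly, whereas the paper simply cites those two results.
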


It is known that there is a polynomial-time algorithm that, given
a $\DFA$, will output two words $u$ and $v$ such that, if $L(M)$ is decomposable 
into the shuffle of two words, then this implies $L(M) = u \shuffle v$
\cite{Biegler2012}. 
Moreover, this
algorithm runs in time $O(|u| +|v|)$, which is sublinear in the input $M$.
This main result from \cite{Biegler2012} is as follows:
\begin{proposition}
\label{DFAalg}
Let $M$ be an acyclic, trim, non-unary $\DFA$ over $\Sigma$. Then we can determine words $u,v\in \Sigma^+$ such that, $L(M)$ has a shuffle decomposition into two words implies $L(M) = u \shuffle v$ is the unique decomposition. This can be calculated in $O(|u|+|v|)$ time.
\end{proposition}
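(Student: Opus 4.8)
The plan is to establish the result by a careful structural analysis of an acyclic, trim, non-unary $\DFA$ $M$, under the \emph{assumption} that $L(M)$ does decompose as $u \shuffle v$ for some $u, v \in \Sigma^+$; the algorithm we design must recover these (necessarily unique, by \cite{ShuffleTCS}) words $u, v$ directly from $M$ in time $O(|u| + |v|)$, and when no decomposition exists it may output arbitrary candidate words. First I would observe that since $M$ is acyclic and trim, $L(M)$ is finite, every word of $L(M)$ has length exactly $|u| + |v|$, and the shortest accepting paths pin down the lengths: in fact all accepting paths have the same length $n := |u| + |v|$, so $n$ can be read off from any single path from $q_0$ to $F$. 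The key idea is that the two words $u$ and $v$ are determined letter-by-letter by looking at which letters can appear in each position across all words of $L(M)$.

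The core combinatorial step is the following. For each position $i$ with $1 \le i \le n$, let $S_i \subseteq \Sigma$ be the set of letters that occur in position $i$ in \emph{some} word of $u \shuffle v$. A standard property of shuffle of two words is that, reading $u = u_1 \cdots u_{|u|}$ and $v = v_1 \cdots v_{|v|}$, position $i$ can be occupied by $u_j$ precisely when $j - 1 \le i - 1$ and $(i - j) \le |v|$, i.e.\ the set of "reachable" index pairs forms a grid, and $S_i = \{ u_j : \max(1, i - |v|) \le j \le \min(|u|, i)\} \cup \{ v_k : \max(1, i - |u|) \le k \le \min(|v|, i)\}$. The crucial consequence, exploited in \cite{Biegler2012}, is that in a \emph{non-unary} shuffle there must exist positions where $|S_i| = 1$ (for instance the very first and very last positions, which can only be $u_1$ or $v_1$, resp.\ $u_{|u|}$ or $v_{|v|}$; and more can be extracted), and by comparing consecutive $S_i$'s one can disentangle the interleaving of the two words. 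So the algorithm would: (1) compute $n$ from $M$; (2) compute each $S_i$ by a left-to-right and right-to-left sweep over the level sets of the (layered, acyclic) automaton $M$ — position $i$'s letter-set is exactly the set of labels of transitions between level $i-1$ and level $i$ that lie on accepting paths, which since $M$ is trim is just all transitions between those levels; (3) from the sequence $S_1, \dots, S_n$ reconstruct $u$ and $v$ using the disentangling argument of \cite{Biegler2012}, reading singleton sets as forced letters and using the first point of "branching" to separate the two strands.

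The subtle part — and the main obstacle — is step (3): showing that the sequence $(S_i)_{i=1}^n$, which is \emph{all} the information $L(M) = u \shuffle v$ exposes at the level of individual positions, actually suffices to reconstruct $u$ and $v$ uniquely, and that this reconstruction can be done in time proportional to the \emph{output} size $|u| + |v|$ rather than proportional to $n$ times the alphabet size or to the (possibly exponential) size of $M$. The $O(|u|+|v|)$ bound forces us never to examine all of $M$: instead we must walk a single path (or a bounded fan-out neighbourhood of one path) through the layered structure, using the acyclicity and trimness of $M$ to guarantee that local transition-label information at a vertex already equals the global $S_i$. I would handle this by arguing that along any fixed accepting path, the out-transitions at each visited state already reveal whether $S_i$ is a singleton or a doubleton, and that a doubleton first occurs at the position where $u$ and $v$ "diverge," after which the two letters can be assigned to the two words consistently and propagated; the correctness of this assignment and its uniqueness is exactly the content of the decomposition-uniqueness results, so the real work is bookkeeping that the walk touches only $O(|u|+|v|)$ states and transitions. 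Since the proposition is quoted verbatim from \cite{Biegler2012}, I would at this point simply cite that paper for the detailed construction and correctness proof, noting that the present paper uses it only as a black box (to be lifted to $\NFA$s in the sequel).
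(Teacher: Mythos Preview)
The paper provides no proof of this proposition at all: it is stated as ``this main result from \cite{Biegler2012}'' and used purely as a black box for the subsequent $\NFA$ extension. Your final paragraph correctly recognizes this and defers to the citation, which is exactly what the paper does.

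The sketch you provide before that point is therefore extraneous, and it also contains a genuine error. Your claim that ``in a non-unary shuffle there must exist positions where $|S_i| = 1$'' is false: take $u = ab$ and $v = ba$; then for every position $i \in \{1,2,3,4\}$ one has $S_i = \{a,b\}$, so no singleton position exists. Consequently the disentangling strategy you outline in step~(3) --- reading off forced letters from singleton $S_i$'s and branching at the first doubleton --- cannot be the mechanism the actual algorithm uses. The real algorithm in \cite{Biegler2012} must exploit more than the positional letter-sets $S_i$ (indeed, the $S_i$ sequence alone does not determine $u,v$: the pair $u=ab$, $v=ba$ and the pair $u=ba$, $v=ab$ give the same $S_i$'s, though those happen to yield the same shuffle; more careful examples would be needed to see whether $S_i$'s alone ever fail). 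Since you ultimately abandon the sketch and cite, this does not affect your conclusion, but you should not present the sketch as a plausible outline of the cited proof.
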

However, the downside
to this algorithm is that it can output two strings $u$ and $v$, when $L(M)$ is not
decomposable. Thus, the algorithm does not check whether $L(M)$ is 
decomposable, but if it is, it can find the decomposition in time
usually far less than the number of states of the $\DFA$.
The decomposition also must be unique over words (this is always true when there are at least two combined letters used in words of $L(M)$ which is the purpose of the non-unary condition in the proposition statement) \cite{berstelwords}.

It is now shown that this result scales to $\NFA$s, while remaining polynomial time complexity. The algorithm in \cite{Biegler2012} scans at most $O(|u|+|v|)$ transitions and states
of the $\DFA$ from initial state towards final state. From an $\NFA$, it becomes possible to apply the standard subset construction \cite{HU} on the $\NFA$ only by 
creating states and transitions for the transitions and states examined
by this algorithm (thus, the $\NFA$ is never fully determinized, and only a subset of the transitions and states of the $\DFA$ are created and traversed). Because the algorithm essentially follows one ``main''
path from initial state to final state in the $\DFA$, the amount of work required for $\NFA$s
is still polynomial.

\begin{proposition}
There is a polynomial-time complexity 
algorithm that, given an acyclic, non-unary $\NFA$ $M=(Q_N, \Sigma, q_{N0}, F_N, \delta_N)$, can find strings $u,v \in \Sigma^+$, such that, $L(M)$ has a decomposition into two words implies $L(M) = u \shuffle v$ is the unique decomposition. Moreover, this algorithm runs in time 
$O((|u|+|v|) \abs{Q_N}^2)$.
\end{proposition}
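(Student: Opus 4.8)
The plan is to run the $\DFA$ procedure of Proposition~\ref{DFAalg} \emph{without ever constructing the full subset $\DFA$ of $M$}, materialising only the deterministic states and transitions that the procedure actually inspects. Recall the quoted fact that this procedure touches at most $O(\abs{u}+\abs{v})$ states and transitions, essentially tracing a single path from the initial state toward a final state; so only $O(\abs{u}+\abs{v})$ subsets of $Q_N$ are ever created, and it suffices to keep the work per subset polynomial in $\abs{Q_N}$.

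First I would preprocess $M$: by one backward reachability pass from $F_N$, compute the set $C\subseteq Q_N$ of \emph{co-accessible} states (those $q$ with $\delta_N(q,w)\cap F_N\neq\emptyset$ for some $w$). If $q_{N0}\notin C$ then $L(M)=\emptyset$ and any $u,v$ may be returned. Otherwise, the observation that makes lazy trimming correct is that a subset $S\subseteq Q_N$ is co-accessible in the subset $\DFA$ of $M$ \emph{iff} $S\cap C\neq\emptyset$, because $\delta_N(S,w)\cap F_N\neq\emptyset$ holds for some $w$ exactly when some $q\in S$ lies in $C$.

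Next I would simulate on demand the $\DFA$ $M_D$ whose states are the subsets of $Q_N$ reachable from $\set{q_{N0}}$ that meet $C$: keep a dictionary keyed by the canonical bit-vector encoding of a subset (of length $\abs{Q_N}$) mapping to a deterministic-state identifier; the start state is $\set{q_{N0}}$ (its $\lambda$-closure if $M$ has $\lambda$-transitions); the $a$-successor of $S$ is $S'=\bigcup_{q\in S}\delta_N(q,a)$ ($\lambda$-closed if needed), which is a transition of $M_D$ iff $S'\cap C\neq\emptyset$ and is otherwise reported as absent; and $S$ is final iff $S\cap F_N\neq\emptyset$. Since $M$ is acyclic, $L(M)=L(M_D)$ is finite and nonempty, and as $M_D$ has the same alphabet and language as $M$ it inherits the non-unary hypothesis; moreover the fragment of $M_D$ that gets generated is accessible by construction and co-accessible by the restriction to subsets meeting $C$, hence trim, and a trim $\DFA$ for a finite language has no cycles, hence is acyclic. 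So the generated $M_D$ meets the hypotheses of Proposition~\ref{DFAalg}, which therefore outputs words $u,v\in\Sigma^+$ with the required property.

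For the running time, one $a$-successor together with its $\lambda$-closure costs $O(\abs{Q_N}^2)$ (the transitions leaving a subset total at most $\abs{Q_N}^2$ entries, and deduplication and closure with a bit-vector are $O(\abs{Q_N}^2)$), while an encoding, a dictionary lookup, or a finality test costs $O(\abs{Q_N})$; by the quoted bound only $O(\abs{u}+\abs{v})$ such operations occur, and the one-time backward pass costs $O(\abs{Q_N}+\abs{\delta_N})=O(\abs{Q_N}^2)$ for a fixed alphabet, which is dominated, giving total time $O((\abs{u}+\abs{v})\abs{Q_N}^2)$. I expect the main obstacle to be the correctness bookkeeping rather than the arithmetic: one must verify that the lazily generated automaton is \emph{exactly} a trim acyclic $\DFA$ for $L(M)$, so that both the correctness and the $O(\abs{u}+\abs{v})$ traversal bound of Proposition~\ref{DFAalg} transfer unchanged --- in particular that restricting to subsets meeting $C$ deletes precisely the non-co-accessible deterministic states, and that the procedure never queries a deterministic state outside this trim fragment. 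The conditional form of the guarantee absorbs the degenerate cases (for instance $L(M)$ not decomposable, or $L(M)=\set{\lambda}$), where any output is acceptable.
\qed
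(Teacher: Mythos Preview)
Your proposal is correct and follows essentially the same approach as the paper: run the $\DFA$ algorithm of Proposition~\ref{DFAalg} on a lazily materialised subset automaton, computing each deterministic transition in $O(\abs{Q_N}^2)$ time, and invoke the $O(\abs{u}+\abs{v})$ bound on the number of transitions inspected. The only implementation difference is that the paper trims the $\NFA$ itself first (and collapses final states) before doing the lazy subset construction, whereas you leave the $\NFA$ untouched and instead filter subsets on the fly via the test $S\cap C\neq\emptyset$; your observation that this test exactly characterises co-accessibility in the subset $\DFA$ makes the two variants equivalent.
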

%\begin{comment}
{\it Proof sketch.}
%\begin{proof}
Uniqueness again follows from \cite{berstelwords}.

The algorithm outputs
words $u,v\in \Sigma^+$ such that
either $L(M) = u \shuffle v$ or $M$ is not shuffle decomposable.
It is based off the one described in \cite{Biegler2012}, which is quite detailed, and thus not reproduced here, although we will refer to it.

In order to use the algorithm in Proposition \ref{DFAalg}, first all states that are not accessible or not co-accessible are removed. For this, a breadth-first graph search algorithm can be used to detect which states can be reached
from $q_0$ in $O(|Q_N|^2)$ time. It also verifies that all final states reached are the same distance from the 
initial state, and if not, $M$ is not decomposable.
Then, collapse these final states down to one state $q_f$ and remove all outgoing
transitions, which does not change the language accepted since $M$ is acyclic. Then, check
which states can be reached from $q_f$ following transitions in reverse using
the graph search, and remove all states that cannot be reached. This results
in an $\NFA$ $M_1 = (Q_1, \Sigma,q_1, \{q_f\}, \delta_1)$ that is trim and
accepts $L(M)$.

Let $M_D = (Q_D, \Sigma, q_{D0}, F_D, \delta_D)$ be the $\DFA$ obtained from $M_1$ via the subset construction (we do not compute this, but will refer to it).
Necessarily $M_D$ is trim and acyclic, since $M_1$ is as well. Then
$q_{D0} = \{q_1\}, F_D = \{ P \mid P \in Q_D, q_f \in P\}$.

We modify the algorithm of Proposition \ref{DFAalg} as follows: In place of $\DFA$ states, we use subsets of $Q_1$ from $Q_D$ \cite{HU}.
However, states and transitions are only computed as needed in the algorithm.
Any time $\delta(P,a)$ is referenced in the algorithm, we first compute the deterministic transition as follows: $\delta_D(P,a) = \bigcup_{p \in P} \delta_1(p,a)$, and then use this transition.
Since there are at most $\abs{Q_1}$ states in a subset of $Q_1$, any 
transition of $\delta_D$ defined on a given state and a given letter can transition to at most $\abs{Q_1}$
states. Then, we can compute $\delta_D(P, a)$ in $O(\abs{Q_1}^2)$ time
(for each state $p\in P$, add $\delta_1(p,a)$ into a sorted list without
duplicates). As it is making the list, it can test if this state is final by testing if $q_f \in \delta_D(P,a)$.
Therefore, this algorithm inspects $O(|u|+|v|)$ states and transitions
of $M_D$, which takes $O(|Q_1|^2(|u|+|v|))$ time to compute
using the subset construction.
\qed %\end{proof}
%\end{comment}

\section{Testing Inclusion of the Shuffle of Languages in Another Language}
\label{LanguageResults}

A known result involving shuffle on words is that 
there is a polynomial-time test to determine, given words $u,v\in \Sigma^+$ and a $\DFA$ $M$, whether $u \shuffle v \subseteq L(M)$ \cite{Biegler2012}.
An alternate simpler proof of this result is demonstrated next, 
and then this proof technique will be used to extend to more general decision problems.

\begin{proposition}
There is a polynomial-time algorithm to determine, 
given $u,v \in \Sigma^+$, and a $\DFA$ $M$, whether or not
$u\shuffle v \subseteq L(M)$.
\label{alternate}
\end{proposition}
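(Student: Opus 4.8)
The plan is to reduce the inclusion test to an emptiness test for a product automaton of polynomial size. Recall from the preliminaries that $u \shuffle v$ is accepted by the \emph{naive $\NFA$} $N$ whose state set is $\{0,\ldots,\abs{u}\} \times \{0,\ldots,\abs{v}\}$, with initial state $(0,0)$, unique final state $(\abs{u},\abs{v})$, and transitions that, from $(i,j)$, either read the $(i{+}1)$-st letter of $u$ to move to $(i{+}1,j)$, or read the $(j{+}1)$-st letter of $v$ to move to $(i,j{+}1)$. This $\NFA$ has $(\abs{u}+1)(\abs{v}+1)$ states and is constructed in polynomial time.

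First I would construct a $\DFA$ $\overline{M}$ accepting $\overline{L(M)}$: complete $M$ by adding a sink state if necessary, then swap the set of final states with its complement. This is standard and takes linear time, yielding a $\DFA$ with at most $\abs{Q_M}+1$ states. Next I would form the product automaton $P = N \times \overline{M}$, whose states are pairs $(s,q)$ with $s$ a state of $N$ and $q$ a state of $\overline{M}$, having a transition $(s,q) \xrightarrow{a} (s',q')$ whenever $s \xrightarrow{a} s'$ in $N$ and $q \xrightarrow{a} q'$ in $\overline{M}$, with initial state the pair of initial states and accepting states the pairs of accepting states. By construction $L(P) = L(N) \cap L(\overline{M}) = (u \shuffle v) \cap \overline{L(M)}$, so $u \shuffle v \subseteq L(M)$ holds if and only if $L(P) = \emptyset$. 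The automaton $P$ has $O\big((\abs{u}+1)(\abs{v}+1)\,\abs{Q_M}\big)$ states and a comparable number of transitions, hence is of polynomial size and built in polynomial time.

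Finally, I would decide whether $L(P) = \emptyset$ by a reachability search (breadth-first or depth-first) from the initial state of $P$, checking whether some accepting state is reachable; this runs in time linear in the size of $P$, hence polynomial in $\abs{u}+\abs{v}+\abs{Q_M}$ overall. Assembling the steps yields the claimed polynomial-time algorithm.

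There is essentially no deep obstacle here; the one point requiring care is that $N$ is genuinely nondeterministic and must \emph{not} be determinized — doing so could cost exponential time — so one works with the nondeterministic product and tests its emptiness purely by graph reachability, never computing an equivalent $\DFA$ for $u \shuffle v$. One should also make sure the complement is taken of a \emph{complete} $\DFA$, so that $\overline{M}$ indeed accepts $\overline{L(M)}$ rather than a proper subset of it.
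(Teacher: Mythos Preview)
Your proof is correct and follows essentially the same approach as the paper: build the naive $\NFA$ for $u \shuffle v$, complement the $\DFA$ $M$, intersect, and test emptiness in polynomial time. You supply more detail (the explicit product construction, the reachability argument, and the caveat about not determinizing $N$), but the underlying idea is identical.
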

\begin{proof}
Clearly, $u \shuffle v$ is a subset of $L(M)$
if and only if $L(A)  \cap \overline{L(M)} = \emptyset$, 
where $A$ is the naive $\NFA$ accepting $u \shuffle v$.
A $\DFA$ accepting $\overline{L(M)}$ can be built in polynomial time, and
the intersection is accepted by an
$\NFA$ using the standard construction \cite{HU} whose emptiness can be checked in polynomial time \cite{HU}.
\qed
\end{proof}

This result will be generalized in two ways. First, instead 
of individual words $u$ and $v$,
languages from $\NCM(k,r)$, for some fixed $k, r$ will be used. Moreover, instead of a $\DFA$ for the right side of the inclusion, a $\DCM(k,r)$
machine will be used.

\begin{proposition} \label{prop6}
Let $k, r$ be any fixed integers.  
It is decidable, given $M_1, M_2 \in \NCM(k,r)$
and $M_3 \in \DCM(k,r)$,  whether $L(M_1) \shuffle L(M_2)  \subseteq L(M_3)$.
Moreover, the decision procedure is polynomial in $n_1 + n_2 + n_3$, 
where $n_i$ is the size of $M_i$.
\end{proposition}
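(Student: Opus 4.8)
The plan is to mimic the structure of Proposition~\ref{alternate}: reduce the inclusion $L(M_1) \shuffle L(M_2) \subseteq L(M_3)$ to an emptiness test for an intersection of an automaton recognizing the shuffle with one recognizing the complement of $L(M_3)$, where all machines involved live in classes for which emptiness is decidable in polynomial time. Concretely, I would first build a machine $N$ accepting $L(M_1) \shuffle L(M_2)$. The natural construction runs $M_1$ and $M_2$ in parallel on a single input tape, at each step nondeterministically choosing whether the current input symbol is consumed by $M_1$ or by $M_2$, and accepting when both components are in accepting configurations at the end-marker. Since $M_1, M_2 \in \NCM(k,r)$, this product uses the $k$ counters of $M_1$ together with the $k$ counters of $M_2$, each still $r$-reversal-bounded, so $N \in \NCM(2k, r)$, and $N$ has size polynomial in $n_1 + n_2$ (with $k, r$ fixed).

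Next, since $M_3 \in \DCM(k,r)$ and $\DCM$ is effectively closed under complement, I would construct $\overline{M_3} \in \DCM(k,r)$ accepting $\overline{L(M_3)}$; this costs only a constant-factor blow-up (swap accepting/non-accepting behavior, being careful about the end-marker and about rejection by "getting stuck", which is handled in the standard way for deterministic reversal-bounded counter machines), so $\overline{M_3}$ has size $O(n_3)$. Then I form the product of $N$ and $\overline{M_3}$, which runs both on the same input; this is an $\NCM(3k, r)$ machine of size polynomial in $n_1 + n_2 + n_3$. Finally, $L(M_1)\shuffle L(M_2) \subseteq L(M_3)$ holds if and only if $L(N) \cap \overline{L(M_3)} = \emptyset$, i.e.\ iff this product machine accepts nothing. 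Emptiness for $\NCM$ machines is decidable, and — crucially, since the number of counters $3k$ and the reversal bound $r$ are fixed constants — it is decidable in time polynomial in the size of the machine (emptiness of $\NCM$ with a fixed number of reversal-bounded counters reduces to checking nonemptiness of a semilinear/Parikh set obtainable via a polynomial-size integer or Presburger formula whose satisfiability, with the dimension fixed, is checkable in polynomial time). This yields the claimed overall polynomial bound.

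The main obstacle, and the step I would be most careful about, is establishing that $\NCM(k,r)$-emptiness is polynomial-time decidable \emph{for fixed} $k$ and $r$ with the right dependence on machine size — this is exactly where the hypothesis that $k, r$ are fixed is doing the work, and without it the problem is $\NP$-complete. I would lean on the known analysis (e.g.\ from \cite{Ibarra1978} and subsequent refinements) that a reversal-bounded counter machine's reachability/acceptance condition can be captured by a system of linear Diophantine constraints of dimension bounded in terms of $k$ and $r$ but of description size polynomial in the machine, and that feasibility of such a fixed-dimension system is in $\P$. A secondary point to get right is that the "stay" moves and the right end-marker interact correctly in the shuffle product: since both $M_1$ and $M_2$ read the end-marker, I need $N$ to deliver the end-marker to both components only after the real input is exhausted, which is a routine bookkeeping matter in the state control. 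Everything else — the shuffle product, complementation of $\DCM$, and the intersection product — is standard and contributes only polynomial factors.
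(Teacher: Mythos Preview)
Your proposal is correct and follows essentially the same approach as the paper: build an $\NCM(2k,r)$ for the shuffle, complement the $\DCM(k,r)$ machine $M_3$, intersect to get an $\NCM(3k,r)$ machine, and test emptiness in polynomial time using the fact that $k$ and $r$ are fixed. The only difference is that the paper simply cites \cite{Gurari1981220} for the polynomial-time emptiness of $\NCM(t,s)$ with fixed $t,s$, whereas you sketch the semilinear/Presburger justification yourself.
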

\begin{proof}
First, construct from $M_1$ and $M_2$, an $\NCM$ $M_4$ that accepts 
$L(M_1) \shuffle L(M_2)$.
Clearly  $M_4$ is an $\NCM(2k, r)$, and the size of $M_4$ is
polynomial in $n_1 + n_2$.

Then, construct from $M_3$ a $\DCM(k,r)$ machine $M_5$ 
accepting the complement of $L(M_3)$, which can be done in polynomial
time \cite{Ibarra1978}.

Lastly, construct from $M_4$ and $M_5$ an $\NCM(3k,r)$
machine $M_6$ accepting $L(M_4) \cap L(M_5)$
by simulating the machines in parallel.

It is immediate that $L(M_1) \shuffle L(M_2) \subseteq L(M_3)$
if and only if $L(M_6) = \emptyset$. Further, it has been
shown that for any fixed $t,s$, it is decidable in polynomial
time, given $M$ in $\NCM(t,s)$, whether $L(M) = \emptyset$ \cite{Gurari1981220}.
\qed
\end{proof}

Actually, the above proposition can be made stronger.  For
any fixed $k, r$, the decidability of non-emptiness of $L(M)$ for
an $\NCM(k,r)$ is in $\NLOG$, the class of languages
accepted by nondeterministic Turing machines in logarithmic space \cite{Gurari1981220}. 
It is known that
$\NLOG$ is contained in the class of
languages accepted by deterministic Turing machines in polynomial
time (whether or not the containment is proper is
open).  By careful analysis of the constructions
in the proof of the above proposition, $M_6$,
could be constructed by a logarithmic space deterministic Turing machine.  Hence:
\begin{corollary}
Let $k, r$ be any fixed integers.  
The problem of deciding, given $M_1, M_2 \in \NCM(k,r)$
and $M_3 \in \DCM(k,r)$,  whether $L(M_1) \shuffle L(M_2)  \subseteq L(M_3)$,
is in $\NLOG$.
\end{corollary}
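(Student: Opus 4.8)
The plan is to revisit the three constructions used in the proof of Proposition~\ref{prop6} and verify that each can be carried out by a deterministic logarithmic-space transducer, so that the composite machine $M_6 \in \NCM(3k,r)$ is logspace-computable from the triple $(M_1,M_2,M_3)$; one then combines the $\NLOG$ bound for non-emptiness of fixed-parameter $\NCM$ with closure of $\NLOG$ under complement. Throughout, the point is that none of the intermediate machines needs to be written out explicitly: only logarithmically many pointers into the input are maintained, and requested bits of an intermediate machine are recomputed on demand (the standard trick for composing logspace procedures).

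First I would check the shuffle machine $M_4 \in \NCM(2k,r)$. Its state set is essentially the product of the state sets of $M_1$ and $M_2$, its $2k$ counters are the disjoint union of the counters of $M_1$ and $M_2$, and each transition of $M_4$ either performs a transition of $M_1$ on the next input symbol (leaving the $M_2$-component and its counters untouched) or symmetrically a transition of $M_2$; the end-marker bookkeeping is routine. Given indices naming a state of $M_4$ and an input symbol, one enumerates the corresponding moves while storing only a constant number of pointers into the descriptions of $M_1$ and $M_2$, hence in $O(\log n)$ work space. Next, the complement machine $M_5 \in \DCM(k,r)$ is obtained from $M_3$ by the construction of~\cite{Ibarra1978}: make $M_3$ total and force it to scan all of its input (adding only boundedly many new states and ``dead'' transitions, which changes neither the counter behaviour nor the reversal bound), then interchange accepting and non-accepting states. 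This is again a logspace transformation. Finally $M_6 \in \NCM(3k,r)$ is the usual synchronized product of $M_4$ and $M_5$, whose description bits are computable in logspace given logspace access to the descriptions of $M_4$ and $M_5$. Composing these transducers shows that a description of $M_6$ is produced by a deterministic logspace transducer from $(M_1,M_2,M_3)$.

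I would then recall that, since $3k$ and $r$ are fixed constants, testing $L(M_6)\neq\emptyset$ for $M_6\in\NCM(3k,r)$ lies in $\NLOG$~\cite{Gurari1981220}. Running this algorithm on the logspace-constructed description of $M_6$ (recomputing requested bits of $M_6$ whenever they are read) yields a nondeterministic logspace procedure accepting exactly the triples with $L(M_1)\shuffle L(M_2)\not\subseteq L(M_3)$, because by Proposition~\ref{prop6} this non-inclusion is equivalent to $L(M_6)\neq\emptyset$. Thus the \emph{complement} of the problem in the statement is in $\NLOG$, and since $\NLOG$ is closed under complement (the Immerman--Szelepcs\'enyi theorem), the inclusion problem itself is in $\NLOG$.

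The main obstacle is the bookkeeping of the second paragraph: one must confirm that completing the deterministic $M_3$ (so that complementation by state-swapping is correct) adds only boundedly many states and preserves the $(k,r)$ counter structure, and that the entire pipeline $M_1,M_2,M_3\mapsto M_4\mapsto(M_5\text{ and }M_6)$ stays within logarithmic work space rather than being materialized (which would cost polynomial, not logarithmic, space). Everything else is a direct appeal to Proposition~\ref{prop6}, to~\cite{Gurari1981220}, and to the closure of $\NLOG$ under complement.
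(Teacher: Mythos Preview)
Your proposal is correct and follows essentially the same approach as the paper: the paper simply remarks that ``by careful analysis of the constructions in the proof of the above proposition, $M_6$ could be constructed by a logarithmic space deterministic Turing machine'' and then invokes the $\NLOG$ bound for non-emptiness of fixed-parameter $\NCM$ from~\cite{Gurari1981220}. Your write-up is in fact more careful than the paper's sketch on one point: since $L(M_1)\shuffle L(M_2)\subseteq L(M_3)$ is equivalent to \emph{emptiness} (not non-emptiness) of $L(M_6)$, one does need closure of $\NLOG$ under complement, which you make explicit via Immerman--Szelepcs\'enyi while the paper leaves this step implicit.
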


Proposition \ref{prop6} also holds if $M_1$ and $M_2$ are $\NFA$s and 
$M_3$ is a deterministic pushdown automaton.

\begin{proposition} \label{Prop8}
It is decidable, given $\NFA$s $M_1, M_2$
and $M_3 \in \DPDA$,  whether $L(M_1) \shuffle L(M_2)  \subseteq L(M_3)$.
Moreover, the decision procedure is polynomial in $n_1 + n_2 + n_3$, 
where $n_i$ is the size of $M_i$.
\end{proposition}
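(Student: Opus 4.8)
The plan is to follow exactly the template of Propositions~\ref{alternate} and~\ref{prop6}: turn the inclusion $L(M_1)\shuffle L(M_2)\subseteq L(M_3)$ into an emptiness question for a pushdown machine, and then check that every construction step is polynomial in $n_1+n_2+n_3$.

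First I would build from $M_1$ and $M_2$ an $\NFA$ $M_4$ with $L(M_4)=L(M_1)\shuffle L(M_2)$: take as states the pairs $(p,q)$ where $p$ is a state of $M_1$ and $q$ a state of $M_2$, and on input letter $a$ let $M_4$ either simulate an $a$-transition of $M_1$ in the first component or an $a$-transition of $M_2$ in the second; final states are pairs of final states. This $M_4$ has size polynomial in $n_1+n_2$. Next, since $M_3$ is a $\DPDA$ and the deterministic context-free languages are effectively closed under complement, I would construct a $\DPDA$ $M_5$ with $L(M_5)=\overline{L(M_3)}$. Then I would form a $\PDA$ $M_6$ accepting $L(M_4)\cap L(M_5)$ by the standard parallel product of a finite automaton with a pushdown automaton (the stack is inherited from $M_5$, the finite control is the product of the controls of $M_4$ and $M_5$); $M_6$ has size polynomial in $n_1+n_2+n_3$. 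Finally, $L(M_1)\shuffle L(M_2)\subseteq L(M_3)$ holds if and only if $L(M_4)\cap\overline{L(M_3)}=\emptyset$, i.e. $L(M_6)=\emptyset$, and emptiness of a pushdown automaton is decidable in polynomial time (e.g. convert $M_6$ to an equivalent context-free grammar of polynomial size and test in linear time whether the start symbol is productive).

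The hard part will be the complementation step, and in particular verifying that it costs only polynomial time. The naive ``swap accepting and non-accepting states'' does not work for a $\DPDA$ because the machine may fail to scan all of its input, may block, or may loop forever on $\lambda$-moves. The usual remedy is to first transform $M_3$ into an equivalent $\DPDA$ that always reads its whole input and halts: detect the finitely many state/top-of-stack pairs from which $M_3$ would perform an infinite sequence of $\lambda$-moves (this is a reachability computation in a polynomially sized graph on state/top-of-stack configurations, distinguishing whether the $\lambda$-computation ever pops the current symbol), reroute those to a consuming dead state, and add an end-marker phase so that acceptance is always decided in a well-defined ``reading'' configuration. After this normalization, exchanging the accepting status of the reading configurations yields $M_5$, and each of these steps can be carried out in time polynomial in $n_3$. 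I would cite the standard $\DPDA$-complementation construction (cf.\ \cite{HU}) for correctness and observe that, because all the auxiliary analyses are reachability problems on objects of size polynomial in $n_3$, the whole construction is polynomial; the remaining product-and-emptiness argument is routine. \qed
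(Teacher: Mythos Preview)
Your proposal is correct and follows essentially the same route as the paper: build an $\NFA$ $M_4$ for $L(M_1)\shuffle L(M_2)$, complement the $\DPDA$ $M_3$ to obtain $M_5$, intersect $M_4$ with $M_5$ via the standard product to get an $\NPDA$ $M_6$, and test $L(M_6)=\emptyset$ in polynomial time. The only difference is presentational: where you sketch the normalization argument for polynomial-time $\DPDA$ complementation and cite \cite{HU}, the paper simply invokes a reference (Geller et al.) for the fact that a $\DPDA$ can be complemented in polynomial time and is of polynomial size.
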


\begin{proof}
The proof and algorithm proceeds much like the proof of Proposition \ref{prop6}. Given two $\NFA$s $M_1, M_2$, 
another $\NFA$ $M_4$ that accepts $L(M_1) \shuffle L(M_2)$ can be constructed
in polynomial time. Then, for a given $\DPDA$ $M_3$, a
$\DPDA$ $M_5$ can be constructed accepting its complement in polynomial time
(and is of polynomial size) \cite{Geller}. Also, given an $\NFA$ $M_4$
and a $\DPDA$, a $\NPDA$ $M_6$ can be built in polynomial time accepting
$L(M_4) \cap L(M_5)$. As above, $L(M_1) \shuffle L(M_2) \subseteq L(M_3)$
if and only if $L(M) = \emptyset$, and emptiness is decidable in
polynomial time for $\NPDA$s \cite{HU}.
\qed
\end{proof}

In contrast to Proposition \ref{Prop8}, the following is shown:

\begin{proposition} \label{prop7}
It is undecidable, given one-state $\DFA$s $M_1$ accepting $a^*$ 
and $M_2$ accepting $b^*$, and an $\NCM(1,1)$ machine $M_3$ over 
$\{a,b\}$, whether $L(M_1) \shuffle L(M_2) \subseteq L(M_3)$.
\end{proposition}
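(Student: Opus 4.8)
\textit{Proof proposal.}
The plan is to reduce from the halting problem, after first noting that the fixed left-hand side is as large as possible: since $a^i \shuffle b^j$ is exactly the set of words over $\{a,b\}$ with $i$ occurrences of $a$ and $j$ occurrences of $b$, we have $a^* \shuffle b^* = \{a,b\}^*$. Consequently $L(M_1) \shuffle L(M_2) \subseteq L(M_3)$ holds if and only if $L(M_3) = \{a,b\}^*$, so it suffices to show that the \emph{universe problem} --- deciding, given an $\NCM(1,1)$ machine $M_3$ over $\{a,b\}$, whether $L(M_3) = \{a,b\}^*$ --- is undecidable.

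To prove this I would fix an effective encoding of computations. Using Minsky's equivalence of deterministic two-counter machines and deterministic Turing machines \cite{Minsky}, it is undecidable whether a given deterministic two-counter machine $T$, started with both counters zero, halts. Encode a computation of $T$ as a string $\langle c_0 \rangle \# \langle c_1 \rangle \# \cdots \# \langle c_t \rangle$, where each block $\langle c_i \rangle$ records the current state together with the two counter contents written in unary, and then re-encode this string over $\{a,b\}$ by a fixed-length block code. Given $T$, I would construct (effectively) an $\NCM(1,1)$ machine $M_3$ over $\{a,b\}$ that accepts precisely those strings that do \emph{not} encode a halting computation of $T$: that is, $M_3$ rejects a string exactly when it is a well-formed block encoding of a sequence of configurations that begins with the initial configuration, ends with a halting configuration, and in which every consecutive pair $\langle c_i\rangle, \langle c_{i+1}\rangle$ respects the transition function of $T$; otherwise $M_3$ accepts.

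The ``syntactic'' failure modes (not a valid block encoding, not of the form $\langle \cdot\rangle \# \cdots \# \langle \cdot\rangle$, wrong first or wrong last configuration) are regular and need no counter. The only delicate branch detects a \emph{bad step}: $M_3$ nondeterministically guesses an index $i$, scans past $\langle c_0\rangle, \dots, \langle c_{i-1}\rangle$ without touching its counter, then reads $\langle c_i\rangle$ while recording the state, recording which of the two counters of $c_i$ are zero (so that the applicable transition of $T$ is determined), and using its single counter to copy, in unary, one of the two counter values of $c_i$; it then reads $\langle c_{i+1}\rangle$, decrementing its counter against the corresponding unary block of $c_{i+1}$, and accepts if the final counter value, together with the recorded state, is inconsistent with what $T$'s transition from $c_i$ prescribes (an off-by-at-most-one check, since a counter changes by at most $1$ per step). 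A symmetric branch checks the other counter, and a further branch checks the state and the halting condition. Each branch uses the single counter in one increment phase followed by one decrement phase, and the branches are combined by an initial nondeterministic choice, so only one counter undergoing one reversal is ever used; hence $M_3$ is indeed an $\NCM(1,1)$ (closure of $\NCM(1,1)$ under this union-by-initial-choice, and under union with regular languages, is routine, as is the book-keeping for the block code over $\{a,b\}$).

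Finally, since $T$ is deterministic, $T$ halts if and only if some string encodes a halting computation of $T$, i.e.\ if and only if $L(M_3) \neq \{a,b\}^*$; as halting is undecidable, so is the universe problem for $\NCM(1,1)$ over $\{a,b\}$, and therefore so is testing $L(M_1) \shuffle L(M_2) \subseteq L(M_3)$. I expect the main obstacle to be exactly the bad-step branch: one must be careful that a single mismatch between consecutive configurations is catchable with only one $1$-reversal counter, which is precisely why the counter values are written in unary and compared by an off-by-one arithmetic test; once that design choice is in place, the remaining verification (head/zero-test windows, block-code parsing, and the union bookkeeping) is straightforward. Alternatively, the undecidability of the universe problem for $\NCM(1,1)$ could be invoked as a known consequence of standard results on reversal-bounded counter machines \cite{Ibarra1978}, with the observation $a^* \shuffle b^* = \{a,b\}^*$ doing the rest.
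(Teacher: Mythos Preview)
Your proposal is correct and follows essentially the same approach as the paper: both observe that $a^* \shuffle b^* = \{a,b\}^*$, reduce the question to universality for $\NCM(1,1)$, and establish undecidability of the latter via the ``invalid computations'' technique, using the single $1$-reversal counter to compare a nondeterministically chosen position in one configuration against the next. The only cosmetic difference is that the paper encodes computations of a single-tape $\DTM$ (comparing one tape cell across consecutive configurations) and cites Baker~1974 for the universality result, whereas you encode computations of a two-counter machine and compare a unary counter block; both instantiations are standard and equally valid.
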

\begin{proof}
Let $\Sigma = \{a,b\}$.
Then $L_1 \shuffle L_2 =  \Sigma^*$.
Let $M_3  \subseteq \Sigma^*$ be an $\NCM(1,1)$ machine.
Then $L_1 \shuffle L_2 \subseteq  L_3$ if and only if $L_3 = \Sigma^*$.
The result follows, since the universality problem for
$\NCM(1,1)$ is undecidable \cite{Baker1974}.
The idea is the following: Given a single-tape deterministic
Turing machine $Z$, we construct
$M_3$ which, when  given any input $w$, accepts if and only
if $w$ does not represent a halting sequence of configurations
of $Z$ on an initially blank tape
(by guessing a configuration $ID_i$, and extracting the symbol at a nondeterministically
chosen position $j$ within this configuration, storing $j$ in the counter, and then checking
that the symbol in position $j$ in the next configuration $ID_{i+1}$ determined by decrementing the counter 
is not compatible with the next move of the $\DTM$ from $ID_i$; see \cite{Baker1974}). 
Hence, $L(M_3)$ accepts the universe
if and only if $Z$ does not halt.
By appropriate coding, the universe can be reduced to $\{a,b\}^*$. 
\qed
\end{proof}
Note that the proof of Proposition \ref{prop7} shows: Let $G$ be a language family such that universality is undecidable. 
Then it is undecidable, given one-state $\DFA$s $M_1$ accepting $a^*$
and $M_2$ accepting $b^*$, and $L$ in $G$, whether $L(M_1) \shuffle  L(M_2) \subseteq  L$.

%
%Thus, from this proof, it can be concluded that one of $L_1$ and $L_2$ 
%only needs to be the language $\{\lambda\}$ to obtain undecidability.
%\begin{proposition}
%Let $L_1 = (a+b)^*$ and  $L_2 = \{ \lambda \}$.
%It is undecidable, given an $\NCM(1,1)$ machine $M$
%with input alphabet $\{a,b\}$,
%whether $L_1 \shuffle L_2  \subseteq  L(M)$.
%\end{proposition}

%In the above proposition, if $M$ is an 
%$\NFA$ instead of an $\NCM(1,1)$ machine, the problem is
%decidable but still intractable.
\begin{proposition}
Let $L_1 = (a+b)^*$ and $L_2 = \{\lambda\}$.
It is $\PSPACE$-complete, given an $\NFA$ $M$ with input
alphabet $\{a,b\}$, whether $L_1 \shuffle L_2 \not \subseteq 
L(M)$.
\end{proposition}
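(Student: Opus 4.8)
The plan is to reduce the whole question to the classical $\NFA$ universality problem by first simplifying the left-hand side. The key observation is that shuffling any language with $\{\lambda\}$ changes nothing: for $\Sigma = \{a,b\}$ we have $L_1 \shuffle L_2 = (a+b)^* \shuffle \{\lambda\} = \Sigma^*$. Hence $L_1 \shuffle L_2 \not\subseteq L(M)$ holds if and only if $\Sigma^* \not\subseteq L(M)$, i.e.\ if and only if $L(M) \neq \Sigma^*$. So the problem in question is, verbatim, the non-universality problem for $\NFA$s over a binary alphabet.

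For membership in $\PSPACE$, I would note that the universality problem ``$L(M) = \Sigma^*$?'' for an $\NFA$ $M$ is decidable in polynomial space: one runs the subset construction on the fly, using a nondeterministic search (guessing a word letter by letter and maintaining only the current subset of states, which needs linear space) that looks for a reachable non-accepting subset, and then appeals to $\NPSPACE = \PSPACE$ and the closure of $\PSPACE$ under complementation. Equivalently, the complement problem ``$L(M) \neq \Sigma^*$?'' is directly in $\NPSPACE = \PSPACE$. Since $L_1, L_2$ are fixed, the instance size is just the size of $M$, so this gives the desired upper bound.

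For $\PSPACE$-hardness, I would invoke the known fact that $\NFA$ universality is $\PSPACE$-complete, and that this remains true when the input alphabet is restricted to two letters (the standard Meyer--Stockmeyer-style reduction, encoding computations of a polynomial-space Turing machine as words over a fixed alphabet, can be carried out over a binary alphabet). Because $\PSPACE$ is closed under complement, the non-universality problem for binary $\NFA$s is $\PSPACE$-hard as well; and by the reduction above this is exactly the problem of deciding $L_1 \shuffle L_2 \not\subseteq L(M)$ for our fixed $L_1 = (a+b)^*$, $L_2 = \{\lambda\}$. Combining the two bounds yields $\PSPACE$-completeness.

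There is no real technical obstacle here; the only point requiring a little care is making sure the hardness is stated over a \emph{binary} alphabet (so that it matches the fixed $L_1 = (a+b)^*$), which is why I would cite the binary-alphabet version of $\NFA$ universality hardness rather than the general-alphabet statement, or else sketch the standard alphabet-reduction that compresses any fixed alphabet down to $\{a,b\}$.
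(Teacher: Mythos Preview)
Your proposal is correct and follows essentially the same approach as the paper: observe that $(a+b)^* \shuffle \{\lambda\} = (a+b)^*$, so the question reduces to non-universality of the given $\NFA$, and then invoke the known $\PSPACE$-completeness of that problem. If anything, you give more detail than the paper's own proof, which simply states the equivalence and cites the universality result; your explicit mention of the binary-alphabet restriction and the Savitch/on-the-fly subset-construction argument is extra justification beyond what the paper provides.
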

\begin{proof}
Clearly,  $(a+b)^* \shuffle  \{\lambda\}    \not \subseteq L$
if and only if  $L \not = (a+b)^*$.   The result follows, since
it is known that this question is $\PSPACE$-complete
(see, e.g., \cite{GarryJohnson}).
\qed
\end{proof}

\vskip .25cm

\noindent
{\bf Remark 2.}
In Proposition \ref{prop7},
if $M_1$ and $M_2$ are $\DFA$s accepting finite languages, and $L$ is a language in any family with a decidable membership problem, then it is decidable whether $L(M_1) \shuffle L(M_2) \subseteq L$.
This is clearly true by enumerating all strings in $L(M_1) \shuffle L(M_2)$
and testing membership in $L$.

Next, shuffle over unary alphabets is considered.

\begin{proposition}
\label{unarycase}
It is decidable, given languages $L_1, L_2, L_3$ over alphabet $\{a\}$ accepted
by $\NPCM$s, whether
 $L_1  \shuffle  L_2  \subseteq L_3$.
\end{proposition}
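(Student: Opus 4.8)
\textit{Proof proposal.} The plan is to exploit two facts: over a one‑letter alphabet the shuffle of $L_1$ and $L_2$ is nothing but their concatenation, and every $\NPCM$ language has an \emph{effectively} computable semilinear Parikh image. First I would observe that for unary words $a^m \shuffle a^n = \{a^{m+n}\}$, so $L_1 \shuffle L_2 = L_1 L_2$, and if $S_i = \{\, |w| \mid w \in L_i \,\} \subseteq \mathbb{N}_0$ denotes the length set of $L_i$, then the length set of $L_1 \shuffle L_2$ is the Minkowski sum $S_1 + S_2 = \{\, m+n \mid m \in S_1,\ n \in S_2 \,\}$. Thus the whole question reduces to deciding $S_1 + S_2 \subseteq S_3$.

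Next I would invoke the effective semilinearity of $\NPCM$: given an $\NPCM$ $M_i$, one can compute a semilinear representation of $\psi(L(M_i))$ (Ibarra). Over the alphabet $\{a\}$ this means each $S_i$ is an \emph{effectively given} ultimately periodic subset of $\mathbb{N}_0$, presented as a finite union of linear sets $\{\, c + i p \mid i \in \mathbb{N}_0 \,\}$; equivalently, each $L(M_i)$ is effectively a regular unary language. I would then note that $S_1 + S_2$ is again effectively semilinear: the sum of two linear sets $\{c + ip\}$ and $\{c' + jp'\}$ is the semilinear set $\{ (c+c') + ip + jp' \mid i,j \in \mathbb{N}_0\}$, and a Minkowski sum distributes over the finite unions, so a semilinear representation of $S_1 + S_2$ can be written down from those of $S_1$ and $S_2$. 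Hence $L_1 \shuffle L_2$ is an effectively computable ultimately periodic (regular) unary language.

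Finally, $L_1 \shuffle L_2 \subseteq L_3$ holds iff $S_1 + S_2 \subseteq S_3$, and inclusion of one effectively given ultimately periodic set of naturals in another is decidable: both sets are eventually periodic with periods dividing some common modulus $P$ and become periodic past some threshold $N$, both computable from the representations, so it suffices to check membership in $S_3$ for every element of $S_1+S_2$ below $N+P$. (Equivalently one decides inclusion of the two regular unary languages directly.) There is no deep obstacle here; the only non‑routine ingredient is the cited effective semilinearity of $\NPCM$, and the remaining work — verifying that sumsets of semilinear sets are effectively semilinear and that inclusion of ultimately periodic sets is decidable by a bound built from the lcm of the periods — is elementary but must be stated carefully.
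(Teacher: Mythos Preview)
Your proposal is correct and follows essentially the same approach as the paper: both rely on the effective semilinearity of $\NPCM$ languages (Ibarra) to conclude that each unary $L_i$ is effectively regular, after which the inclusion $L_1 \shuffle L_2 \subseteq L_3$ is decidable by standard means. The paper's proof is a two-sentence sketch that stops at ``hence, the languages $L_1, L_2, L_3$ can be accepted by $\DFA$s over a unary alphabet,'' leaving the remaining closure and decidability facts implicit; your version spells out the same reduction in more detail (shuffle $=$ concatenation $=$ Minkowski sum of length sets, closure of semilinear sets under sum, decidability of inclusion for ultimately periodic sets), but there is no genuine difference in strategy.
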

\begin{proof}
It is known that the Parikh map of the language accepted by any $\NPCM$
is an effectively computable semilinear set (in this case over
$\mathbb{N}$) \cite{Ibarra1978}
and, hence, the languages $L_1, L_2, L_3$ can be accepted by
$\DFA$s over a unary alphabet.  %The result follows.
\qed
\end{proof} 

\begin{proposition}
%Let the alphabet be  $\{a\}$.
It is $\NP$-complete to decide, for an $\NFA$ $M$ over alphabet $\{a\}$, whether
$a^* \shuffle  \{\lambda\}  \not \subseteq L(M)$.
\end{proposition}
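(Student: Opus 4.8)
The plan is to reduce from a known $\NP$-complete problem, exploiting the fact that over a unary alphabet $a^* \shuffle \{\lambda\} = a^*$, so the question ``$a^* \shuffle \{\lambda\} \not\subseteq L(M)$'' is just ``$L(M) \neq a^*$'', i.e.\ non-universality of a unary $\NFA$. Membership in $\NP$ is the easy direction: if $L(M) \neq a^*$ then there is a witness word $a^n$ with $n < |Q|$ not accepted by $M$ (since if all words shorter than $|Q|$ are accepted, then by a pumping/path argument every longer word is accepted too, using that $M$ has a single input letter), so a nondeterministic Turing machine guesses such an $n$, writes $a^n$, and checks $a^n \notin L(M)$ in polynomial time; the bound on $n$ keeps the guess polynomially sized.

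For $\NP$-hardness I would reduce from a classical problem known to be $\NP$-complete for unary $\NFA$ non-universality — for instance, the fact (going back to Stockmeyer and Meyer, and exploited in work on unary automata) that deciding whether a unary $\NFA$ accepts $a^*$ is $\coNP$-complete, equivalently non-universality is $\NP$-complete. The cleanest route is to encode an instance of a numerical $\NP$-complete problem: given positive integers (periods) $p_1, \dots, p_k$ presented in unary, one builds in polynomial time an $\NFA$ with a disjoint union of cycles of lengths $p_1, \dots, p_k$ such that $L(M) = \{a^n \mid n \not\equiv 0 \pmod{p_i} \text{ for some } i\}$ or a similar combinatorial condition, so that $L(M) = a^*$ iff a certain covering/simultaneous-congruence condition holds; the standard source is the reduction from \textsf{3SAT} or from the ``simultaneous incongruences'' problem to unary $\NFA$ universality. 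I would cite the relevant reference (the same body of literature already invoked for $\PSPACE$-completeness of $\NFA$ universality, restricted to the unary case) and simply observe that the translation to our formulation is immediate: the constructed $\NFA$ $M$ satisfies $a^* \shuffle \{\lambda\} \not\subseteq L(M)$ precisely when the source instance is a ``no'' instance (or ``yes'' instance, depending on polarity).

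The only real subtlety, and the step I expect to need the most care, is getting the polarity and the size bounds exactly right so that the reduction is genuinely polynomial time: unary encodings of the moduli must themselves be polynomial in the instance size, and one must make sure the witness-word length bound used for the $\NP$ upper bound ($n$ bounded by the number of states, or more precisely by the least common structure forcing periodicity) is correct for $\NFA$s rather than $\DFA$s — here the relevant fact is that a unary $\NFA$ with $m$ states accepting a non-universal language rejects some word of length less than $m$, which follows from the subset-construction view (a reachable ``dead'' subset is reached within $2^m$ steps, but in fact within $m$ steps it suffices to find a rejected short word by a direct argument on the $\NFA$'s transition graph). I would state this short-witness lemma explicitly, prove it in one or two lines, and then assemble the two directions into the claimed $\NP$-completeness.
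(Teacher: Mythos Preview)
Your overall strategy---rewriting $a^*\shuffle\{\lambda\}\not\subseteq L(M)$ as $L(M)\neq a^*$ and invoking the known $\NP$-completeness of non-universality for unary $\NFA$s---is exactly what the paper does. The paper's entire proof is two sentences: it makes this observation and cites Garey--Johnson for the $\NP$-completeness of the unary-$\NFA$ universality problem. So at the level of approach you are aligned.

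Where your proposal goes wrong is in the $\NP$ upper bound, which you attempt to prove rather than cite. The claim ``if $L(M)\neq a^*$ then some $a^n$ with $n<|Q|$ is rejected'' is false for $\NFA$s. A standard counterexample: take pairwise distinct primes $p_1,\dots,p_k$ and let $M$ be the disjoint union of $k$ cycles, cycle $i$ having states $0,\dots,p_i-1$ with initial state $0$ and all states accepting except $p_i-1$. Then $M$ rejects $a^n$ iff $n\equiv -1\pmod{p_i}$ for every $i$, i.e.\ iff $n\equiv -1\pmod{\prod_i p_i}$; the shortest rejected word has length $\prod_i p_i-1$, while $|Q|=\sum_i p_i$. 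With primes $2,3,5$ this gives $|Q|=10$ but shortest rejected word $a^{29}$. The ``pumping/path argument'' you sketch is valid for $\DFA$s, not $\NFA$s, and your fallback claim that ``within $m$ steps it suffices'' is equally wrong.

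The correct $\NP$ argument (which is why the result holds) is different: by Chrobak's normal form the shortest rejected word has length at most $e^{O(\sqrt{m\ln m})}<2^m$, so its length $n$ fits in $O(m)$ bits; one guesses $n$ in binary and checks $a^n\notin L(M)$ in polynomial time by computing the $n$th power of the $m\times m$ Boolean transition matrix via repeated squaring. If you want to supply details rather than cite, this is the lemma you need.
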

\begin{proof}
Clearly,  $a^* \shuffle  \{\lambda\}    \not \subseteq L$
if and only if  $L \not = a^*$.   The result follows, since
it is known that this question is $\NP$-complete
(see, e.g., \cite{GarryJohnson}).
\qed
\end{proof}

For the case when the unary languages $L_1$ and $L_2$ are finite:
\begin{proposition}
It is polynomial-time decidable,
given two finite unary languages $L_1$ and $L_2$
(where the lengths of the strings in $L_1$ and $L_2$ are represented
in binary) and a unary language $L_3$ accepted by an $\NFA$ $M$, all over the same letter,
whether $L_1 \shuffle L_2 \subseteq L_3$.
\end{proposition}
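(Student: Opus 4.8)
The plan is to reduce everything to a question about which residues modulo a small set of periods appear in $L_3$, exploiting the fact that a unary $\NFA$ has a very constrained structure. First I would recall the standard fact (Chrobak normal form) that a unary $\NFA$ $M$ with $n$ states is equivalent to one whose language is a finite union $L(M) = D \cup \bigcup_{i} \{a^{c_i + j p_i} \mid j \ge 0\}$, where $D$ is a finite ``tail'' set and each $p_i \le n$; more to the point, $L(M)$ is \emph{ultimately periodic}: there is a threshold $N = O(n^2)$ and a period $p = \mathrm{lcm}$ of the cycle lengths (so $\log p = O(n)$) such that for all $m \ge N$, $a^m \in L(M)$ iff $a^{m+p} \in L(M)$. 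The threshold $N$ and the period $p$ are computable in time polynomial in $n$, and membership $a^m \in L(M)$ for a single $m$ (even with $m$ written in binary) is decidable in polynomial time by a standard matrix-power / repeated-squaring argument on the $\NFA$'s transition matrix, or simply because one only needs to know $m$'s value relative to $N$ and its residue mod $p$.

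Next I would observe that $L_1 \shuffle L_2$ over a unary alphabet is simply $\{a^{s+t} \mid a^s \in L_1, a^t \in L_2\}$, i.e. its set of lengths is the sumset $S = \{s + t : a^s \in L_1, a^t \in L_2\}$. Since $L_1, L_2$ are finite with lengths given in binary, $S$ has size at most $|L_1|\cdot|L_2|$, which is polynomial in the input, and each element of $S$ is an integer of polynomially many bits. So $L_1 \shuffle L_2 \subseteq L_3$ holds iff for every $m \in S$ we have $a^m \in L(M)$. This is a polynomial number of membership queries, each decidable in polynomial time by the previous paragraph, which finishes the argument.

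The step I expect to require the most care is making the membership test $a^m \in L(M)$ genuinely polynomial when $m$ is astronomically large (doubly exponential in the bit-length of the input, since the lengths in $L_1, L_2$ are binary). The clean way is: compute the ultimate-period data $(N, p)$ for $M$ in $\mathrm{poly}(n)$ time; if $m < N$, decide membership directly; otherwise reduce $m$ to $m' = N + ((m - N) \bmod p)$, which is a number of size $O(N + p) = 2^{O(n)}$ and hence has $O(n)$ bits, and decide $a^{m'} \in L(M)$ — either by another repeated-squaring on the transition matrix, or by noting $m'$ is small enough that the naive $O(m' \cdot n^2)$-style reasoning can be replaced by $O(n)$ matrix multiplications. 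I would also double-check the edge cases $\lambda \in L_1$ or $L_2$, and the possibility that $L_1$ or $L_2$ is empty (then the inclusion is trivial). The only subtlety is carrying out the modular reduction on the binary-encoded $m$, which is elementary. Everything else is bookkeeping, so the proof is short.
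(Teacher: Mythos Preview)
Your proposal is correct and follows essentially the same line as the paper: enumerate the at most $|L_1|\cdot|L_2|$ sums $d = d_1 + d_2$ with $a^{d_1}\in L_1$, $a^{d_2}\in L_2$, and test $a^d \in L(M)$ for each in polynomial time by repeated squaring of the $n\times n$ Boolean transition matrix of $M$. The paper's argument is simply the more direct of the two options you mention: it goes straight to the matrix-power method and skips the Chrobak normal form detour entirely, since $O(\log d)$ Boolean matrix multiplications already suffice and $\log d$ is polynomial in the input; the ultimate-periodicity reduction is sound but buys nothing here.
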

\begin{proof}
Let $r$ be the sum of the cardinalities of $L_1$ and $L_2$,
$s$ be the length of the binary representation of the
longest string in $L_1 \cup L_2$,
and $t$ be the length of binary representation of $M$.

We represent the $\NFA$ $M$ by an $n \times n$
Boolean matrix $A_M$, where $n$ is the number of
states of $M$, and $A_M(i,j) = 1$ if there is a transition
from state $i$ to state $j$; $0$ otherwise.

Let $x$ be the binary representation of a unary string $a^d$,
where $d = d_1 + d_2$, $a^{d_1} \in L_1$, and $a^{d_2} \in L_2$.
To determine if $a^d$ is in $L_3$, we compute 
$A_M^d$ and check that for some accepting state $p$, the $(1,p)$
entry is $1$.  Since the computation of $A_M^d$ can be
accomplished in $O(\log~d)$ Boolean matrix multiplications
(using the ``right-to-left binary method for exponentiation'' 
technique used to compute
$x^m$ where $m$ is a positive integer in $O(\log~m)$ multiplications, described in Section 4.6.3 of
\cite{knuth2}), and since matrix multiplication can be calculated in
polynomial time,
it follows that we can decide whether $L_1 \shuffle L_2 \subseteq L_3$
in time polynomial in $r + s +t$.
\qed
\end{proof}

However, when the alphabet of the finite languages 
$L_1, L_2$ is at least binary:
\begin{proposition}
It is $\NP$-complete to determine, given finite languages $L_1$ and $L_2$,
and an $\NPDA$ $M$ accepting $L_3$, whether 
$L_1 \shuffle L_2 \not\subseteq L_3$.
\end{proposition}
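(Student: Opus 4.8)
The plan is to establish membership in $\NP$ and $\NP$-hardness separately, with the hardness part being essentially inherited from an earlier result of this paper.

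For membership in $\NP$, I would first observe that every word of $L_1 \shuffle L_2$ has length $|u|+|v|$ for some $u \in L_1$ and $v \in L_2$, and since $L_1$ and $L_2$ are finite (whether presented explicitly or by finite-language automata, which may be taken acyclic), every such length is bounded polynomially in the size of the input. Hence a nondeterministic procedure can guess $u \in L_1$, $v \in L_2$, and an interleaving $w \in u \shuffle v$, all of polynomial size, then verify $u \in L_1$, $v \in L_2$, and $w \in u \shuffle v$ in polynomial time, and finally verify $w \notin L_3$. The only non-routine point is that this last check is polynomial: membership in a context-free language given by an $\NPDA$ is decidable in polynomial time (e.g.\ convert to a grammar and run CYK), hence so is non-membership. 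This puts the problem in $\NP$.

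For $\NP$-hardness, I would reduce directly from the problem of Proposition~\ref{wordssubset}, which asserts that deciding $u \shuffle v \not\subseteq L(M)$ for an $\NFA$ $M$ and words $u,v$ over a binary alphabet is $\NP$-complete. Given such an instance, take $L_1 = \{u\}$ and $L_2 = \{v\}$, which are finite languages, and note that the $\NFA$ $M$ is in particular an $\NPDA$ (one that never touches its stack). Then $L_1 \shuffle L_2 = u \shuffle v$, so $L_1 \shuffle L_2 \not\subseteq L(M)$ iff $u \shuffle v \not\subseteq L(M)$, and the reduction is computable in polynomial (indeed logarithmic) space.

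I do not anticipate any real obstacle: the hardness is obtained verbatim from Proposition~\ref{wordssubset} by viewing a singleton as a finite language and an $\NFA$ as an $\NPDA$, so the reduction is immediate. The only mild subtlety lies on the membership side, namely invoking polynomial-time $\NPDA$ membership testing so that the guessed candidate word can be checked to lie outside $L_3$ efficiently.
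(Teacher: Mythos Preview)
Your proposal is correct and matches the paper's own proof essentially verbatim: the paper derives $\NP$-hardness directly from Proposition~\ref{wordssubset} (singletons as finite languages, $\NFA$ as $\NPDA$), and for membership in $\NP$ it guesses $u\in L_1$, $v\in L_2$, and $w\in u\shuffle v$, then checks $w\notin L_3$ using polynomial-time $\NPDA$ membership. Your added remark about the input representation of $L_1,L_2$ and the polynomial bound on word lengths is a welcome clarification the paper leaves implicit.
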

\begin{proof}
$\NP$-hardness follows from Proposition \ref{wordssubset}. To show that it is in $\NP$,
guess a word $u\in L_1$, and $v \in L_2$, guess a word $w$ of
length $|u|+|v|$, and verify that it is in $u \shuffle v$ \cite{McquillanBiegler}. Then,
verify that $w\notin L_3$, which  can be done in polynomial time since the
membership problem for $\NPDA$s can be solved
in polynomial time.
\qed
\end{proof}

\begin{proposition}
\label{DPDAonedirection}
It is undecidable, given languages $L$ and $L_1$ accepted by 1-reversal-bounded $\DPDA$s
(resp., $\DCA$s), whether $L_1 \shuffle \{\lambda\} \subseteq L$.
\end{proposition}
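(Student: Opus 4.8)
The plan is to start from the trivial but crucial identity $L_1\shuffle\{\lambda\}=L_1$, so that the stated question is literally the containment problem ``is $L_1\subseteq L$?'' restricted to the relevant family. Thus it suffices to show that containment is undecidable (i) for $1$-reversal-bounded $\DPDA$s, and (ii) for $\DCA$s. For (i) I would reduce from Post's Correspondence Problem \cite{HU}. Given pairs $(u_1,v_1),\dots,(u_n,v_n)$, set $L_1=\{\,i_k i_{k-1}\cdots i_1\,\#\,u_{i_1}u_{i_2}\cdots u_{i_k}\mid k\ge 1,\ i_j\in\{1,\dots,n\}\,\}$ and let $L'$ be the same language with every $u$ replaced by the corresponding $v$. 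Each of $L_1,L'$ is accepted by a one-turn $\DPDA$: push the index symbols as they are read, switch to a popping phase on reading $\#$, and then repeatedly pop the topmost index $i_j$ and match the next input symbols against the fixed block $u_{i_j}$ (resp. $v_{i_j}$); every move is forced and every push precedes every pop, so the pushdown reverses once. Plainly $L_1\cap L'\neq\emptyset$ iff the instance has a solution.

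Next I would take $L:=\overline{L'}$. Deterministic one-turn pushdown languages are effectively closed under complement: complete the automaton by routing every crashing or $\lambda$-diverging computation into a sink state that consumes the remaining input while leaving the pushdown frozen, then exchange final and non-final states; freezing the pushdown introduces no new reversal, so the result is again a one-turn $\DPDA$. Hence $L$ is accepted by a $1$-reversal-bounded $\DPDA$, and $L_1\subseteq L$ iff $L_1\cap L'=\emptyset$ iff the PCP instance has no solution. Since PCP is undecidable, so is deciding $L_1\shuffle\{\lambda\}\subseteq L$ in this case.

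For (ii) I would reduce from the halting problem of a deterministic two-counter (Minsky) machine $Z$ on empty input \cite{Minsky}. Encode a configuration $(q,c_1,c_2)$ of $Z$ as the block $q\,a^{2^{c_1}3^{c_2}}$, so that each configuration carries a single number, and encode a computation as the concatenation $\kappa_0\kappa_1\cdots\kappa_T$ of such blocks. Let $L_1$ be the set of strings of this shape in which $\kappa_0$ is initial, $\kappa_T$ is halting, and for every even $i<T$ the pair $\kappa_i,\kappa_{i+1}$ is a legal step of $Z$; let $L'$ be the analogous language using odd $i$. A machine for $L_1$ checks one such step by reading $\kappa_i$ while loading its single counter with the value predicted for the $a$-count of $\kappa_{i+1}$ — multiplying or dividing the running count by $2$ or by $3$ for an increment/decrement, or leaving it fixed while verifying non-divisibility by $2$ or $3$ for a zero test, all dictated by the state symbol just read — and then reading $\kappa_{i+1}$ while decrementing, accepting the step exactly when the counter hits $0$ as the block ends. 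Between two consecutive checked steps the counter is empty, so one counter suffices even though it reverses $\Theta(T)$ times (permitted for a $\DCA$), and every move is deterministic. Then $L_1\cap L'$ is exactly the set of encodings of halting computations of $Z$, and both $L_1,L'$ are $\DCA$ languages; taking $L:=\overline{L'}$ (again a $\DCA$ language, since deterministic one-counter languages are closed under complement via the same dead-sink construction) yields $L_1\subseteq L$ iff $Z$ does not halt, which is undecidable.

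I expect case (ii) to be the main obstacle. A single counter that is reversal-bounded is too weak for any such reduction — indeed, for $\DCM(k,r)$ machines containment is decidable, because $\DCM$ is closed under complement \cite{Ibarra1978} and nonemptiness of $\NCM$ is decidable \cite{Gurari1981220} — so the argument must exploit unbounded reversals together with the even/odd partition, designed precisely so that each one-counter machine needs its counter only locally, across a single Minsky step. Case (i), by contrast, is routine once one verifies that complementation preserves one-turn-ness of the pushdown.
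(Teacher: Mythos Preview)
Your proposal is correct and follows the same high-level reduction as the paper: observe that $L_1\shuffle\{\lambda\}=L_1$, so the question is containment, then pass to disjointness via closure under complement. The difference is entirely one of economy. The paper's proof is three lines: it simply cites that the disjointness problem for $1$-reversal-bounded $\DPDA$s (resp., $\DCA$s) is already known to be undecidable \cite{Ibarra1978}, and then notes $L\cap L_1=\emptyset \Leftrightarrow L_1\subseteq\overline{L}$ with $\overline{L}$ still in the class. You instead re-derive the undecidability of disjointness from scratch --- via PCP for one-turn $\DPDA$s and via a G\"odel-coded Minsky simulation for $\DCA$s. Both constructions are standard and your sketches are essentially right; what you gain is self-containedness, what you lose is brevity.

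One small technical point worth tightening in case (ii): when the Minsky instruction at state $q$ is a conditional decrement (``if $c_j=0$ goto $p$ else decrement and goto $r$''), the value you must load into the counter while scanning $\kappa_i$ depends on which branch fires, and that is not known until you have seen the whole $a$-block. The cleanest fix is to load the \emph{current} value $N$ (one increment per $a$) while tracking $N\bmod 6$ in the finite control, then read the \emph{next} state symbol to learn which branch was taken, check its consistency with the residue, and only then discharge the counter against the $a$-block of $\kappa_{i+1}$ at the appropriate ratio ($1{:}1$, $2{:}1$, $1{:}2$, etc.). This keeps the machine deterministic with a single counter that empties between checked pairs, exactly as you want. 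With that adjustment your argument goes through; but given that the result you are reconstructing is already in the literature, the paper's citation-based proof is the more efficient route.
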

\begin{proof}
It is known that the disjointness problem for
1-reversal-bounded $\DPDA$s (resp., $\DCA$s) is undecidable 
(see, e.g., \cite{Ibarra1978}).
Let $L, L_1 \in \DPDA$. Then $L \cap L_1 = \emptyset$ if
and only if $L_1 \subseteq \overline{L}$ if and only if
$L_1 \shuffle \{\lambda\} \subseteq \overline{L}$, and 
$\overline{L}$ is in $\DPDA$ as it is closed under complement.
\qed
\end{proof}

\section{Testing Inclusion of a Language in the Shuffle of Languages}
\label{sec:converse}

In this section, the reverse containment is addressed. That is,
given $L, L_1, L_2$, is $L \subseteq L_1 \shuffle L_2$? This
question will depend on the language families where
$L, L_1$, and $L_2$ belong.

First, for 1-reversal-bounded $\DPDA$s, the following is immediate.
The proof is identical to that of Proposition
\ref{DPDAonedirection}.
\begin{proposition}
It is undecidable, given languages $L$ and $L_1$
accepted by 1-reversal-bounded $\DPDA$s (resp., $\DCA$s), whether
$L \subseteq L_1 \shuffle \{\lambda\}$.
\end{proposition}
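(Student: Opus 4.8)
The plan is to exploit the fact that shuffling with the singleton $\{\lambda\}$ changes nothing: for every language $L_1$ we have $L_1 \shuffle \{\lambda\} = L_1$. Hence the problem ``$L \subseteq L_1 \shuffle \{\lambda\}$'' is literally the inclusion problem ``$L \subseteq L_1$'', and I would reduce to it the disjointness problem for $1$-reversal-bounded $\DPDA$s (resp.\ $\DCA$s), which is known to be undecidable (see, e.g., \cite{Ibarra1978}); this is exactly the template of the proof of Proposition \ref{DPDAonedirection}, with the roles of the two operands interchanged.

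Concretely, given an instance $(A,B)$ of disjointness, with $A$ and $B$ accepted by $1$-reversal-bounded $\DPDA$s (resp.\ $\DCA$s), I would set $L := A$ and $L_1 := \overline{B}$. Since this machine class is (effectively) closed under complement, $L_1$ is again accepted by a $1$-reversal-bounded $\DPDA$ (resp.\ $\DCA$), computable from the machine for $B$. Then $A \cap B = \emptyset$ iff $A \subseteq \overline{B}$ iff $L \subseteq L_1 = L_1 \shuffle \{\lambda\}$, so any decision procedure for the stated inclusion would decide disjointness, a contradiction. The verification of the ``iff'' chain is immediate and the bookkeeping is trivial once the identity $L_1 \shuffle \{\lambda\} = L_1$ is in hand.

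The one point that deserves a sentence of care — and it is routine, being implicit in the closure results of \cite{Ibarra1978} and in the proof of Proposition \ref{DPDAonedirection} — is that complementation stays inside the $1$-reversal-bounded class. Complement is obtained by the usual swap of accepting and non-accepting states after normalizing the automaton so that it scans its entire input (eliminating non-reading $\lambda$-loops and adjoining a sink state). None of these modifications touches the stack/counter behaviour, so the single reversal is preserved. I do not expect any genuine obstacle here; the content of the proposition is essentially the observation that shuffle with $\{\lambda\}$ is the identity, which transports the undecidability of Proposition \ref{DPDAonedirection} to the reverse containment.
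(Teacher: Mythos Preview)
Your proposal is correct and takes essentially the same approach as the paper: the paper simply states that the proof is identical to that of Proposition~\ref{DPDAonedirection}, which is exactly the reduction from disjointness via complementation that you describe. Your extra sentence verifying that complementation preserves the $1$-reversal bound is a welcome clarification that the paper leaves implicit.
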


The following proposition follows from the known undecidability of the universality problem for $\NCM(1,1)$ \cite{Baker1974}.
\begin{proposition} If $L_1 \in \NCM(1,1), L_2 = \{\lambda\}$, and $L$ is the fixed regular language $\Sigma^*$, then it is undecidable whether $L \subseteq L_1 \shuffle L_2$.
\end{proposition}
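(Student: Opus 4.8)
The plan is to reduce directly from the universality problem for $\NCM(1,1)$, which is known to be undecidable \cite{Baker1974} and is exactly the tool already exploited in the proof of Proposition \ref{prop7}. The single observation that drives everything is that shuffling with the one-word language $\{\lambda\}$ does nothing: for any $L_1 \subseteq \Sigma^*$ we have $L_1 \shuffle \{\lambda\} = L_1$, because in a term $u_1 v_1 \cdots u_n v_n$ of the shuffle with $v = \lambda$ the factorization $\lambda = v_1 \cdots v_n$ forces every $v_i = \lambda$, leaving exactly the word $u \in L_1$.

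Given this identity, the instance $L \subseteq L_1 \shuffle L_2$ with $L = \Sigma^*$ and $L_2 = \{\lambda\}$ collapses to $\Sigma^* \subseteq L_1$, i.e.\ $L_1 = \Sigma^*$, since $L_1 \subseteq \Sigma^*$ always holds. So a decision procedure for the stated problem would decide universality for $\NCM(1,1)$ machines. Concretely, I would take an arbitrary $\NCM(1,1)$ machine as the one defining $L_1$, pair it with the fixed machines for $\{\lambda\}$ and for $\Sigma^*$, and observe that this map is the identity on the input machine and hence trivially computable; by the undecidability of $\NCM(1,1)$ universality the problem is undecidable.

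There is essentially no obstacle here: all the content sits in the cited undecidability of $\NCM(1,1)$ universality together with the trivial identity $L_1 \shuffle \{\lambda\} = L_1$. The only point requiring a word of care is matching alphabets — the $\Sigma$ in the statement should be taken to be the input alphabet of the $\NCM(1,1)$ machine under test, and if a single fixed $\Sigma$ (say binary) is wanted, Baker and Book's construction can be coded over $\{a,b\}^*$, exactly as remarked at the end of the proof of Proposition \ref{prop7}.
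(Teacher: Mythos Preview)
Your proposal is correct and takes essentially the same approach as the paper: the paper simply remarks that the proposition follows from the known undecidability of the universality problem for $\NCM(1,1)$ \cite{Baker1974}, which is exactly the reduction you spell out via the identity $L_1 \shuffle \{\lambda\} = L_1$.
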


\noindent
{\bf Remark 3:} 
We are currently examining the question of whether
it is undecidable, given a regular language $L$ and languages
$L_1, L_2$ accepted by $\DPDA$s (resp., $\DCA$s, $\DPCM$s), whether
$L \subseteq L_1 \shuffle L_2$.  In particular, we are 
interested in the simple case when $L$ and $L_1$ are
regular and $L_2$ is in $\DCM(1,1)$.

\vskip .25cm

Next, we examine some families where decidability occurs.
The following is true since the shuffle of regular languages is regular, and the containment problem is decidable for regular languages.

\begin{proposition}
If $L_1, L_2, L$ are all regular languages, then it is decidable whether $L \subseteq L_1 \shuffle L_2$.
\end{proposition}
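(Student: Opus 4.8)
The plan is to reduce the problem to the decidability of language containment for regular languages, using closure properties that are effective. First I would observe that the class of regular languages is effectively closed under shuffle: given DFAs (or NFAs) $M_1$ and $M_2$ for $L_1$ and $L_2$, one constructs a product automaton whose state set is $Q_1 \times Q_2$, and on reading a letter $a$ nondeterministically advances either the first component (simulating $M_1$) or the second component (simulating $M_2$), accepting when both components are in accepting states. This yields an NFA $M$ accepting $L_1 \shuffle L_2$, and it is computable from $M_1$ and $M_2$. (This is exactly the construction already invoked, e.g., in the proof of Proposition~\ref{Prop8}.)

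Next I would use the fact that containment of regular languages is decidable: given an NFA for $L$ and an NFA $M$ for $L_1 \shuffle L_2$, we have $L \subseteq L_1 \shuffle L_2$ if and only if $L \cap \overline{L(M)} = \emptyset$. One determinizes $M$ by the subset construction, complements the resulting DFA, intersects with an automaton for $L$ by the standard product construction, and tests emptiness of the result by a reachability check. Each of these steps is effective, so the whole procedure is an algorithm; hence the containment question is decidable.

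I do not expect any real obstacle here, since everything rests on standard, well-known effective closure and decidability properties of the regular languages; the statement is essentially a warm-up before the harder results that follow in the section (for instance the case where $L$ is context-free and $L_1, L_2$ are commutative semilinear). The only thing to be careful about is to phrase the argument in terms of \emph{effective} closure under shuffle and \emph{decidability} of containment, rather than merely noting that the relevant classes are closed, so that the conclusion genuinely yields a decision procedure rather than just a non-constructive statement.
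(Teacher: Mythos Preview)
Your proposal is correct and follows exactly the same approach as the paper, which simply notes that the shuffle of regular languages is regular and that containment is decidable for regular languages. You have merely unpacked the details of these two standard facts.
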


The following is true since the shuffle of an $\NPCM$ and an $\NCM$ is an $\NPCM$ language, and containment can be decided by using the decidable membership problem for $\NPCM$ \cite{Ibarra1978}.

\begin{proposition} If $L_1 \in \NPCM, L_2 \in \NCM$, and $L$ is finite, then it is decidable whether $L \subseteq L_1 \shuffle L_2$.
\end{proposition}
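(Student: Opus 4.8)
The plan is to reduce the question to a membership (word) problem for a class of machines with a decidable membership problem, exactly along the lines suggested by the statement's own hint. First I would show that, given an $\NPCM$ $M_1$ accepting $L_1$ and an $\NCM$ $M_2$ accepting $L_2$, one can effectively construct an $\NPCM$ $M$ accepting $L_1 \shuffle L_2$. The construction is the standard product-style simulation: $M$ runs $M_1$ and $M_2$ ``in parallel,'' at each step nondeterministically choosing whether the next input symbol is consumed by the $M_1$-component or the $M_2$-component, advancing that component's state and counters accordingly while the other component idles. Since $M_1$ has one pushdown plus $k_1$ reversal-bounded counters and $M_2$ has $k_2$ reversal-bounded counters (and no pushdown), the combined machine $M$ has one pushdown and $k_1 + k_2$ counters, and each counter inherits its reversal bound from the component it comes from; hence $M$ is an $\NPCM$. (Here I am using the fact that $\NPCM$ is the family with a \emph{single} unrestricted pushdown together with reversal-bounded counters, so adding the counters of $M_2$ to $M_1$ stays within $\NPCM$.)

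Second, since $L$ is finite, it is given explicitly as a finite list of words $w_1, \ldots, w_n$. Then $L \subseteq L_1 \shuffle L_2$ holds if and only if $w_i \in L(M)$ for every $i$, so it suffices to decide membership of each $w_i$ in $L(M)$. By the result of Ibarra \cite{Ibarra1978}, the membership problem for $\NPCM$ is decidable, so we simply run that decision procedure on each of the finitely many pairs $(M, w_i)$ and answer ``yes'' exactly when all of them accept. This yields a decision procedure, establishing the proposition.

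There is essentially no serious obstacle here: the only point requiring a little care is verifying that the shuffle construction does not introduce a second pushdown or violate reversal-boundedness. Both $M_1$ and $M_2$ must touch their counters only according to the rules for reversal-bounded counters, and since in the parallel simulation each counter is manipulated solely by the component that owns it, the reversal pattern on each counter in $M$ is identical to the one it had in its source machine; likewise the single pushdown is used only by the $M_1$-component, so there is no conflict. Thus the main (and only) content is the routine parallel-simulation argument for closure of $\NPCM$ under shuffle with $\NCM$, after which decidability of $\NPCM$ membership \cite{Ibarra1978} and finiteness of $L$ finish the job.
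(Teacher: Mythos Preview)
Your proposal is correct and follows essentially the same approach as the paper: construct an $\NPCM$ for $L_1 \shuffle L_2$ via the standard parallel simulation (the paper states this closure property without spelling out the construction), then use finiteness of $L$ to reduce containment to finitely many membership queries, which are decidable for $\NPCM$ by \cite{Ibarra1978}. Your write-up simply makes explicit the details the paper leaves implicit.
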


\begin{proposition} If $L \in \NPCM$, and $L_1, L_2 \in \DCM$ over disjoint alphabets, then it is decidable whether $L \subseteq L_1 \shuffle  L_2$.
\end{proposition}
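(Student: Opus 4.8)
\noindent
\textit{Proof proposal.}
The plan is to reduce the inclusion to an emptiness test for an $\NPCM$ language, exploiting that $L_1$ and $L_2$ live over disjoint alphabets. Write $\Sigma_1,\Sigma_2$ for the alphabets of $L_1,L_2$, with $\Sigma_1 \cap \Sigma_2 = \emptyset$, and let $\pi_i$ denote the projection homomorphism that erases every letter not in $\Sigma_i$. The key observation is that, since the alphabets are disjoint,
$L_1 \shuffle L_2 = \{ w \in (\Sigma_1 \cup \Sigma_2)^* \mid \pi_1(w) \in L_1 \text{ and } \pi_2(w) \in L_2\}$: any interleaving of a word of $\Sigma_1^*$ with a word of $\Sigma_2^*$ is recovered letter-by-letter by projecting back onto $\Sigma_1$ and onto $\Sigma_2$, and conversely every word over $\Sigma_1 \cup \Sigma_2$ factors into its maximal $\Sigma_1$-blocks and $\Sigma_2$-blocks, exhibiting it as a shuffle of $\pi_1(w)$ and $\pi_2(w)$.

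Consequently, $L \subseteq L_1 \shuffle L_2$ fails exactly when some $w \in L$ satisfies at least one of: (i) $w$ contains a letter outside $\Sigma_1 \cup \Sigma_2$; (ii) $\pi_1(w) \notin L_1$; or (iii) $\pi_2(w) \notin L_2$. I would decide each of the three possibilities separately. For (i), intersect $L$ with the regular set of words containing a forbidden letter and test emptiness, using that $\NPCM$ is effectively closed under intersection with regular sets and has decidable emptiness \cite{Ibarra1978}. For (ii), first apply closure of $\DCM$ under complement \cite{Ibarra1978} to get a $\DCM$ for $\overline{L_1}$ (complement within $\Sigma_1^*$), then build a $\DCM$ for $\pi_1^{-1}(\overline{L_1}) = \{ w \mid \pi_1(w) \in \overline{L_1}\}$: on a letter of $\Sigma_1$ the new machine performs the transition of the $\DCM$ for $\overline{L_1}$, on any other letter it merely advances its head with state and counters unchanged, and on the right end-marker it simulates the corresponding end-marker computation; this stays deterministic, and its counters make exactly the reversals the original machine makes, so it is a $\DCM$. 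Case (iii) is symmetric.

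It then remains to decide, for $i = 1,2$, whether $L \cap \pi_i^{-1}(\overline{L_i}) = \emptyset$. Since $\pi_i^{-1}(\overline{L_i}) \in \DCM \subseteq \NCM$ and $\NPCM$ is effectively closed under intersection with $\NCM$ (run the two machines in parallel, merging the reversal-bounded counters and retaining the single pushdown) \cite{Ibarra1978}, this intersection is an $\NPCM$ language, and its emptiness is decidable \cite{Ibarra1978}. The original inclusion holds precisely when all three emptiness tests succeed.

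The closure constructions are routine; the one point deserving care — and the real content of the argument — is the disjoint-alphabet characterization of $L_1 \shuffle L_2$ via projections, since it is exactly what lets us avoid ever having to complement a shuffle of two arbitrary $\DCM$ languages, which need not stay inside any class with decidable emptiness. Everything else is assembly of standard $\NPCM$/$\NCM$/$\DCM$ closure and decidability facts. \qed
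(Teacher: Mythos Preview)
Your proof is correct and takes essentially the same approach as the paper: both exploit that, over disjoint alphabets, $L_1 \shuffle L_2$ (equivalently, its complement) is a $\DCM$ language, and then reduce the inclusion to an $\NPCM$ emptiness test. The paper is just slightly more direct---it builds a single $\DCM$ for $L_1 \shuffle L_2$ by running $M_1$ and $M_2$ in parallel on $k_1+k_2$ counters (dispatching each input letter to the unique $M_i$ whose alphabet contains it), complements once, and intersects with $L$; your three separate emptiness tests are exactly the De~Morgan expansion of that single test.
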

\begin{proof} 
Let $M_1, M_2 \in \DCM$ over disjoint alphabets, such that
$L_1 = L(M_1)$ and $L_2 = L(M_2)$, where $M_1$ has $k_1$ counters and
$M_2$ has $k_2$ counters. Then $L_1 \shuffle L_2$ can be accepted by
a $k_1+k_2$ counter machine (by simulating $M_1$ on the first $k_1$
counters and $M_2$ on the last $k_2$ counters). Furthermore,
since $L_1, L_2$ are over disjoint alphabets, $L_1 \shuffle L_2$ is in $\DCM$ as well. Then we can construct $\overline{L_1 \shuffle L_2}$ since $\DCM$ is closed under complement \cite{Ibarra1978}, and test if $L \cap \overline{L_1 \shuffle L_2} = \emptyset$, which is true if and only if $L \subseteq L_1 \shuffle L_2$.
\qed
\end{proof}

%There still remains the following problem:
%\begin{problem}
%What about when $L$ is regular (or NCM, or NPDA),
% and $L_1$ and $L_2$ are in DCM (or DPDA)?
%\end{problem}

The following can also be shown with a proof identical to Proposition \ref{unarycase}.
\begin{proposition}
It is decidable, given languages $L_1, L_2, L$ over alphabet $\{a\}$ accepted
by $\NPCM$s, whether:
$L  \subseteq L_1 \shuffle L_2$.
\end{proposition}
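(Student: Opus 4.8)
The plan is to reduce the problem to the decidability of inclusion between unary regular languages, exactly as in the proof of Proposition \ref{unarycase}. First I would invoke the fact that the Parikh map of the language accepted by any $\NPCM$ is an effectively computable semilinear set \cite{Ibarra1978}. Since the alphabet here is $\{a\}$, a unary language is semilinear precisely when it is regular, so from the given $\NPCM$s accepting $L_1, L_2, L$ one can effectively construct unary $\DFA$s (equivalently, $\NFA$s) $M_1, M_2, M$ with $L(M_1) = L_1$, $L(M_2) = L_2$, and $L(M) = L$.

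Next I would observe that over a unary alphabet shuffle collapses to concatenation: $a^m \shuffle a^n = \{a^{m+n}\}$, so $L_1 \shuffle L_2 = \{a^{m+n} \mid a^m \in L_1,\ a^n \in L_2\} = L_1 L_2$. Because the regular languages are effectively closed under concatenation, $L_1 \shuffle L_2$ is regular, and a finite automaton for it can be built from $M_1$ and $M_2$. Finally, since containment between regular languages is decidable (complement the automaton for $L_1 \shuffle L_2$, intersect with $M$, and test emptiness \cite{HU}), we can decide whether $L \subseteq L_1 \shuffle L_2$.

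I expect no real obstacle here: the only point requiring any care — and it is a minor one — is the passage from $\NPCM$s over $\{a\}$ to unary finite automata, which relies on the effectiveness of the semilinearity result of \cite{Ibarra1978}; everything afterward is routine closure under concatenation together with the standard decidability of containment for regular languages. In short, the argument is the same one used for Proposition \ref{unarycase}, with the roles of the two sides of the inclusion interchanged.
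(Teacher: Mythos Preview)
Your proposal is correct and matches the paper's approach exactly: the paper states that the proof is identical to that of Proposition~\ref{unarycase}, namely reducing the unary $\NPCM$ languages to $\DFA$s via the effective semilinearity of $\NPCM$ \cite{Ibarra1978}, after which the question becomes one about regular languages. Your added remarks that unary shuffle coincides with concatenation and that regular containment is decidable simply spell out the routine details the paper leaves implicit.
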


Lastly, a large family is presented for which decidability
follows. These questions will be examined next for 
commutative languages. First, two lemmas are needed.
\begin{lemma} \label{lemma1}
Let $\Sigma = \{a_1, \ldots, a_m\}$.
We can effectively construct, given semilinear
sets $Q_1, Q_2 \subseteq \mathbb{N}^m$,
a semilinear set $Q$ such that $Q = Q_1 + Q_2$.
Further, we can construct a $\DCM$ $M_Q$ to accept
$\psi^{-1}(Q)$.
\end{lemma}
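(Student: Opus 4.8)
The plan is to prove Lemma~\ref{lemma1} in two stages: first construct a symbolic description of $Q = Q_1 + Q_2$ as a semilinear set, and then build a $\DCM$ accepting $\psi^{-1}(Q)$ from that description. For the first stage, I would write $Q_1 = \bigcup_{i=1}^{s} L_i$ and $Q_2 = \bigcup_{j=1}^{t} L_j'$ as finite unions of linear sets, where $L_i$ has constant $\vec{c_i}$ and periods $\vec{p_{i,1}},\ldots,\vec{p_{i,n_i}}$, and similarly for $L_j'$ with constant $\vec{d_j}$ and periods $\vec{q_{j,1}},\ldots,\vec{q_{j,m_j}}$. Then $L_i + L_j'$ is itself the linear set with constant $\vec{c_i} + \vec{d_j}$ and period set equal to the union of the two period sets, because an element of $L_i + L_j'$ is obtained by choosing one nonnegative multiplier for each period on each side independently. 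Hence $Q_1 + Q_2 = \bigcup_{i,j} (L_i + L_j')$ is a finite union of linear sets, i.e.\ a semilinear set, and all of this is clearly effective: we simply form the $st$ pairs and concatenate period lists.

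For the second stage, I would construct the $\DCM$ $M_Q$ accepting $\psi^{-1}(Q)$ over $\Sigma = \{a_1,\ldots,a_m\}$ directly from the linear-set description. Since $Q$ is a finite union of linear sets and $\DCM$ is closed under union (with a fixed number of counters, one can simulate the branches sequentially after first deterministically reading and storing the Parikh vector of the input), it suffices to handle a single linear set $L$ with constant $\vec{c}$ and periods $\vec{p_1},\ldots,\vec{p_k}$. A deterministic reversal-bounded multicounter machine can: read the input $w$ left to right, counting the number of occurrences of each letter into $m$ designated counters (each used in increment-only mode so far), then at the endmarker check membership of $\psi(w)$ in $L$. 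The membership test is the crux: we need to decide whether $\psi(w) - \vec{c}$ is a nonnegative integer combination of $\vec{p_1},\ldots,\vec{p_k}$. This is exactly the kind of test that $\NCM$ (and, with care, $\DCM$) can perform; indeed it is standard that for a fixed linear set, $\psi^{-1}(L)$ is accepted by a $\DCM$ --- one can, for instance, first subtract $\vec{c}$, then for $\ell = 1, \ldots, k$ guess-free iterate, or appeal to the known fact that every semilinear language has its commutative preimage in $\DCM$. I would phrase this as: it is well known (see, e.g., \cite{Ibarra1978}) that for any semilinear set $S \subseteq \mathbb{N}_0^m$, the language $\psi^{-1}(S) \subseteq \{a_1,\ldots,a_m\}^*$ is effectively a $\DCM$ language, since a deterministic machine can tally the letter counts into counters and then verify the (semilinear) membership condition by a further bounded sequence of counter operations.

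The main obstacle I anticipate is making the $\DCM$ construction genuinely deterministic rather than merely $\NCM$: the natural membership test for a linear set guesses the coefficients $i_1,\ldots,i_k$, and determinism requires replacing these guesses by a deterministic search. The clean way around this is to note that over a unary encoding of each coordinate the condition can be checked by solving, in a fixed pattern dictated by the period vectors, a bounded system --- or, more simply, to observe that $\psi^{-1}(\text{linear set})$ is a bounded language and invoke the known characterization that bounded semilinear languages are accepted by $\DCM$s. I would not grind through the counter bookkeeping; instead I would cite the semilinearity--$\DCM$ correspondence from \cite{Ibarra1978} and emphasize only that (i)~the period-merging argument gives effectiveness of $Q = Q_1 + Q_2$ as a semilinear set, and (ii)~$\psi^{-1}$ of an effectively given semilinear set is an effectively constructible $\DCM$ language, which together yield the lemma.
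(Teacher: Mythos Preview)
Your proposal is correct, but it takes a different route from the paper's proof for the first stage. You establish that $Q_1+Q_2$ is semilinear by a direct combinatorial argument: write each $Q_i$ as a finite union of linear sets and observe that $L_i + L_j'$ is linear with constant $\vec{c_i}+\vec{d_j}$ and period set the union of the two period lists. The paper instead argues via machines: it first uses the fact that every $\COMSLIP$ language is in $\DCM$ to obtain $\DCM$s $M_{Q_1}, M_{Q_2}$ for $\psi^{-1}(Q_1), \psi^{-1}(Q_2)$, then builds an $\NCM$ $M$ that on input $a_1^{k_1}\cdots a_m^{k_m}$ nondeterministically guesses a split $k_i = r_i + s_i$ into $2m$ counters and simulates $M_{Q_1}$ on the $r$-counters followed by $M_{Q_2}$ on the $s$-counters; since every $\NCM$ language is semilinear, $\psi(L(M)) = Q_1+Q_2$ is semilinear. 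Your approach is more elementary and yields an explicit description of the linear pieces of $Q$; the paper's approach avoids the period-merging bookkeeping entirely but relies on the semilinearity of $\NCM$. For the second stage both proofs end up at the same place---invoking that $\psi^{-1}$ of an effectively given semilinear set is an effectively constructible $\DCM$ language (i.e., $\COMSLIP \subseteq \DCM$)---so your worry about hand-rolling a deterministic membership test is unnecessary: simply cite that inclusion, as the paper does.
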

\begin{proof}
It is known that every $\COMSLIP$ language is in $\DCM$ \cite{IMeDCM2015}.
Therefore, for $i = 1,2$, a $\DCM$ $M_{Q_i}$ can
be constructed
accepting $\psi^{-1}(Q_i)$.

Then construct an $\NCM$ $M$ to accept 
$\{a_1^{k_1} \cdots a_m^{k_m} ~|~k_1 = r_1+ s_1, \ldots, k_m = r_m +s_m,
a_1^{r_1} \cdots a_m^{r_m} \in \psi^{-1}(Q_1), 
a_1^{s_1} \cdots a_m^{s_m} \in \psi^{-1}(Q_2) \}$ as follows:
given input 
$w = a_1^{k_1} \cdots a_m^{k_m}$, $M$ reads the input
and nondeterministically guesses the decompositions
of the $k_i$'s into $r_i$'s and $s_i$'s, and stores them in $2m$ counters
which we call $c_1, \ldots, c_m, d_1, \ldots, d_m$ (they store the numbers $r_1, \ldots, r_m,
s_1, \ldots, s_m$).  Then $M$ simulates the computation of
$M_{Q_1}$ on input $a_1^{r_1} \cdots a_m^{r_m}$
(by decrementing the counters $c_1, \ldots, c_m$ which
have values $r_1, \ldots, r_m$, corresponding
to reading an input letter of input $M_{Q_1}$) and if $M_{Q_1}$
accepts, $M$ then simulates the computation of
$M_{Q_2}$ on input $a_1^{s_1} \cdots a_m^{s_m}$
using counters $d_1, \ldots, d_m$ similarly.  Then $M$ accepts if
$M_{Q_2}$ accepts. Since $\NCM$ only accepts semilinear languages, 
it follows that there is a semilinear set $Q$ such that $\psi(L(M)) = Q$, and hence $\psi^{-1}(Q)$ 
can be accepted by a $\DCM$.
\qed
\end{proof}

The next result follows from the definition 
of commutative semilinear languages and the previous lemma.

\begin{lemma} \label{lemma2}
Let $Q_1, Q_2 \subseteq \mathbb{N}^m$
be semilinear sets, and $Q = Q_1 + Q_2$.
Then $\psi^{-1}(Q) = \psi^{-1}(Q_1) \shuffle \psi^{-1}(Q_2)$.
Moreover, $Q$ can be effectively constructed
from $Q_1$ and $Q_2$.
\end{lemma}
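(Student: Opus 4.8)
The plan is to prove the set equality $\psi^{-1}(Q) = \psi^{-1}(Q_1) \shuffle \psi^{-1}(Q_2)$ by establishing both inclusions directly from the definitions, and then observe that the effective constructibility of $Q$ is exactly Lemma~\ref{lemma1}. Since everything here is about commutative (Parikh) sets, the shuffle operation on the right-hand side collapses to an additive statement on Parikh vectors, so there are no hard combinatorial obstacles; the proof is essentially an unwinding of definitions.

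For the inclusion $\psi^{-1}(Q_1) \shuffle \psi^{-1}(Q_2) \subseteq \psi^{-1}(Q)$, I would take an arbitrary $w$ in the shuffle. By definition of $\shuffle$, there exist $w_1 \in \psi^{-1}(Q_1)$ and $w_2 \in \psi^{-1}(Q_2)$ with $w$ obtained by interleaving $w_1$ and $w_2$; hence $\psi(w) = \psi(w_1) + \psi(w_2)$ (interleaving preserves, for each letter, the total count). Since $\psi(w_1) \in Q_1$ and $\psi(w_2) \in Q_2$, we get $\psi(w) \in Q_1 + Q_2 = Q$, so $w \in \psi^{-1}(Q)$.

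For the reverse inclusion $\psi^{-1}(Q) \subseteq \psi^{-1}(Q_1) \shuffle \psi^{-1}(Q_2)$, take $w$ with $\psi(w) \in Q = Q_1 + Q_2$. Then $\psi(w) = \vec{p} + \vec{q}$ for some $\vec{p} \in Q_1$, $\vec{q} \in Q_2$. Pick any words $w_1 \in \psi^{-1}(\vec{p})$ and $w_2 \in \psi^{-1}(\vec{q})$ (these are nonempty since $\vec{p},\vec{q} \in \mathbb{N}_0^m$); for concreteness one may take $w_1 = a_1^{p_1}\cdots a_m^{p_m}$ and $w_2 = a_1^{q_1}\cdots a_m^{q_m}$. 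Because $\psi(w) = \psi(w_1 w_2)$, the word $w$ is a rearrangement of the letters of $w_1 w_2$, and any rearrangement of $w_1 w_2$ can be realized as an interleaving of $w_1$ and $w_2$ (since each has exactly the right multiset of letters); thus $w \in w_1 \shuffle w_2 \subseteq \psi^{-1}(Q_1) \shuffle \psi^{-1}(Q_2)$. The one point deserving a sentence of care is this last claim — that any word with $\psi(w) = \psi(w_1) + \psi(w_2)$ lies in $w_1 \shuffle w_2$ — which follows by a straightforward greedy/inductive argument on $|w|$: reading $w$ left to right, each letter can be attributed to whichever of the two remaining suffixes of $w_1$, $w_2$ still needs it.

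Finally, the ``moreover'' clause is immediate: Lemma~\ref{lemma1} gives an effective construction of a semilinear set $Q$ with $Q = Q_1 + Q_2$ from $Q_1$ and $Q_2$. The only mild subtlety to keep in mind — not really an obstacle — is the boundary behavior when one of the sets contains the zero vector or when $w = \lambda$; these are handled by the conventions that $\psi^{-1}(\vec{0}) = \{\lambda\}$ and $\lambda \shuffle x = \{x\}$, so the argument goes through unchanged.
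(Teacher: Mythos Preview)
Your forward inclusion is fine, but the reverse inclusion contains a genuine error. You fix $w_1 = a_1^{p_1}\cdots a_m^{p_m}$ and $w_2 = a_1^{q_1}\cdots a_m^{q_m}$ in advance and then assert that every $w$ with $\psi(w) = \psi(w_1) + \psi(w_2)$ lies in $w_1 \shuffle w_2$. This is false: over $\Sigma = \{a,b\}$ take $w_1 = w_2 = ab$ and $w = bbaa$. Then $\psi(w) = (2,2) = \psi(w_1) + \psi(w_2)$, yet $ab \shuffle ab = \{aabb, abab\}$, so $bbaa \notin w_1 \shuffle w_2$. Your greedy justification fails at the very first letter of $w$: both remaining suffixes begin with $a$, so neither can supply a $b$ next while respecting its internal order. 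The claim ``any rearrangement of $w_1 w_2$ is an interleaving of $w_1$ and $w_2$'' is simply not true for fixed $w_1, w_2$.

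The repair is to use the one feature of $\psi^{-1}(Q_1)$ and $\psi^{-1}(Q_2)$ you have not yet exploited: they are \emph{commutative}, so you may choose $w_1$ and $w_2$ \emph{depending on} $w$. Given $w$ with $\psi(w) = \vec{p} + \vec{q}$, scan $w$ left to right and assign each occurrence of $a_i$ to the first word if that word has not yet received $p_i$ copies of $a_i$, and to the second word otherwise. The two resulting subsequences $w_1, w_2$ then satisfy $\psi(w_1) = \vec{p} \in Q_1$, $\psi(w_2) = \vec{q} \in Q_2$, and $w \in w_1 \shuffle w_2$ by construction, which is what you need. With this change your argument is complete; the paper itself gives only the one-line remark that the identity ``follows from the definition of commutative semilinear languages and the previous lemma,'' so your unpacking of the definitions is exactly the intended approach once this step is fixed.
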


We can now prove:
\begin{proposition}
Let $\Sigma = \{a_1, \ldots, a_m\}$, $m \ge 1$.
It is decidable, given an $\NPCM$ $M$ accepting
a language $L \subseteq \Sigma^*$, and $\COMSLIP$ languages 
$L_1,L_2 \subseteq \Sigma^*$ (effectively semilinear),
whether $L \subseteq L_1 \shuffle L_2$. Further,
for $L, L_1, L_2 \in \NPCM$, it is decidable whether
$L \subseteq \comm(L_1) \shuffle \comm(L_2)$.
\end{proposition}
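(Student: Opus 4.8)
The plan is to reduce both claims to the decidability of containment $L \subseteq L'$ where $L$ is an $\NPCM$ language and $L'$ is a $\DCM$ language; this last problem is decidable because $\DCM$ is closed under complement \cite{Ibarra1978}, so $L \subseteq L'$ iff $L \cap \overline{L'} = \emptyset$, and emptiness is decidable for $\NPCM$ \cite{Ibarra1978}. So the whole task is to show that, under the stated hypotheses, the right-hand side $L_1 \shuffle L_2$ (resp. $\comm(L_1) \shuffle \comm(L_2)$) is effectively a $\DCM$ language.

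For the first statement, $L_1, L_2$ are $\COMSLIP$ languages given effectively by semilinear sets $Q_1 = \psi(L_1)$ and $Q_2 = \psi(L_2)$. Since $L_1$ and $L_2$ are commutative, $L_i = \psi^{-1}(Q_i)$. By Lemma \ref{lemma2}, $L_1 \shuffle L_2 = \psi^{-1}(Q_1) \shuffle \psi^{-1}(Q_2) = \psi^{-1}(Q)$ where $Q = Q_1 + Q_2$ is an effectively constructible semilinear set, and by Lemma \ref{lemma1} we can build a $\DCM$ $M_Q$ accepting $\psi^{-1}(Q) = L_1 \shuffle L_2$. Now apply the reduction of the first paragraph with $M$ the given $\NPCM$ and $M_Q$ the $\DCM$: decidability follows.

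For the second statement, $L_1, L_2 \in \NPCM$ are arbitrary. The key point is that the Parikh map of an $\NPCM$ language is an effectively computable semilinear set \cite{Ibarra1978}: so we can compute semilinear sets $Q_i = \psi(L_i)$ for $i = 1, 2$. Then $\comm(L_i) = \psi^{-1}(\psi(L_i)) = \psi^{-1}(Q_i)$, which is a commutative language given by the semilinear set $Q_i$, i.e. a $\COMSLIP$ language. Hence $\comm(L_1) \shuffle \comm(L_2)$ is exactly the situation handled in the first statement, and the same argument (build $Q = Q_1 + Q_2$, get a $\DCM$ $M_Q$ for it via Lemmas \ref{lemma1} and \ref{lemma2}, then test $L(M) \cap \overline{L(M_Q)} = \emptyset$) gives decidability.

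The only place that requires care — and the main obstacle — is the identity $\comm(L_1) \shuffle \comm(L_2) = \psi^{-1}(\psi(L_1)) \shuffle \psi^{-1}(\psi(L_2)) = \psi^{-1}(Q_1 + Q_2)$, i.e. verifying that shuffle interacts correctly with commutative closure. This is handled precisely by Lemma \ref{lemma2}: for commutative argument languages, the shuffle depends only on the Parikh images and equals $\psi^{-1}$ of their sum. The direction $\psi^{-1}(Q_1) \shuffle \psi^{-1}(Q_2) \subseteq \psi^{-1}(Q_1+Q_2)$ is immediate since shuffling adds Parikh vectors; the reverse inclusion uses that $\psi^{-1}(Q_i)$ already contains every word with a given Parikh vector, so any word of Parikh vector $v_1 + v_2$ (with $v_i \in Q_i$) can be split into a prefix and suffix of Parikh vectors $v_1$ and $v_2$ — a trivial shuffle — both of which lie in the respective commutative languages. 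Everything else is bookkeeping over the already-established closure and decidability properties of $\NPCM$ and $\DCM$.
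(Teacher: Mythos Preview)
Your proof is correct and follows essentially the same route as the paper: build a $\DCM$ for $L_1 \shuffle L_2$ via Lemmas~\ref{lemma1} and~\ref{lemma2}, complement it, intersect with the $\NPCM$ $M$, and test emptiness; then derive the second statement by observing that $\comm(L_i)$ is a $\COMSLIP$ language effectively obtained from the semilinear Parikh image of $L_i$. The extra paragraph you supply justifying Lemma~\ref{lemma2} is fine but not required, as the paper simply cites that lemma.
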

\begin{proof}
The second statement follows from the first since $\NPCM$ is effectively
semilinear and both $\comm(L_1)$ and $\comm(L_2)$ are in $\COMSLIP$.

Using the semilinear sets $Q_1 = \psi(L_1), Q_2 = \psi(L_2)$, then from
Lemma \ref{lemma1} and Lemma \ref{lemma2}, we
can construct a $\DCM$ $M_1$ accepting $L_1 \shuffle L_2$.
We can then construct a $\DCM$ $M_2$ accepting $\overline{L(M_1)}$,
since $\DCM$ is closed under complementation.
Then we construct an $\NPCM$ $M_3$ (simulating $M$
and $M_2$ in parallel) accepting $L(M) \cap L(M_2)$.
The result follows, since
$L(M) \subseteq L_1 \shuffle L_2$
if and only if $L(M_3)= \emptyset$, which is decidable,
since the emptiness problem for $\NPCM$s is decidable \cite{Ibarra1978}.
\qed
\end{proof}
%{\bf Remark 1.}
This result generalizes from $\NPCM$ to other effectively semilinear language families, such as the nondeterministic versions of the semilinear automata models from \cite{HIKS}. 
%For even the nondeterministic
%versions of those machines accepting a family of languages $\cal{L}$, if we take
%$L,L_1, L_2 \in \cal{L}$, then applying the proposition above, it is
%possible to decide whether $L \subseteq \comm(L_1) \shuffle \comm(L_2)$.

The reverse inclusion of the above proposition is not true however, since if $L_1 = (a+b)^*$ and  $L_2 = \{ \lambda \}$ (trivially commutative semilinear languages), then it is undecidable, given an $\NCM(1,1)$ $M$
with input alphabet $\{a,b\}$,
whether $L_1 \shuffle L_2  \subseteq  L(M)$. This is because it is undecidable whether an $\NCM(1,1)$ machine is equal to $\{a,b\}^*$.

\section{Other Decision and Closure Properties Involving Shuffle}
\label{sec:closure}

The following proposition follows from
the proof of Theorem 6 in \cite{BordihnHolzerKutrib}.
\begin{proposition}
It is undecidable, given two languages
accepted by 1-reversal-bounded $\DPDA$s (resp., $\DCA$s),
whether their shuffle is
accepted by a 1-reversal-bounded $\DPDA$ (resp., $\DCA$).
\end{proposition}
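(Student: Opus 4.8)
The plan is to re-use, essentially verbatim, the reduction behind the proof of Theorem~6 of \cite{BordihnHolzerKutrib}, which shows that it is undecidable (in fact not semi-decidable) whether the shuffle of two deterministic context-free languages is deterministic context-free. That reduction takes an instance $I$ of an undecidable problem and effectively builds two languages $L_1(I)$ and $L_2(I)$, each accepted by a deterministic pushdown (one-counter) machine, so that $L_1(I)\shuffle L_2(I)$ is deterministic context-free precisely when $I$ is a negative instance; moreover, in the negative case $L_1(I)\shuffle L_2(I)$ is actually regular, while in the positive case it is not context-free at all. I would keep this reduction unchanged and only re-read which complexity classes the objects it produces fall into.

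The reason the reduction transfers to the 1-reversal-bounded classes is two trivial inclusions combined with the two structural features just highlighted. Every language accepted by a 1-reversal-bounded $\DPDA$ --- and a fortiori by a 1-reversal-bounded $\DCA$ --- is deterministic context-free, hence context-free; so on a positive instance $L_1(I)\shuffle L_2(I)$, being non-context-free, is accepted by no such machine. Conversely, on a negative instance $L_1(I)\shuffle L_2(I)$ is regular, hence accepted by a $\DFA$, which is a $\DPDA$ (and a $\DCA$) that never touches its pushdown/counter and therefore makes zero reversals, in particular at most one. Thus ``$L_1(I)\shuffle L_2(I)$ is accepted by a 1-reversal-bounded $\DPDA$'' holds if and only if $I$ is negative, and the same equivalence holds with $\DCA$ replacing $\DPDA$ since the $L_i(I)$ are one-counter languages; as $I$ ranges over an undecidable (indeed, not co-semi-decidable) problem, both stated problems are undecidable.

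The argument rests on two facts about the construction that I would verify. First, that each of $L_1(I), L_2(I)$ genuinely lies in the 1-reversal-bounded deterministic pushdown family --- and in the one-counter family for the $\DCA$ version: one checks that the accepting machine works by a single push phase that records a block of the input, followed by a single pop phase that matches the rest of the input against that block (allowing the bounded local edits dictated by the source instance), so the pushdown turns around exactly once, and that a lone counter suffices once the blocks are unary-encoded. Second, that $L_1(I)\shuffle L_2(I)$ is regular on negative instances (in the usual forms of the construction it is even finite, empty, or all of $\Sigma^*$) and non-context-free on positive ones --- the latter being already shown in \cite{BordihnHolzerKutrib} by a pumping-style (Ogden/interchange-lemma) argument that I would cite. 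The main obstacle I expect is the first fact: making sure the published encodings really keep both components 1-reversal-bounded deterministic (and one-counter), or, if the published encoding uses an unrestricted pushdown only for convenience, re-coding the computation (or correspondence) data so that each machine does one push phase then one pop phase, all without disturbing the regular-versus-non-context-free split of the shuffle.
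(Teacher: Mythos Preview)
Your proposal is correct and matches the paper's approach exactly: the paper does not give its own proof but simply states that the proposition follows from the proof of Theorem~6 in \cite{BordihnHolzerKutrib}, and you have laid out precisely the verification that this citation entails. Your identification of the two points to check --- that the constructed languages $L_1(I), L_2(I)$ already lie in the 1-reversal-bounded $\DPDA$ (respectively $\DCA$) class, and that the shuffle is regular on negative instances versus non-context-free on positive ones --- is exactly what ``follows from the proof'' means here.
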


We can show the undecidability of a related problem:

\begin{proposition}
It is undecidable, given two 1-reversal-bounded $\DPDA$s $M_1, M_2$
(resp., $\DCA$s) and a 2-state $\DFA$ $M$,
whether $L(M) \cap  (L(M_1) \shuffle L(M_2)) = \emptyset$.
\end{proposition}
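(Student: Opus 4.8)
The plan is to imitate the construction underlying Theorem~6 of \cite{BordihnHolzerKutrib}, i.e.\ the very argument invoked for the preceding proposition. Starting from an instance $P$ of the Post Correspondence Problem, I would build two languages $L(M_1),L(M_2)$ whose shuffle decomposes as $L(M_1)\shuffle L(M_2)=R\cup S$, where $R$ is a fixed regular set of ``control words'', $S$ is the set of words that actually encode a solution of $P$, the two parts are disjoint, and $S\neq\emptyset$ if and only if $P$ has a solution. The reason to reduce from the Post Correspondence Problem rather than from Turing-machine halting is that each side of a $P$-encoding can be verified by a single ``push everything, then pop and match'' sweep, so $M_1$ and $M_2$ can be taken to be one-reversal-bounded $\DPDA$s, and---after coding $P$ over a one-letter pushdown alphabet---one-reversal-bounded $\DCA$s, as the statement requires. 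When $M_1,M_2$ are additionally made to recognize the interleaving skeleton of a candidate word, that check is carried out purely in the finite control, so it costs no stack reversal.

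The second step is to separate $R$ from $S$ with a two-state $\DFA$. In a construction of this kind $R$ and $S$ differ in a completely coarse way---for instance a word of $R$ still carries a separator marker that a genuine solution word in $S$ has already consumed. After padding the $R$-encodings with one extra occurrence of a fresh symbol $c$, one may assume every word of $R$ has an odd number of $c$'s and every word of $S$ an even number. Let $M$ be the two-state $\DFA$ in which $c$ flips the state, every other letter fixes it, and the start state is the unique accepting state, so $L(M)$ is the set of words with an even number of $c$'s, giving $L(M)\cap R=\emptyset$ and $S\subseteq L(M)$. Then $L(M)\cap\bigl(L(M_1)\shuffle L(M_2)\bigr)=L(M)\cap(R\cup S)=S$, which is empty exactly when $P$ has no solution; since that is undecidable, so is the emptiness question, and the $\DCA$ version follows verbatim from the $\DCA$ coding of $P$.

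The step I expect to be the real work is the first one: one must pin the shuffle down precisely, showing that every interleaving of the two encoding words lands in $R\cup S$ and nowhere else (no spurious interleavings), and that $S$ is \emph{exactly} the solution set---this is the heart of the Bordihn--Holzer--Kutrib argument and has to be re-derived in the one-reversal setting, checking in particular that the interleaving-skeleton test really can be pushed into the finite control without a second stack reversal. By contrast, holding $M$ to two states is routine once that analysis tells us $R$ and $S$ are distinguished by a single parity; if the natural distinguishing feature is not already a parity, a length-preserving relabelling that stamps $R$-words and $S$-words with one fresh bit reduces it to one, which is again a two-state test.
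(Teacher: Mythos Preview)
Your plan diverges substantially from the paper's argument, and I do not think it closes.

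The paper does \emph{not} go back to PCP or to the Bordihn--Holzer--Kutrib construction at all. It reduces from the (already known) undecidability of the \emph{disjointness} problem for 1-reversal-bounded $\DPDA$s (resp., $\DCA$s): given $L_1,L_2$ in that class over $\Sigma$, let $L_3$ be the copy of $L_2$ over a primed alphabet $\Sigma'$, and let $M$ be the tiny $\DFA$ accepting the synchronized-interleaving language $\{a_1a_1'\cdots a_ka_k'\mid a_i\in\Sigma\}$. Because $L_1\subseteq\Sigma^*$ and $L_3\subseteq(\Sigma')^*$, any word of $L(M)$ lying in $L_1\shuffle L_3$ must have its unprimed subword in $L_1$ and its primed subword in $L_3$, and the matching forced by $M$ makes these spell the \emph{same} string; hence $L(M)\cap(L_1\shuffle L_3)=\emptyset$ iff $L_1\cap L_2=\emptyset$. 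The whole proof is three lines once you quote the disjointness result.

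Your approach, by contrast, relies on a claim that has not been established and that I do not see how to establish as stated. You need $L(M_1)\shuffle L(M_2)=R\cup S$ with $R$ and $S$ separated by a single parity bit. But $R$ is, by your own description, ``everything in the shuffle that is not a solution encoding''---it is not a language you build directly, it is whatever is left over after the shuffle has been taken. The suggestion to ``pad the $R$-encodings with one extra $c$'' is circular: you cannot stamp a word in $L(M_1)\shuffle L(M_2)$ according to whether it encodes a PCP solution unless $M_1$ or $M_2$ already decides that, which is the undecidable question. The Bordihn--Holzer--Kutrib Theorem~6 you cite is aimed at a different target (whether the shuffle lies in a given class), and its construction does not hand you a clean regular/solution split with a parity separator; you would have to produce a new construction with that property, and the proposal does not do so. Even the last escape hatch you offer---``a length-preserving relabelling that stamps $R$-words and $S$-words with one fresh bit''---presupposes a decision procedure for membership in $S$.

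In short: the paper's route via disjointness plus disjoint (primed) alphabets sidesteps entirely the problem of controlling spurious interleavings, which is exactly the step your sketch leaves open. I would recommend abandoning the PCP detour and using the primed-alphabet reduction from disjointness.
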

\begin{proof}
Let $L_1$ and $L_2$ be accepted by 1-reversal-bounded $\DPDA$s
(resp., $\DCA$s) over input alphabet $\Sigma$.
Let $\Sigma' = \{a' ~|~ a \in \Sigma \}$.
Define the homomorphism $h$ by: $h(a) = a'$ for all $a \in \Sigma$.
Let $L_3 = \{ w'  ~|~ w' = h(w), w \in L_1 \}$.
(Thus, $L_3$ is a primed version of $L_1$.)
Let $L = \{a_1a'_1 \cdots a_k a'_k  ~|~  k \ge 0,
a_1, \ldots, a_k \in \Sigma \}$.
Clearly, $L$ is regular and can be accepted by a 2-state $\DFA$.
Now  $L \cap (L_1 \shuffle L_3) = \emptyset$
if and only if $L_1 \cap L_2 = \emptyset$, which is undecidable.
\qed
\end{proof}

Now, we consider the shuffle of languages from various families, and contrast closure results
with established results on commutative languages.

Clearly, if $L_1, L_2$ are in $\NCM$, then
$L_1 \shuffle L_2$ is also in $\NCM$.  Thus, $\NCM$ is closed
under shuffle.  However, we have:

\begin{proposition}\label{propnew1}
There are languages $L_1, L_2 \in \DCM(1,1)$
such that $L_1 \shuffle L_2$ is over a two letter alphabet, but not 
a context-free language ($\NPDA$).
\end{proposition}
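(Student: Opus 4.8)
The plan is to exhibit explicit one-reversal, one-counter deterministic machines $M_1, M_2$ whose languages, when shuffled, encode a non-context-free relation over $\{a,b\}^*$. The natural candidate is built around the classic non-context-free set $\{a^n b^n a^n \mid n \geq 0\}$ (or $\{w \mid |w|_a = |w|_b\}$ restricted by a square-counting condition). I would take $L_1 = \{a^n b^n \mid n \geq 0\}$, which is accepted by a $\DCM(1,1)$ (increment on the $a$'s, decrement on the $b$'s, accept iff the counter returns to zero), and $L_2 = \{a^m b^m \mid m \geq 0\}$ as well — or more likely a slightly asymmetric pair so that the shuffle is forced to align in only one way. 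The key is to choose $L_1, L_2$ and an auxiliary regular ``frame'' so that $L_1 \shuffle L_2$ intersected with a regular language is a known non-CFL; since $\NPDA$ (context-free) languages are closed under intersection with regular languages, this shows $L_1 \shuffle L_2 \notin \NPDA$.

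Concretely, I would try $L_1 = \{a^i b a^j \mid i = j \geq 0\} = \{a^n b a^n \mid n\ge 0\}$, which is in $\DCM(1,1)$ (count $a$'s up before the $b$, count down after, accept on zero and exactly one $b$ consumed), and $L_2 = \{b^i a b^j \mid i = j \ge 0\} = \{b^n a b^n \mid n \ge 0\}$, likewise in $\DCM(1,1)$. Both are over $\{a,b\}$, and their shuffle is over the two-letter alphabet as required. Then I would intersect $L_1 \shuffle L_2$ with the regular language $R = a^* b^* a b a^* b^*$ — or some such carefully designed shape — to force the interleaving pattern. The goal of the intersection is to isolate a language like $\{a^n b^m a b a^m b^n \mid n, m \ge 0\}$ or, with more care in the choice of $R$, $\{a^n b^n a b a^n b^n \mid n \ge 0\}$ or $\{a^n b^m a b a^n b^m\}$, which is non-context-free (it has four ``linked'' blocks, defeating the pumping lemma / Ogden's lemma). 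If the straightforward intersection still admits too much freedom in how the shuffle distributes letters, I would refine $R$ or add disjoint marker letters inside $L_1, L_2$ (primed copies, as in the proof of the preceding proposition) and then project — though projection does not preserve $\DCM$, so staying within $\{a,b\}$ and leaning entirely on intersection with a regular frame is cleaner.

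The main obstacle I anticipate is controlling the combinatorics of the shuffle: $L_1 \shuffle L_2$ contains many words where the counter-matched blocks of $M_1$ and $M_2$ are broken up and interleaved in uncontrolled ways, so the raw shuffle may well be context-free (or even larger) even though a cleverly chosen regular restriction of it is not. So the real work is designing $L_1$, $L_2$, and $R$ in tandem: $R$ must pin down the relative order of the $a$'s and $b$'s coming from the two operands tightly enough that the surviving words carry two independent unbounded counts that a $\NPDA$'s single stack cannot simultaneously verify, while $L_1, L_2$ must remain within $\DCM(1,1)$. Once the intersection $(L_1 \shuffle L_2) \cap R$ is shown to equal an explicit non-CFL, the conclusion is immediate: were $L_1 \shuffle L_2$ context-free, so would be its intersection with the regular language $R$, contradicting that explicit non-context-freeness. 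I would verify the final non-CFL claim with a standard Ogden's lemma argument on the four coupled blocks.
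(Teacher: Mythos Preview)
Your proposal is correct and follows essentially the same strategy as the paper: pick two explicit $\DCM(1,1)$ languages over $\{a,b\}$, intersect their shuffle with a carefully chosen regular ``frame'' $R$, and show the result is non-context-free via pumping; closure of $\NPDA$ under intersection with regular languages then finishes the argument.

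The only real difference is in the concrete witnesses. The paper takes the asymmetric pair $L_1 = \{a^n b a b^n a \mid n>0\}$ and $L_2 = \{b^m a^{m+1} \mid m>0\}$ together with $R = \{a^i b^j a^2 b^l a^p \mid i,l \ge 1,\ j,p \ge 2\}$; this asymmetry forces a \emph{unique} way to thread the two words through the frame, so the intersection is exactly $\{a^n b^{m+1} a^2 b^n a^{m+1} \mid n,m>0\}$, and the pumping argument is immediate. Your symmetric choice $L_1=\{a^nba^n\}$, $L_2=\{b^mab^m\}$ with $R = a^+b^+aba^+b^+$ also works, but the intersection is a finite union (e.g.\ one gets $p=r,\,q=s$ as intended, but also the ``off-by-two'' variants $p=r,\,q=s+2$ and $r=p+2,\,q=s$ from alternate placements of the single marker letters), so the pumping/Ogden step has a few more cases. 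You correctly anticipated exactly this obstacle; the paper's fix is simply to break the $a\leftrightarrow b$ symmetry so the frame admits only one decomposition.
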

\begin{proof}
Let $L_1 = \{a^n ba b^n a \mid n >0 \}$ and $L_2 = \{b^m a^{m+1} \mid m> 0 \}$.
If $L = L_1 \shuffle L_2$ is a context-free language, then
$L' = L \cap \{a^i b^j a^k b^l a^p \mid i,l \geq 1, j,p>1, k=2\} 
= \{a^n b^{m+1} a^2 b^n a^{m+1} \mid n, m  >0 \}$ is
also a context-free language. But it is easy to show, using the Pumping Lemma,
that $L'$ is not a context-free language.
\qed
\end{proof}
This contrasts the commutative case as $L_1\shuffle L_2$ is over a two letter alphabet and is semilinear, but
not a context-free language, but every two letter semilinear language that is commutative
is a context-free language.

It is known that the commutative closure of every $\NPCM$ (or effectively semilinear language family) is a $\DCM$ language \cite{IMeDCM2015}. By contrast, the shuffle
of two 1-reversal $\DPDA$s can be significantly more complex, and might not even be an $\NPCM$ language, and can
create non-semilinear languages.
\begin{proposition} \label{propnew2}
There are languages $L_1, L_2$ accepted by 1-reversal-bounded $\DPDA$s
(resp., $\DCA$s)
such that $L_1 \shuffle L_2$ is not in $\NPCM$.
\end{proposition}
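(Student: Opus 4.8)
I would prove the $\DPDA$ statement. (The parenthetical ``$\DCA$'' version appears to be a misstatement: a $1$-reversal-bounded $\DCA$ accepts an $\NCM$ language, $\NCM$ is closed under shuffle, and $\NCM\subseteq\NPCM$, so the shuffle of two such languages always lies in $\NPCM$; only the $\DPDA$ claim can be intended.) The plan is to pick $L_1,L_2$, each accepted by a $1$-reversal-bounded $\DPDA$, so that $L_1\shuffle L_2$, after intersection with a fixed regular language, is a language carrying \emph{two crossing palindrome-type constraints at once}; since a single pushdown is last-in-first-out and reversal-bounded counters carry only a bounded amount of information, no $\NPCM$ device can verify two crossing constraints simultaneously. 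Concretely, over $\{a,b,e\}$ take $L_1=\{u\,e\,u^R\mid u\in\{a,b\}^*\}$, and over $\{c,d,f\}$ take $L_2=\{v\,f\,v^R\mid v\in\{c,d\}^*\}$, where $e,f$ are fresh separator letters. Each is accepted by a $\DPDA$ that pushes the first half symbol by symbol, changes state on reading its separator, and then pops while matching the remaining input against the stack; the stack height rises monotonically and then falls monotonically, so each machine is $1$-reversal-bounded, and each is deterministic because the input dictates every move.

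The first step is to identify $L_1\shuffle L_2$ inside a convenient regular frame. Put $R=(a+b)^*\,e\,(c+d)^*\,f\,(a+b)^*\,(c+d)^*$. I claim
$$(L_1\shuffle L_2)\cap R \;=\; L' \;:=\; \{\,p\,e\,q\,f\,p^Rq^R \mid p\in\{a,b\}^*,\ q\in\{c,d\}^*\,\}.$$
Inclusion $\supseteq$ is immediate, since $p\,e\,q\,f\,p^Rq^R$ is the interleaving of $p\,e\,p^R\in L_1$ with $q\,f\,q^R\in L_2$ that emits $p$, then $e$, then $q$, then $f$, then $p^R$, then $q^R$. For $\subseteq$, a word $w\in R$ has the form $p'\,e\,q'\,f\,r's'$ with $p',r'\in\{a,b\}^*$ and $q',s'\in\{c,d\}^*$; in any decomposition of $w$ as a shuffle the $L_1$-factor is exactly the subsequence consisting of the $\{a,b\}$-letters together with the unique $e$, hence it equals $p'\,e\,r'$ and $L_1$-membership forces $r'=p'^R$, and symmetrically the $L_2$-factor is $q'\,f\,s'$ with $s'=q'^R$. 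Since $\NPCM$ is effectively closed under intersection with regular languages, it suffices to show $L'\notin\NPCM$. The essential feature of $L'$ is that its two matched pairs \emph{cross}: the span from $p$ to $p^R$ and the span from $q$ to $q^R$ interleave rather than nest.

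The lower bound $L'\notin\NPCM$ is the step I expect to be the main obstacle. The intuition is clear: a word of $L'$ demands two string-matchings, $p$ against $p^R$ and $q$ against $q^R$, whose scopes cross, so neither the single pushdown (which is last-in-first-out) nor the reversal-bounded counters (which hold only $O(\log n)$ bits after $O(n)$ steps) can carry out both --- reading $p$ before $q$ forces any record of $q$ to sit above any record of $p$ on the pushdown, so exposing the latter to check $p^R$ destroys the former before $q^R$ is seen. Turning this into a proof needs a careful cut-and-paste / pumping argument on $\NPCM$ computations rather than a mere counting argument (the pushdown has linear capacity, so information theory alone is not enough); one fixes $p$, observes that the configuration of an accepting run reached just before $q^R$ must separate the $2^n$ choices of $q$ --- otherwise splicing two accepting runs would accept $p\,e\,q_1\,f\,p^Rq_2^R\notin L'$ for some $q_1\neq q_2$ --- and then argues that such a configuration cannot in fact retain more than $O(\log n)$ bits about $q$. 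I note finally that no shortcut via non-semilinearity is available: $\psi(L_1\shuffle L_2)=\psi(L_1)+\psi(L_2)$ and $\psi(L')$ are both semilinear, so the obstruction to $\NPCM$-membership is genuinely structural.
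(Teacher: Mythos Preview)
Your proposal has a genuine gap at the crucial step: you do not prove that $L'=\{p\,e\,q\,f\,p^Rq^R\mid p\in\{a,b\}^*,\ q\in\{c,d\}^*\}$ is outside $\NPCM$. You yourself flag this as ``the main obstacle'' and offer only an intuition (crossing dependencies, last-in-first-out, bounded counter information) together with a sketch of a cut-and-paste argument that you do not carry out. You correctly note that information-capacity counting cannot work, since the pushdown has linear capacity; but then nothing replaces it. Establishing $L'\notin\NPCM$ rigorously would require a nontrivial interchange/pumping lemma for $\NPCM$, and that is a substantial piece of work in its own right --- not a routine closing step. As written, the proof is incomplete precisely at the point that carries all the weight.

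The paper sidesteps this difficulty entirely by engineering the examples so that \emph{non-semilinearity} does the job. It takes $L_1,L_2$ over the same alphabet whose intersection $L_1\cap L_2$ has non-semilinear Parikh image (essentially encoding $a^1\#a^2\#\cdots\#a^{2n}$), then passes to a primed copy $L_2'$ on a disjoint alphabet and argues: if $L_1\shuffle L_2'$ were in $\NPCM$, then intersecting with the regular language $\{ss'\mid s\in\Sigma\}^*$ and erasing the primed symbols (both $\NPCM$-preserving operations) would yield $L_1\cap L_2\in\NPCM$, contradicting the semilinearity of all $\NPCM$ languages. Your own final remark --- that your $\psi(L')$ is semilinear --- is exactly why your choice of $L_1,L_2$ forces you into the hard structural argument; the paper's choice makes the lower bound a one-line appeal to a known theorem.

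On your parenthetical about $\DCA$s: you have misread the statement. In this paper $\DCA$ denotes a deterministic one-counter automaton \emph{with no reversal bound} (see the preliminaries); the modifier ``1-reversal-bounded'' attaches only to $\DPDA$. The paper's Part~2 indeed uses $\DCA$s whose counter reverses unboundedly many times, so your objection that a $1$-reversal-bounded counter would land in $\NCM$ does not apply.
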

\begin{proof}
For Part 1, let
\begin{eqnarray*}
L_1 & = & \{a^{i_1} \# a^{i_3} \# a^{i_5} \# \cdots \# a^{i_{2n-1}} \$
a^{i_{2n}} \#  \cdots \# a^{i_6} \# a^{i_4} \# a^{i_2} \mid
\begin{array}[t]{l} 
n \ge 3, i_1 = 1,  i_{j+1} =  \\ i_j +1  \mbox{~for odd~} j \}, \end{array} \\
L_2 & = & \{a^{i_1} \# a^{i_3} \# a^{i_5} \# \cdots \# a^{i_{2n-1}} \$
a^{i_{2n}} \#  \cdots \# a^{i_6} \# a^{i_4} \# a^{i_2} \mid
\begin{array}[t]{l} 
n \ge 3,  i_1 = 1, i_{j+1} = \\  i_j +1 \mbox{~for even~} j \}. \end{array}
\end{eqnarray*}
Clearly, $L_1, L_2$ can be accepted by 1-reversal $\DPDA$s.
Let $L = L_1 \cap L_2$.  Then the Parikh
map of $L$ is not semilinear
since it has the same Parikh map as the language
$\{ a^1 \# a^2 \# a^3\# \cdots \$ a^{2n} \mid n \geq 3  \}$.
Hence, $L$ cannot be accepted by an $\NPCM$,
since it is known that the Parikh map of any $\NPCM$ language
is semilinear \cite{Ibarra1978}.

Now let $L_1, L_2$ be over alphabet $\Sigma = \{a, \#, \$ \}$.
Let $\Sigma' = \{a', \#', \$' \}$  be the ``primed''
copy of  $\Sigma$.  For any $x \in \Sigma^*$,
let $x'$ be primed version of $x$ (i.e., 
the symbols in $x$ are replaced by their primed copies).
Let $L_2' = \{x' ~|~ x \in L_2\}$.
Clearly, we can construct a 1-reversal-bounded $\DPDA$ accepting $L_2'$.

Suppose $L_3 = L_1 \shuffle L_2'$ can be accepted by
an $\NPCM$.   We can also then construct an $\NPCM$
accepting $L_4 = L_3 \cap  \{ss' ~|~ s \in \Sigma \}^*$.
Let $h$ be a homomorphism that maps the primed symbols
to $\lambda$ (i.e., they are erased) and leaves the un-primed
symbols unchanged.
Since $\NPCM$ languages are closed under homomorphism,
$h(L_4)$ can also be accepted by an $\NPCM$.
This gives a contradiction, since
$h(L_4) = L = L_1 \cap L_2$ cannot be accepted by an $\NPCM$. 

The proof for Part 2 is similar, except that we modify
the languages  $L_1, L_2$, as follows:
\begin{eqnarray*}
L_1 & = & \{a^{i_1} \# a^{i_2} \# a^{i_3} \# a^{i_4} \# a^{i_5} \# a^{i_6} \#
\cdots \# a^{i_{2n-1}} \# a^{i_{2n}} \mid \begin{array}[t]{l} 
n \ge 3, i_1 = 1, \\ i_{j+1} = i_j +1 \mbox{~for odd~} j \}, \end{array}\\
L_2 & = & \{a^{i_1} \# a^{i_2} \# a^{i_3} \# a^{i_4} \# a^{i_5} \# a^{i_6} \#
\cdots \# a^{i_{2n-1}} \# a^{i_{2n}} \mid \begin{array}[t]{l}
n \ge 3, i_1 = 1, \\ i_{j+1} = i_j +1 \mbox{~for even~} j \}.\end{array}
\end{eqnarray*}
Clearly, $L_1,L_2$ can be accepted by $\DCA$s.  The rest of the proof
is similar to that of Part 1.
\qed
\end{proof}

If in the statement of Proposition \ref{propnew2}, one of $L_1, L_2$
is accepted by an $\NCM$, then the proposition is no longer true, since it
can be easily verified that $L_1 \shuffle L_2$ could then be accepted 
by an $\NPCM$.  In contrast, for $\NCM$, 
there are languages accepted by deterministic counter automata $L_1$,
such that $L_1 \shuffle \{\lambda\}$ is not in $\NCM$.

We need the following lemma.

\begin{lemma} \label{lemnew1}
$L = \{a^n b^n ~|~ n > 0 \}$ is in $\DCM(1,1)$,
but $L^+$  (and, hence, also $L^*$) is not in $\NCM$.
\end{lemma}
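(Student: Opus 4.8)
The plan is to prove the two parts of Lemma~\ref{lemnew1} separately, since the first is a routine construction and the second requires a pumping argument against $\NCM$ acceptance.

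For the first part, I would exhibit a $\DCM(1,1)$ machine for $L = \{a^nb^n \mid n>0\}$ directly: on reading the block of $a$'s, increment the single counter once per letter; on the first $b$, switch to decrementing mode; decrement once per $b$; accept at the end-marker iff the counter is zero and at least one $a$ and one $b$ were read. This machine is deterministic and its counter makes exactly one reversal, so $L \in \DCM(1,1)$.

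The main work is the second part: showing $L^+ = \{a^{n_1}b^{n_1}a^{n_2}b^{n_2}\cdots a^{n_k}b^{n_k} \mid k \ge 1,\ n_i > 0\}$ is not in $\NCM$. The plan is to use the known pumping/counting properties of reversal-bounded counter machines. Suppose for contradiction $M$ is an $\NCM(t,s)$ accepting $L^+$; let $c$ be its number of states. A clean route is to intersect $L^+$ with the regular language $a^+b^+a^+b^+$, which (since $\NCM$ is closed under intersection with regular sets) would give an $\NCM$ for $\{a^{n_1}b^{n_1}a^{n_2}b^{n_2} \mid n_1,n_2>0\}$; I would then apply a pumping lemma for $\NCM$ to a word with $n_1$ and $n_2$ both very large. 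The key fact is that any sufficiently long accepted word of an $\NCM$ admits a decomposition in which some factor can be pumped while the counters' behaviour is controlled; pumping inside the first $a$-block forces a corresponding failure to match the first $b$-block (or, if coordination is maintained via the counters, there are not enough independent reversal-bounded counters to simultaneously track the two independent equalities $|a\text{-block }1| = |b\text{-block }1|$ and $|a\text{-block }2| = |b\text{-block }2|$ across a word whose structure separates them). Alternatively, and perhaps more robustly, I would invoke the semilinearity of $\NCM$ languages: if $L^+$ were in $\NCM$ it would be semilinear, hence so would $L^+ \cap a^+b^+a^+b^+$, but the Parikh image of $\{a^{n_1}b^{n_1}a^{n_2}b^{n_2}\}$ over the ordered alphabet is $\{(n_1+n_2, n_1+n_2) \mid n_i > 0\}$ which \emph{is} semilinear, so semilinearity alone does not suffice and a genuine pumping argument on the word structure is needed.

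The hard part will be making the pumping argument precise: I must choose the decomposition of a long word in $L^+$ so that every pumped factor is forced to lie within a single $a$-run or $b$-run, and then argue that the reversal-bounded counters cannot enforce the equality constraint $n_i$ appears as both an exponent of $a$ and of $b$ when these two occurrences are separated by other runs. I expect to formalize this by a standard Ogden-style marking: mark all symbols in the first $b$-block of a word $a^N b^N a^N b^N$ with $N$ large relative to $c$ and the counter bounds, apply the $\NCM$ pumping lemma to extract a pumpable portion hitting only $b$'s in that block, pump it down, and observe the resulting word has unequal first $a$- and $b$-blocks yet is still accepted --- contradiction. Since $L^+ \subseteq L^*$ and $L^* \in \NCM$ would imply $L^+ = L^* \cap a^+(b^+a^+)^*b^+ \in \NCM$ (intersection with a regular language), the non-membership of $L^*$ follows as well.
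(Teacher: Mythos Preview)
Your first part and the overall framing are fine, but the heart of your argument for $L^+ \notin \NCM$ has a fatal gap: the intermediate language you propose to attack, $L^+ \cap a^+b^+a^+b^+ = \{a^{n_1}b^{n_1}a^{n_2}b^{n_2} \mid n_1,n_2>0\}$, \emph{is} in $\NCM$. A $\DCM(2,1)$ machine accepts it directly: increment counter~1 on the first $a$-run, decrement it on the first $b$-run and check for zero, then do the same with counter~2 on the second pair of runs. More generally, $L^+ \cap (a^+b^+)^k$ is in $\DCM(k,1)$ for every fixed $k$, so no pumping lemma for $\NCM$ can possibly exclude it. Your intuition that ``there are not enough independent reversal-bounded counters to track the two equalities'' is mistaken: an $\NCM$ may have any fixed number of counters, so any bounded number of equality constraints is easy. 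What makes $L^+$ hard is the \emph{unbounded} number of blocks, and intersecting with a bounded regular language $(a^+b^+)^k$ destroys exactly that difficulty. Consequently your pumping/Ogden plan on $a^Nb^Na^Nb^N$ cannot succeed, regardless of how the marking is chosen.

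The paper's proof takes a completely different route that preserves the unbounded block structure. Assuming $L^+ \in \NCM$, it uses the full-trio closure properties of $\NCM$ (homomorphism, inverse homomorphism, intersection with regular languages, concatenation with regular languages) to build $\NCM$s for
\[
L_1 = \{a^nb^{n+1}\mid n>0\}^+ a^+ \quad\text{and}\quad L_2 = a\,\{b^na^{n+1}\mid n>0\}^+,
\]
each obtainable from $L^+$ by shifting one letter across each block boundary via a finite transduction. Since $\NCM$ is closed under intersection, $L_1 \cap L_2 = \{a^1b^2a^3b^4\cdots b^{n+1}a^{n+2}\mid n>0\}$ would also be in $\NCM$. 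But the Parikh image of this language is not semilinear (the total length grows quadratically in the number of blocks), contradicting the semilinearity of all $\NCM$ languages. The point is that the argument must exploit semilinearity, but only after manufacturing a language whose Parikh image actually fails to be semilinear; restricting to finitely many blocks, as you do, can never produce such a witness.
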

\begin{proof}
It is obvious that $L$ is in $\DCM(1,1)$.  Suppose $L^+$ can be
accepted by an $\NCM$  $M$.  Consider the following languages:

$L_1 = \{a^n b^{n+1} ~|~ n > 0 \}^+ \{a^m ~|~ m > 0 \}$,

$L_2 = \{a^1\}\{b^n a^{n+1} ~|~ n >0\}^+ $.

\noindent
Clearly, we can construct from $M$, $\NCM$s $M_1$ and $M_2$
accepting $L_1$ and $L_2$, respectively.
Since the family of  $\NCM$ languages is closed under intersection,
$L _3 = L_1 \cap L_2 = \{ a^1 b^2a^3 b^4  \cdots  a^n b^{n+1} a^{n+2} ~|~ n > 0 \}$
is also in $\NCM$.  The result follows, since
the Parikh map
of any $\NCM$ language is semilinear \cite{Ibarra1978}, but
the Parikh map of $L_3$ is not semilinear.
\qed
\end{proof}

\begin{proposition}
There is a language $L_1$ ($L^+$ in Lemma \ref{lemnew1}) accepted by a $\DCA$ such that
$L_1 \shuffle \{\lambda\}$ is not in $\NCM$.
\end{proposition}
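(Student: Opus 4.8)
The plan is to observe that shuffling with the empty word is the identity operation, so that $L_1 \shuffle \{\lambda\} = L_1$ for \emph{any} language $L_1$, and then to invoke Lemma \ref{lemnew1} directly. The only point that requires an argument is that $L^+$, where $L = \{a^n b^n \mid n > 0\}$, is accepted by a $\DCA$, i.e.\ a deterministic one-counter machine with no bound on the number of counter reversals.

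First I would take $L_1 = L^+$ and exhibit a $\DCA$ for it. The machine starts with its counter at $0$; while reading $a$'s it increments the counter by one per symbol, while reading $b$'s it decrements by one per symbol. Whenever the counter returns to $0$, the machine is at a legitimate block boundary: it may accept if the remaining input before the end-marker is empty, and upon reading a further $a$ it begins a new block. Any other configuration (a $b$ read with the counter at $0$, or an $a$ read in the middle of a block after some $b$'s, before the counter has returned to $0$) leads to a dead state. Since $L^+$ consists precisely of the words $a^{n_1}b^{n_1} a^{n_2}b^{n_2}\cdots a^{n_k}b^{n_k}$ with $k \ge 1$ and each $n_i > 0$, this deterministic machine accepts exactly $L^+$. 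It performs many counter reversals, so it is a $\DCA$, though not a $\DCM(1,r)$ for any fixed $r$.

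Next I would use the identity $L_1 \shuffle \{\lambda\} = L_1$, which holds because interleaving a word $w$ with the empty word yields only $w$ itself; hence $L_1 \shuffle \{\lambda\} = L^+$. Finally I would apply Lemma \ref{lemnew1}, which asserts that $L^+$ is not in $\NCM$, to conclude that $L_1 \shuffle \{\lambda\}$ is not in $\NCM$. I do not expect any genuine obstacle here: the only nontrivial ingredient is the explicit $\DCA$ for $L^+$ sketched above, and the rest is immediate from Lemma \ref{lemnew1} together with the triviality of shuffle with $\{\lambda\}$.
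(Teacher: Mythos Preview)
Your proposal is correct and matches the paper's approach exactly: the paper states the proposition without a separate proof, relying on the parenthetical identification of $L_1$ with $L^+$ from Lemma~\ref{lemnew1} and the remark immediately following the proposition that shuffle with $\{\lambda\}$ is ``just the identity operation.'' Your only addition is the explicit description of a $\DCA$ for $L^+$, which the paper leaves implicit.
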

Therefore, commutative closure creates much simpler languages
than just the identity operation.
%
%{\bf I'm not sure how to motivate this one from the perspective of
%commutative languages.}

\section{Conclusions and Open Problems}

We investigated the complexity and decidability of various decision problems involving the shuffle operation.  In particular,
we showed that the following three problems are $\NP$-complete
for a given $\NFA$ $M$, and two words $u$ and $v$:

-- Is  $L(M) \not\subseteq u\shuffle v$?

-- Is $u\shuffle v \not\subseteq L(M)$?

-- Is $L(M) \neq u \shuffle v$? 

\noindent
We showed that there is a polynomial-time algorithm to determine,
for $\NFA$s $M_1, M_2$, and a deterministic pushdown automaton $M_3$, whether
$L(M_1) \shuffle L(M_2) \subseteq L(M_3)$. The same is true when $M_1, M_2,M_3$
are one-way nondeterministic $l$-reversal-bounded $k$-counter machines, with $M_3$ being deterministic.

We also presented decidability and complexity results for testing
whether given languages $L_1, L_2$, and $L$ from various languages
families satisfy $L_1 \shuffle L_2 \subseteq L$.
%But it is undecidable, for fixed regular languages $a^*$ and $b^*$, 
%and a one-way nondeterministic 1-reversal-bounded 1-counter machine $M$, whether $a^* \shuffle b^* \subseteq L(M)$. Also, testing if 
%$(a+b)^* \shuffle \{\lambda\} \not\subseteq L(M)$, for $\NFA$s $M$ is $\PSPACE$-complete, and testing if $a^* \shuffle \{\lambda\} \not\subseteq L(M)$, for
%$\NFA$s $M$ is $\NP$-complete. In addition, it is $\NP$-complete to
%determine, for two finite languages $L_1,L_2$ and a nondeterministic pushdown automaton $M$, whether $L_1 \shuffle L_2 \not \subseteq L(M)$.

We obtained some closure properties of the shuffle operation
on languages.  In particular, we proved:

\begin{enumerate}
\item
There are languages $L_1, L_2 \in \DCM(1,1)$
such that $L_1 \shuffle L_2$ is over a two letter alphabet, but is not 
a context-free language ($\NPDA$).
\item
There are languages $L_1, L_2$ accepted by 1-reversal-bounded $\DPDA$s
(resp., $\DCA$s)
such that $L_1 \shuffle L_2$ is not in $\NPCM$.
\item
There is a language $L_1$ accepted by a $\DCA$ such that
$L_1 \shuffle \{\lambda\}$ is not in $\NCM$.
\end{enumerate}

\noindent
There are a number of remaining open problems including
the following:

\begin{enumerate}
\item
Is it undecidable, given a regular language $L$ and languages
$L_1, L_2$ accepted by $\DPDA$s (resp., $\DCA$s, $\DPCM$s), whether
$L \subseteq L_1 \shuffle L_2$?  In particular, we are 
interested in the simple case when $L$ and $L_1$ are
regular and $L_2$ is in $\DCM(1,1)$. 
\item
Same as item (1) above except that now the question is
whether $L = L_1 \shuffle L_2$.
\item What is the complexity of testing, given a $\DFA$ $M$, and
words $u,v$, whether $L(M) \neq u \shuffle v$?
\end{enumerate}

\bibliography{shuffle_refs}{}
\bibliographystyle{elsarticle-num}

\end{document}